\newcommand{\canout}{canonical outgoing }
\newcommand{\SP}{\ensuremath{\mathcal{S}}}
\newcommand{\Or}{\ensuremath{\mathcal{O}}}
\newcommand{\out}[1]{\mathrm{out\mbox{-}index}(#1)}
\newcommand{\ini}[1]{\mathrm{in\mbox{-}index}(#1)}
\newcommand{\ind}[1]{\mathrm{index}(#1)}
\newcommand{\lind}[1]{\mbox{$\mathrm{layer}$-$\mathrm{index}(#1)$}}
\newcommand{\C}{\ensuremath{\mathcal C}}
\newcommand{\LL}{\ensuremath{\mathcal L}}
\newcommand{\PP}{\ensuremath{\mathcal P}}
\begin{document}

% \maketitle

% \vfill
% 
% \begin{figure}[h]
% \centering \includegraphics[width=0.9\textwidth]{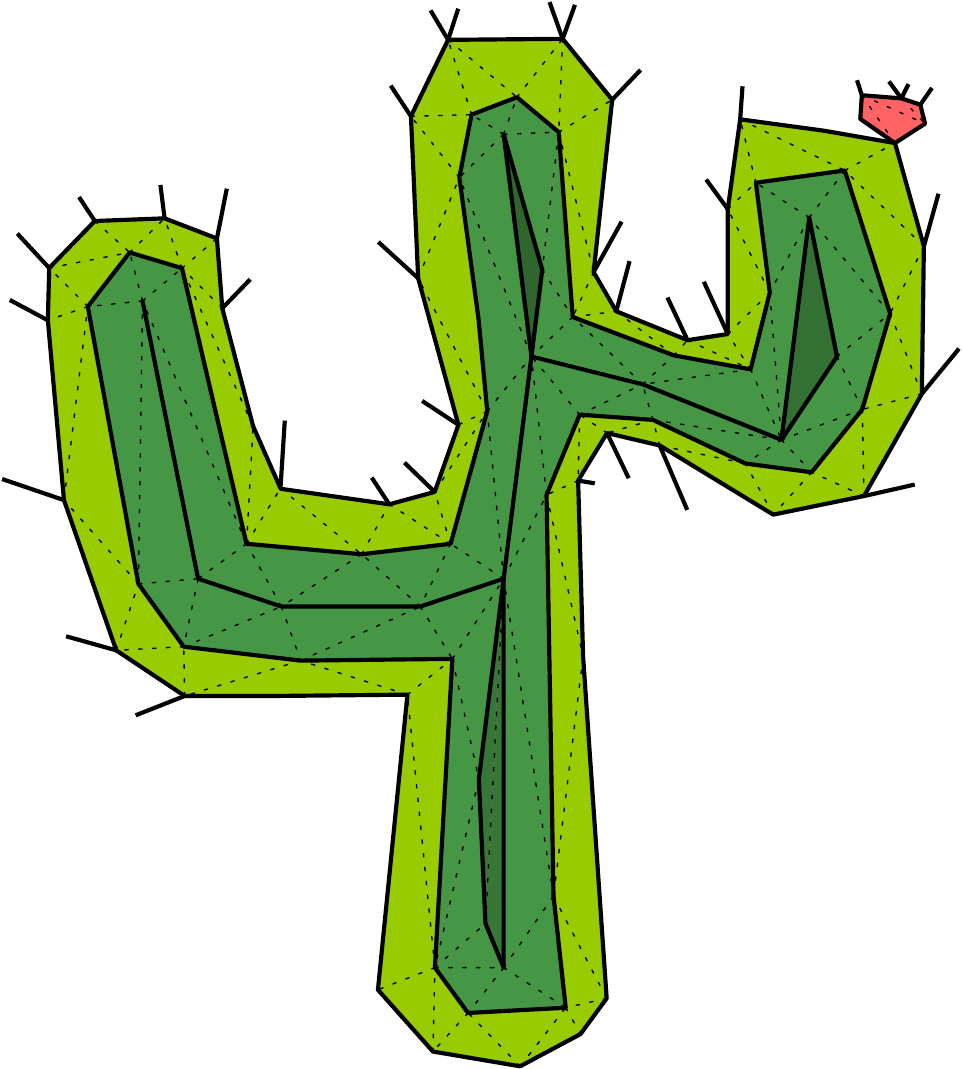}
% \end{figure}
% 
% \vfill
% 

\begin{titlepage}

% {\today}

% \begin{center}
%  \begin{minipage}[c]{0.4\textwidth}%
%   \begin{center}
%   \textsc{Dani\'{e}l Marx} 
%   \\ [0.1cm]
% %   {\small MTA SZTAKI}\\
% %   {\small Budapest, Hungary } \\
% %   {\small \texttt{...} \\[0.1cm] }
% %   {\tiny {Supported by ...} \\[0.4cm] 
% 
%   \par\end{center}%
% \end{minipage}
% \begin{minipage}[c]{0.4\textwidth}%
%   \begin{center}
%   \textsc{Tillmann Miltzow}\\ [0.1cm]
% %   {\small MTA SZTAKI \\ }
% %   {\small Budapest, Hungary \\ }
% %   {\small \texttt{t.miltzow@gmail.com} \\[0.1cm] }
% %   {\tiny{} \\[0.4cm] 
%   
%   
%   \par\end{center}%
% \end{minipage}
% \end{center}

{\centering \LARGE \noindent Peeling and Nibbling the Cactus: \\[0.2cm]  Subexponential-Time Algorithms for \\[0.2cm] Counting Triangulations and Related Problems.\\ }

% \vspace{1cm}

\vfill

\begin{figure}[htbp]
  \centering
  \includegraphics{TitleCactus2}
%   \caption{Peeling the Cactus.}
\end{figure}

\vfill

% \vspace{2cm}

\begin{center}
{\Large D\'{a}niel Marx  \quad Tillmann Miltzow }
\end{center}

\begin{center}
Institute for Computer Science and Control,\\
Hungarian Academy of Sciences (MTA SZTAKI)\\
\texttt{ dmarx@cs.bme.hu}, \texttt{t.miltzow@gmail.com}

\end{center}

% \vspace{0.5cm}
% 
% {
% \centering \emph{
% Mein kleiner gr\"uner Kaktus steht drau\ss en am Balkon,\\
% hollari, hollari, hollaro!\\ } }
% \begin{flushright} Comedian Harmonists \end{flushright}

\end{titlepage}

% \begin{figure}
%   \centering {\Large D\'{a}niel Marx \quad  Tillmann Miltzow}
% \end{figure}

\begin{abstract}
  Given a set of $n$ points $S$ in the plane, a triangulation $T$ of
  $S$ is a maximal set of non-crossing segments with endpoints in $S$.
  We present an algorithm that computes the number of triangulations
  on a given set of $n$ points in time $n^{(11+ o(1))\sqrt{n} }$,
  significantly improving the previous best running time of
  $O(2^n n^2)$ by Alvarez and Seidel [SoCG 2013].  Our main tool is
  identifying separators of size $O(\sqrt{n})$ of a triangulation in a
  canonical way. The definition of the separators are based on the
  decomposition of the triangulation into nested layers (``cactus
  graphs'').  Based on the above algorithm, we develop a simple and
  formal framework to count other non-crossing straight-line graphs in
  $n^{O(\sqrt{n})}$ time.  We demonstrate the usefulness of 
  the framework by applying
  it to counting non-crossing Hamilton cycles, spanning trees, perfect
  matchings, $3$-colorable triangulations, connected graphs, cycle
  decompositions, quadrangulations, $3$-regular graphs, and more.
\end{abstract}

\section{Introduction}

 Given a set of $n$ points in the plane, a triangulation $T$ of $S$ is defined to be a maximal set of non-crossing line segments with both endpoints in $S$. This set of segments together with the set $S$ defines a plane graph. It is easy to see that every bounded face of a triangulation $T$ is indeed a triangle. 
%  We assume $S$ to be in general position: no three points of $S$ lie on a common line.
 We assume that $S$ is in general position: no three points of $S$ are on a line.
Triangulations are one of the most studied concepts in discrete and computational geometry, studied both from combinatorial and algorithmic perspectives 
\cite{bern1992mesh, 
DBLP:journals/dcg/Chazelle91, 
de2000computational, 
edelsbrunner2010computational,   
DBLP:journals/jct/Fisk78a,
DBLP:journals/algorithmica/GuibasHLST87,
hershberger1995pedestrian,
DBLP:journals/combinatorics/SharirS11}.
% \daniel{Perhaps add the references [7-14] from https://en.wikipedia.org/wiki/Polygon\_triangulation}
%  They are studied both from a theoretical perspective as a practical one. Many algorithms construct triangulations as an essential subroutine.
% A classic example is the construction of the shortest path inside a polygon~\cite{DBLP:journals/algorithmica/GuibasHLST87}.
% To construct a triangulation of a polygon can be easily done in $O(n\log n)$ time, but it is highly non-trivial to do this in optimal linear time~\cite{DBLP:journals/dcg/Chazelle91}.
% For many applications, triangulations with special properties are desired such as the Delaunay-triangulation.
% However, triangulations also help as a purely theoretical concepts to proof some properties, which would be otherwise hard to come by with.
% An eleg example is Fisk's proof that $\lfloor n/3 \rfloor$ watchmen are sufficient to guard a simple polygon~\cite{DBLP:journals/jct/Fisk78a}.
It is well known that the number of possible triangulations of $n$ points in convex position is exactly the $(n-2)$-th Catalan number, but counting the number of triangulations of arbitrary point sets seems to be a much harder problem. There is a long line of research devoted to finding better and better exponential-time algorithms for counting triangulations \cite{
aichholzer1999path,
DBLP:journals/dcg/AlvarezBCR15,
DBLP:journals/corr/AlvarezBR13,
DBLP:journals/comgeo/AlvarezBRS15,
DBLP:conf/compgeom/AlvarezS13,
anagnostou1993polynomial,
avis1996reverse,
dumitrescu2001enumerating,
gilbert1979new,
DBLP:conf/icalp/KarpinskiLS15,
klincsek1980minimal,
ray2004simple}.
 The sequence of improvements culminated in the $O(2^n n^2)$ time algorithm of Alvarez and Seidel \cite{DBLP:conf/compgeom/AlvarezS13}, winning the best paper award at SoCG 2013. Our main result significantly improves the running time of counting triangulations by making it subexponential:
\begin{theorem}[General Plane Algorithm]\label{thm:FullPlaneAlgo}
   There exists an algorithm that given a set $S$ of $n$ points in the plane
  computes the number of all triangulations of $S$ in $n^{(11+o(1))\sqrt{n}}$
  time.
\end{theorem}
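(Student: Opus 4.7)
The plan is to mimic the classical ``small separator plus dynamic programming'' paradigm that yields subexponential algorithms on planar graphs, but adapted so that we enumerate triangulations without ever actually constructing more than $n^{O(\sqrt n)}$ partial objects. Concretely, I would first argue that every triangulation $T$ of $S$ admits a canonical separator of size $O(\sqrt n)$ that cuts it into two pieces, each with at most a constant fraction of the points. My guess, suggested by the phrase ``cactus graph'' in the abstract, is to peel $T$ from the convex hull inwards in nested layers: the outer boundary is the convex hull, and after deleting it and re-triangulating (or just looking at the boundary of what remains), we obtain a new simple cycle, and so on. If some layer already has $\Theta(\sqrt n)$ vertices, it serves directly as a separator of the desired size; otherwise, all layers are short and there are at least $\Omega(\sqrt n)$ of them, so one layer splits the remaining points into two roughly balanced halves. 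Either way one extracts from $T$ a canonical ``ring sector'' of $O(\sqrt n)$ points that separates $T$.

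Next, I would set up the algorithm. Rather than computing the separator of a fixed $T$, I would \emph{guess} it: iterate over all ordered tuples of $O(\sqrt n)$ points from $S$ together with $O(\sqrt n)$ ``combinatorial type'' bits describing how the sector sits inside the point set. There are $n^{O(\sqrt n)}$ such guesses. For each guess, the sector splits the remaining points into two subsets $S_1,S_2$ confined to geometric regions (say, an inside and an outside of the cactus ring), and by the canonical nature of the separator, a triangulation of $S$ corresponds uniquely to a compatible pair of partial triangulations of $S_1\cup B$ and $S_2\cup B$, where $B$ is the boundary of the sector. I would then recurse on each side. Using the convention that the current subproblem is indexed by a simply connected region with $O(\sqrt n)$ boundary points (a generalisation captured by \RingSec or \GeoRing -type subproblems in the paper), the recursion always consumes subproblems of the same shape.

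For the running time, let $T(n)$ denote the number of operations needed on an instance with $n$ interior points. The recurrence will look like
\[
T(n) \;\le\; n^{c\sqrt n}\cdot\bigl(T(\alpha n)+T((1-\alpha)n)\bigr) + \text{poly}(n),
\]
with $\alpha\le 2/3$ from the balanced separator. A standard unrolling gives $T(n)=n^{O(\sqrt n)}$, and careful bookkeeping of the constant in the exponent of the separator enumeration (at most $11+o(1)$) yields the stated bound $n^{(11+o(1))\sqrt n}$. To avoid over-counting, I would finally verify that the canonicity of the sector produces each triangulation exactly once across the enumeration, so summing the counts returned by the subproblems gives the exact number of triangulations.

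The main obstacle, I expect, is the canonical separator lemma itself: defining layers so that (i) they are intrinsic to $T$ (hence canonical), (ii) at least one layer is short enough, $O(\sqrt n)$, to guess in $n^{O(\sqrt n)}$ ways, and (iii) either that layer or a union of a few consecutive layers splits the remaining points in a balanced way. The interplay between balance and canonicity is delicate, because the layers are forced on us by $T$ and we do not get to choose them to equalise the two sides; some averaging argument over the $\Omega(\sqrt n)$ layers (pigeonholing on cumulative interior point counts) will be needed. Once this geometric lemma is in place, the rest is a careful but essentially standard divide-and-conquer over canonical separators, combined with a uniform subproblem format (the \ring\ with its boundary) that is closed under the recursion.
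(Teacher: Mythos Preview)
Your proposal correctly identifies the cactus-layer decomposition as the key structure, but it breaks down precisely at the point you flag as delicate: you cannot get canonicity \emph{and} balance from layer separators simultaneously, and the paper does not try to. Your dichotomy is also incomplete. The case ``all layers are short, hence there are $\Omega(\sqrt n)$ of them, hence one is balanced'' ignores the complementary case where the triangulation has \emph{few} layers (say $\le\sqrt n$), each of size $\gg\sqrt n$; then no layer is small enough to guess in $n^{O(\sqrt n)}$ time, and there are not enough layers to average over. Even in the favourable case, picking ``a balanced layer'' is not canonical: different triangulations with the same outer layers would select different separators, and there is no local certificate you can attach to a subproblem that says ``the layer I chose was the balanced one for this particular $T$.'' Your recurrence $T(n)\le n^{c\sqrt n}\bigl(T(\alpha n)+T((1-\alpha)n)\bigr)$ therefore has no justification.

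The paper's resolution is to drop balance entirely and to use \emph{two} separator types inside a single dynamic program whose running time is bounded by (number of distinct subproblems)$\times$(work per subproblem), not by a divide-and-conquer recurrence. When the triangulation has more than $\sqrt n$ layers, pigeonhole among the first $\sqrt n$ layers gives one of size $\le\sqrt n$; the canonical choice is the \emph{outermost} such layer, and canonicity is propagated to the outer subproblem by attaching a \emph{layer-constraint vector} that prescribes the exact size of every layer outside the chosen one (so any valid completion certifies that no earlier layer was small). When the triangulation has at most $\sqrt n$ layers, layer separators are useless, and the paper switches to short \emph{path separators}: from any vertex there is a canonical path of length $\le\sqrt n$ to the outer boundary (at each step go to the neighbour of smaller layer-index with smallest order label), and these paths drive a second dynamic program on ``nibbled ring'' subproblems. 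Every subproblem in either regime is specified by $O(\sqrt n)$ points plus $O(\sqrt n)$ integers, so there are $n^{O(\sqrt n)}$ of them; the constant $11$ comes from a concrete count of these descriptors, not from unrolling a balanced recurrence.
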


It is very often the case that restricting an algorithmic problem to
planar graphs allows us to solve it with much better worst-case
running time than what is possible for the unrestricted problem. One
can observe a certain ``square root phenomenon'': in many cases, the
best known running time for a planar problem contains a square root in
the exponent. For example, the 3-Coloring problem on an $n$-vertex
graph can be solved in subexponential time $2^{O(\sqrt{n})}$ on planar
graphs (e.g., by observing that a planar graph on $n$ vertices has
treewidth $O(\sqrt{n})$\,), but only $2^{O(n)}$ time algorithms are
known for general graphs. Moreover, it is known that if we assume the
Exponential-Time Hypothesis (ETH), which states that there is no
$2^{o(n)}$ time algorithm for $n$-variable 3SAT, then there is no
$2^{o(\sqrt{n})}$ time algorithm for 3-Coloring on planar graphs and
no $2^{o(n)}$ time algorithm on general graphs
\cite{DBLP:journals/eatcs/LokshtanovMS11}.  The situation is similar
for the planar restrictions of many other NP-hard problems, thus it
seems that the appearance of the square root of the running time is an
essential feature of planar problems.  A similar phenomenon occurs in
the framework of parameterized problems, where running times of the
form $2^{O(\sqrt{k})}\cdot n^{O(1)}$ or $n^{O(\sqrt{k})}$ appear for
many planar problems and are known to be essentially best possible
(assuming ETH)
\cite{DBLP:journals/siamdm/DemaineFHT04,DBLP:journals/jacm/DemaineFHT05,DBLP:journals/talg/DemaineFHT05,DBLP:journals/siamcomp/FominT06,DBLP:journals/csr/DornFT08,DornPBF10,DBLP:conf/gd/DemaineH04,DBLP:journals/combinatorica/DemaineH08,DBLP:journals/cj/DemaineH08,DBLP:conf/esa/Thilikos11,
  DBLP:journals/ipl/FominLRS11,DBLP:conf/stacs/DornFLRS10,DBLP:conf/icalp/KleinM12,DBLP:conf/soda/KleinM14,DBLP:conf/soda/ChitnisHM14,DBLP:conf/stacs/PilipczukPSL13,DBLP:conf/focs/PilipczukPSL14,DBLP:conf/esa/MarxP15}.

A triangulation of $n$ points can be considered as a planar graph on $n$ vertices, hence it is a natural question whether the square root phenomenon holds for the problem of counting triangulations. Indeed, for the related problem of finding a minimum weight triangulation, subexponential algorithms with running time $n^{O(\sqrt{n})}$ are known \cite{DBLP:conf/wg/KnauerS06,DBLP:conf/cccg/Lingas98}. 
These algorithms are based on the use of small balanced separators. Given a plane triangulation on $n$ points in the plane, it is well known that there exists a balanced $O(\sqrt{n})$-sized separator that divides the triangulation into at least two independent graphs~\cite{lipton1979separator}. 
The basic idea is to guess a correct $O(\sqrt{n})$-sized separator of a minimum weight triangulation and recurse one all occurring subproblems. 
As there are only $n^{O(\sqrt{n})}$ potential graphs on $O(\sqrt{n})$ vertices, one can show that the whole algorithm takes $n^{O(\sqrt{n})}$ time~\cite{DBLP:conf/wg/KnauerS06,DBLP:conf/cccg/Lingas98}.

Unfortunately, this approach has serious problems when we try to apply it to counting triangulations. The fundamental issue with this approach is that a triangulation of course may have more than one $O(\sqrt{n})$-sized balanced separators and hence we may overcount the number of triangulations, as a triangulation would be taken into account 
in more than one of the guesses. To get around this problem, an obvious simple idea would be to say that we always try to guess a ``canonical'' separator, for example, the lexicographically first separator.  However, it is a complete mystery how to guarantee in subsequent recursion steps that the separator we have chosen is indeed the lexicographic first for all the triangulations we want to count.
Perhaps the most important technical idea of the paper is finding a suitable way of making the separators canonical. 

This full version is committed to give all details in a self-contained way.
The reader who is only interested in the main concepts of the algorithm is refered to the conferenc version~\cite{PeelingCactusAppear}. 

\subsection{Preliminaries}\label{sec:Prelim}

We interpret a collection of points $S$ and non-crossing segments $E$ in $\mathbb{R}^2$ as a plane graph, if every segment $e\in E$ shares exactly its endpoints with $S$. We usually identify points with vertices and edges with segments. 
For convenience, we will use the term graph almost always,
to indicate plane straight line graphs and silently assume we mean a plane straight line graph. 
The number $n$ always denotes the size of the underlying point set $S$.
We assume $S$ to be in general position, that is, no three points lie on a line and no four points lie on a common circle.

\begin{figure}[t]
  \centering
  \includegraphics[width = 0.8\textwidth]{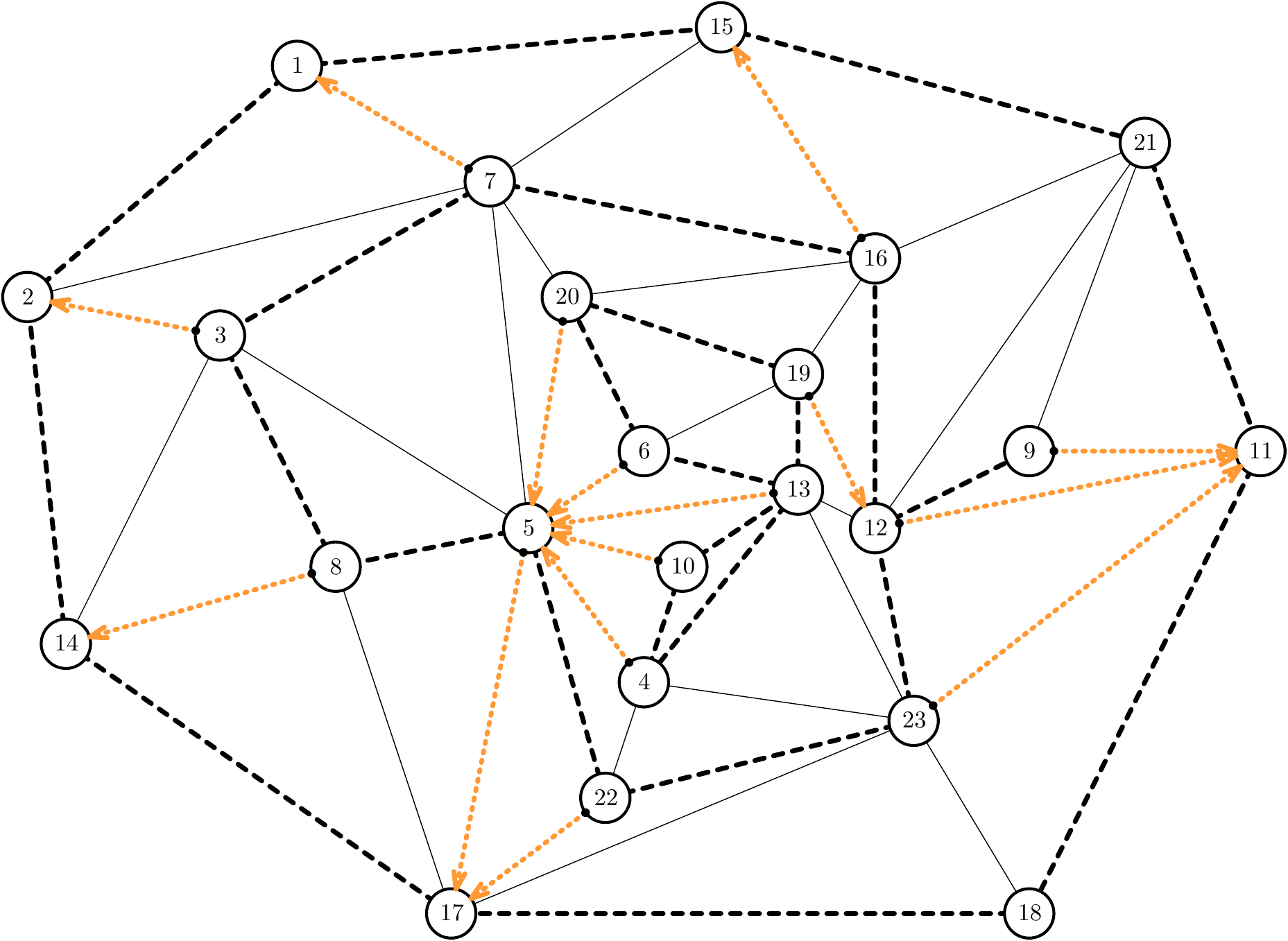}
  \caption{A set of points, each with a distinct \emph{order label}, and a triangulation. The edges of layers are dashed. For each vertex, we drew a directed edge to the unique neighbor with smallest order label, among the vertices with smaller distance to the boundary.}
  \label{fig:orderedlabels}
\end{figure}

A plane graph induces naturally a partition of the plane into open faces, open segments and a finite collection of points.
The unique  unbounded face is called the \emph{outer face}.
We call a plane graph $G$ a \emph{cactus graph} (or just \emph{cactus}) if all its vertices and edges are incident to the \emph{outer face}. 
We call an edge \emph{inner edge} of $G$ if it is not on the outer face of $G$.
A cactus graph is outerplane, but some outerplane graphs are not cacti, as they might have inner edges. 
For convenience, we do \emph{not} require cactus graphs to be connected.
A \emph{triangulation} of a set of points is a maximal plane graph on those points.

We define a decomposition of a triangulation into nested \emph{(cactus) layers} of cacti. The first (cactus) layer is defined by the set of vertices and edges incident to the outer face. Inductively, the $i$-th layer is
defined by the vertices and edges incident to the outer faces after the 
first $i-1$ layers are removed and has \emph{index} $i$. 
We say layer $i$ is further \emph{outside} then layer $j$ if $i<j$ and in this case layer $j$ is more \emph{inside} than layer $i$.
The \emph{outerplanar index} of a graph is defined by the number of non-empty (cactus) layers. Further, we can give each vertex $v$ uniquely the \emph{index}, denoted by $\lind{v}$, of the layer it is contained in.

We define $\partial CH(S)$ as the boundary of the convex hull of the point set $S$. The \emph{onion layers} of a set of points $S$ are defined inductively in a similar fashion. The first layer is $\partial CH(S)$. The $i$-th layer is the boundary of the convex hull after the first $i-1$ layers are removed.

The definition of cactus layers and onion layers should not be confused: the onion layers are completely defined by the point set only, whereas cactus layers are defined by the point set and the triangulation. In particular, it is easy to construct a point set with an arbitrary large number of onion layers, but having a triangulation with only two cactus layers.

We attach to each point $p\in S$ a different number $l(p)\in \{1,\ldots, n\}$ and 
refer to this number as the \emph{order label}.

\begin{lemma}\label{lem:LayerDistance}
  Given a valid triangulation $T$ of some set $S$ of $n$ points in the plane.
  Then each vertex $v$ with  $\lind{v} > 1$
  has a neighbor $w$ with $\lind{w} = \lind{v} -1 $,  vertex $v$ has no neighbor $w$ with $\lind{w} < \lind{v} -1 $.
  Among the neighbors $w$ as described above, there is exactly one with smallest order label.
\end{lemma}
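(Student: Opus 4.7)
I would prove the three claims in order; the third is immediate once the first is established, so the real content is in the first two. Throughout, set $k=\lind{v}$ and let $G_i$ denote the subgraph of $T$ remaining after layers $1,\ldots,i-1$ have been peeled, so that layer $i$ consists of the vertices and edges on the outer face of $G_i$, and removing layer $i$ (that is, deleting these vertices together with all edges incident to them) yields $G_{i+1}$.

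For the second claim, suppose for contradiction that $v$ has a neighbor $w$ with $\lind{w}=j$ and $j<k-1$. Then $w$ lies on the outer face of $G_j$, while $v$, $w$, and the edge $vw$ are all still present in $G_j$. The two faces of $G_j$ incident to $vw$ must both be interior, for if either were the outer face then $v$ would already lie on it, contradicting $\lind{v}>j$. When $w$ is removed as part of layer $j$, every face corner at $w$ merges into a single region; since at least one of these corners belongs to the outer face of $G_j$, both faces adjacent to $vw$ are absorbed into the outer face of $G_{j+1}$. Consequently $v$ lies on the outer face of $G_{j+1}$, giving $\lind{v}\le j+1<k$ and contradicting $\lind{v}=k$.

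For the first claim, I invoke the second: every neighbor of $v$ has layer-index at least $k-1$, so every triangular face $F=vu_1u_2$ of $T$ incident to $v$ still bounds an interior face of $G_{k-1}$, since all three vertices and the three edges persist and the interior of $F$ cannot gain new points. Together with $\lind{v}=k>k-1$, this shows that the face corners at $v$ in $G_{k-1}$ are exactly these interior triangular faces of $T$. Because $\lind{v}=k$ forces $v$ onto the outer face of $G_k$, at least one such triangle $F=vu_1u_2$ must merge with the outer face when layer $k-1$ is peeled. But such a merging requires some vertex of $F$ to be removed, i.e.\ to belong to layer $k-1$; since $v$ does not, either $u_1$ or $u_2$ is the desired neighbor of $v$ in layer $k-1$. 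The third claim now follows at once: the first produces a non-empty finite set of neighbors of $v$ with layer-index $k-1$, and the distinctness of the order labels on $S$ singles out a unique one with smallest label. The main technical obstacle is the first claim, where one has to describe the faces of $G_{k-1}$ around $v$ precisely and rule out $v$ reaching the outer face of $G_k$ by any mechanism other than the deletion of a directly adjacent vertex; here both the triangulation property (interior faces have exactly three vertices) and the already-proven second claim (so the triangles at $v$ are undisturbed in $G_{k-1}$) are essential.
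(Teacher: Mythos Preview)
Your proposal is correct and follows essentially the same approach as the paper: both arguments show the second claim by observing that deleting a neighbor $w$ with $\lind{w}<k-1$ would expose $v$ to the outer face too early, and both show the first claim by arguing that when $v$ first reaches the outer face in $G_k$, one of the triangular faces around $v$ must have been opened by removing an adjacent vertex in layer $k-1$. The only structural difference is that you prove the second claim first and then invoke it to justify that all triangles around $v$ survive intact in $G_{k-1}$; the paper proves the first claim directly (implicitly using that no edge $vw_j$ can be removed before layer $k$) and treats the second claim afterward. Your ordering is arguably cleaner, but the underlying argument is the same.
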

\begin{proof}
  Consider a vertex $v$ with $i= \lind{v} > 1$. We denote by $w_1,\ldots,w_k$ the neighbors of $v$ in cyclic order. We will show one of the neighbors has layer-index $\lind{v}-1$.
  We know after $\lind{v}-1$ iterations of removal of vertices on the outer face
  of $G$, $v$ is on the outer face of $G_{i}$. This implies one of the faces adjacent to $v$ becomes part of the outer face of $G_i$. This implies one of 
  the edges $(w_j,w_{j+1})$ is removed (indices taken modulo k.).
  This implies $w_j$ or $w_{j+1}$ was removed from $G_{i-1}$. 
  The index of a removed vertex is by definition $i-1$. This shows the first claim.
  
  To see the second claim, 
  assume there exists a vertex $w$ with the above mentioned index.
  After the removal of $w$ the vertex $v$ 
  is on the outside face and thus $\lind{v} \leq \lind{w}  + 1 < \lind{v} -1  + 1 = \lind{v} $ --- a contradiction.
\end{proof}

In light of Lemma~\ref{lem:LayerDistance}, we can define layers alternatively using the distance to the vertices on the outer face.
For this purpose, define an artificial vertex $v_{\infty}$ adjacent
to all vertices on the outer face.
Let $d(v)$ denote the \emph{distance of $v$ to $v_{\infty}$},
that is the length of the shortest path to $v_\infty$. By Lemma~\ref{lem:LayerDistance}, $d(v) = \lind{v}$,
and layer $i$ has as vertices $V(L_i) = \{v : d(v) = i\}$.
Note that the graph induced on $V(L_i)$ is \emph{not} layer $i$.
Here we use the convention that the length of a path equals 
the number of its edges.

% In general the graph $G_i$ induced on $V(L_i)$ is outerplane. However, we want to forget about the edges which are not incident to the unbounded face of $G_i$.
% Therefore, we define the \emph{cactus of $G_i$} as the subgraph $H_i$ of $G_i$ that
% contains only the edges of $G_i$ that are incident to the outer face of $G_i$.

\begin{definition}
  An \emph{annotated} triangle is a $9$-tuple consisting of $3$ points of $S$, which form an empty triangle and $6$ strings, one for each vertex and edge of the triangle.
  An \emph{annotation system} is a list $L$ of annotated triangles. We say a triangle is \emph{feasible}, if it belongs to the list. The size of $L$ is the number of triangles it contains and denoted by $|L|$.

    Given an annotation system $L$, we call the $T$ a \emph{valid annotated triangulation} of $S$ if for every annotated triangle $\Delta$ of $T$ is feasible($\Delta \in L$).
    Further $\mathcal{T}^A({L})$ denotes the set of all valid annotated triangulations belonging to $L$. 
  \end{definition}

  We denote with $[a,b]$ the \emph{integer interval} $\{a,a+1,\ldots,b\}$.

\subsection{Results}

\begin{figure}[t]
  \centering
  \includegraphics{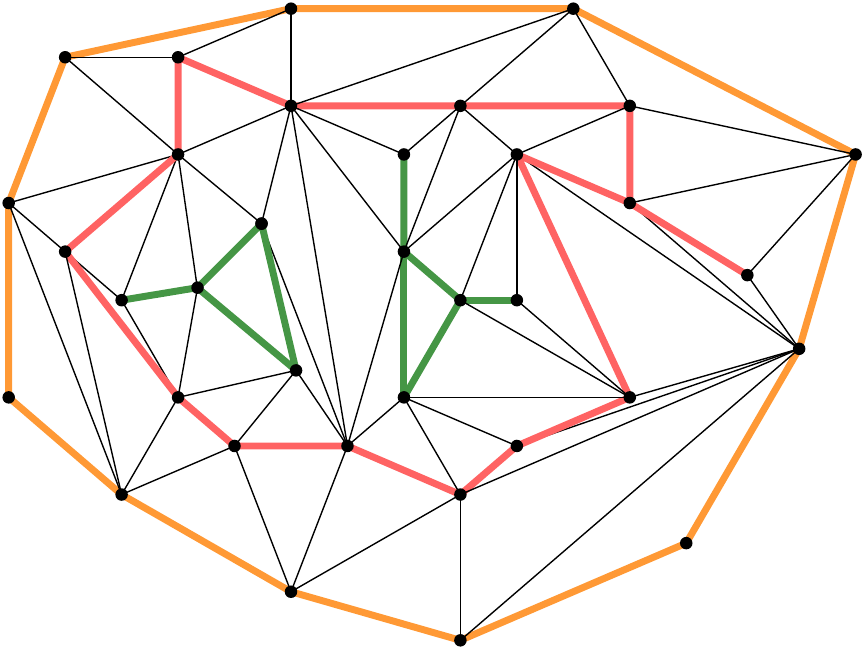}
  \caption{A triangulation with three cactus layers. Layer $1$ is colored orange, Layer two is colored red and Layer $3$ is colored green. Note that the layers do not need to be connected.}
\end{figure}

Given a triangulation $T$, we define small \emph{canonical} separators by distinguishing two cases. If $T$ has more than $\sqrt{n}$ cactus layers, then one of the first $\sqrt{n}$ layers has size at most $\sqrt{n}$ and we can define the one with smallest index to be the canonical separator. 
Using such a separator, we \emph{peel off} some cacti to reduce the problem size.
In the case when we have only a few cactus layers, we can define short canonical separator paths from the interior to the outer face of the triangulation.
We formalize both ideas into a dynamic programming algorithm. The main difficulty is to define the subproblems appropriately. We use the so-called ring subproblems for the layer separators and nibbled rind subproblems for the path separators.

As a byproduct of this algorithmic scheme, we can efficiently count triangulation with a small number of layers. This is similar to previous work on finding a minimum weight triangulation~\cite{anagnostou1993polynomial} and counting triangulations~\cite{DBLP:journals/dcg/AlvarezBCR15} for point sets with a small number of onion layers. 
\begin{restatable}[Thin Plane Algorithm]{theorem}{thmThinPlane}\label{thm:ThinPlane}
  There exists an algorithm  that
  given a set $S$ of $n$ points in the plane
  computes the number of all triangulations of $S$ 
  with outerplanar index $k$ in 
  $n^{O(k)}$ time.
\end{restatable}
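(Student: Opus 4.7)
The plan is to run only the ``path-separator'' component of the general machinery and observe that, when the outerplanar index of a triangulation $T$ is at most $k$, the separator paths have length $O(k)$, which keeps the space of subproblems subexponential. By Lemma~\ref{lem:LayerDistance}, every vertex $v$ of $T$ satisfies $\lind{v} = d(v) \le k$. For each vertex $v$ with $\lind{v} > 1$, Lemma~\ref{lem:LayerDistance} singles out a unique \canout neighbor in the preceding layer; iterating this at most $k-1$ times yields a canonical path $\pi_T(v)$ from $v$ to the outer face consisting of at most $k$ vertices. Crucially, $\pi_T(v)$ is determined by local information (layer-indices and order labels), so it remains meaningful when we restrict attention to a subregion of the plane.

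Next I would set up a dynamic program on \emph{nibbled rind} subproblems. Each subproblem is a simply connected region $R$ of the convex hull whose boundary consists of a piece of $\partial CH(S)$ together with at most two canonical paths of $T$, plus a constant amount of boundary annotation (from the annotation system in Section~\ref{sec:Prelim}) recording which triangles and edges along the boundary are forced. Since a path on at most $k$ vertices can be described by $n^{O(k)}$ choices, and there are only $O(1)$ paths and $O(1)$ annotation slots per subproblem, the total number of distinct subproblems is $n^{O(k)}$. The full convex hull is a degenerate nibbled rind whose DP value is the desired count.

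Each subproblem is processed by the standard triangulation peeling step: designate a canonical boundary edge $e = uv$ (for instance, the one whose endpoints have the lex-smallest pair of order labels), loop over all feasible third vertices $p \in S \cap R$, and for each $p$ split $R$ by the segments $up$ and $vp$ into at most three smaller nibbled rinds, whose DP values are multiplied and summed. When $e$ already lies on a canonical path, the \canout rule together with the layer-index of $p$ tells us whether the choice is consistent with the separator paths on the boundary, so inconsistent guesses can be rejected in constant time. This inner step runs in $n^{O(1)}$ time per subproblem, so the total running time is $n^{O(k)}$.

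The main obstacle will be exact-counting correctness: every triangulation with outerplanar index at most $k$ must correspond to exactly one sequence of DP choices. This requires that every separator arising in the recursion is a canonical path of $T$, so that the decomposition of $T$ is uniquely determined by $T$, and that canonicity survives the recursion. Both hold because the \canout neighbor depends only on order labels and layer-indices, and both quantities are preserved under restriction to a subregion bounded by canonical paths; the standard inductive bottom-up argument then shows that the value computed for the full convex hull is precisely the number of valid triangulations with outerplanar index at most $k$.
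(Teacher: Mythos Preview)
Your high-level plan is exactly the paper's: use canonical outgoing paths of length at most $k$ as separators and do dynamic programming over regions bounded by a convex-hull arc and two such paths. The size bound $n^{O(k)}$ on the subproblem space is also argued the same way. The paper then obtains ``exactly $k$'' from ``at most $k$'' by subtracting $t_{\le}(S,k-1)$ from $t_{\le}(S,k)$; you should say this explicitly.

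There is, however, a real gap in your recursion step. You write that after choosing the base edge $e=uv$ and the apex $p$, you ``split $R$ by the segments $up$ and $vp$ into at most three smaller nibbled rinds.'' But if $p$ is an interior point of $R$, removing the triangle $uvp$ does \emph{not} disconnect $R$: the remainder is a single region whose boundary now contains the two edges $up$ and $vp$, and neither of these edges reaches the outer layer. The resulting region is not of the form ``convex-hull arc plus at most two short canonical paths,'' so the invariant you need for the $n^{O(k)}$ bound is lost after one step. The paper's recursion fixes this by guessing, in addition to the apex $p$, a full separator path from $p$ to the outer layer (the canonical outgoing path $\pi_T(p)$ you defined in your first paragraph). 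That path, together with the triangle, genuinely splits $R$ into two pieces, each bounded by one old boundary path, the new separator path, a convex-hull arc, and a single new base edge ($up$ or $vp$). This is the step that keeps every subproblem boundary describable by $O(k)$ vertices. You already have all the ingredients for this path in your first paragraph; you just need to actually use it in the split.

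A secondary point: canonical paths are defined relative to a triangulation $T$, but during the DP you do not know $T$. The paper therefore enumerates a superset of ``separator paths'' defined only from the subproblem data (indices along the path, order-label minimality checked against the currently known neighbors), and proves that every canonical outgoing path is among them while wrong guesses are eventually rejected. Your last paragraph gestures at this (``canonicity survives the recursion''), but you will need the explicit replaced-outwardness and replaced-order-label conditions to make the bijection between triangulations and DP traces go through.
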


One may want to count triangulations subject to certain constraints
(e.g., degree bounds, or bounds on the angles of the triangles, etc.)
or generalize the problem to counting colored triangulations with
colors on the vertices or edges. We introduce an annotated version of
the problem to express such generalizations in a clean and formal way.
An \emph{annotated} triangle is a $9$-tuple consisting of $3$ points
of $S$, which form an empty triangle and $6$ strings, one for each
vertex and edge of the triangle. An \emph{annotation system} is a list
$L$ of annotated triangles. Given an annotation system $L$, we call an
annotated triangulation $T$ \emph{valid} if every annotated triangle
$\Delta$ of $T$ belongs to $L$.  With little extra effort, we can generalize
our algorithms to count also valid annotated triangulations.
We denote by $|L|$ the number of annotated triangles and assume that each string can be described with $n^{O(1)}$ bits.
  \begin{theorem}[Counting Annotated Triangulations]\label{thm:AnnotToAlgo}
      Given an annotation system $L$ and a set $S$ of $n$ points in the plane, we can count all valid annotated triangulation in $n^{(11+o(1))\sqrt{n}}\cdot |L|^{(12+o(1))\sqrt{n}}$
%       $n^{O(\sqrt{n})} \cdot |L|^{O(\sqrt{n})}$ 
      time.
  \end{theorem}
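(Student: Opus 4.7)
The plan is to run the algorithm underlying Theorem~\ref{thm:FullPlaneAlgo} essentially verbatim, but to enrich each subproblem in the dynamic program with annotation information on its boundary. Recall that every ring or nibbled-rind subproblem is specified by a geometric separator of size $O(\sqrt{n})$; to handle annotations I would additionally parameterize a subproblem by the string annotation carried by each vertex and each edge of that separator. Since a separator touches $O(\sqrt{n})$ vertices and $O(\sqrt{n})$ edges, and since only annotations appearing in some triangle of $L$ can be useful (each such triangle contributes at most $3$ vertex- and $3$ edge-annotations, so the number of relevant strings is at most $6|L|$), the number of subproblems is multiplied by a factor of $|L|^{O(\sqrt{n})}$ on top of the $n^{O(\sqrt{n})}$ separator choices already present in Theorem~\ref{thm:FullPlaneAlgo}.

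The recursive step is then modified as follows. When solving a subproblem by peeling off an outer cactus layer or nibbling a path, the algorithm now also guesses the annotations on the new interior separator and on all triangles that are attached to that separator. For each guess it checks that every \emph{annotated} triangle it has committed to (i.e.\ every triangle whose three vertex-annotations and three edge-annotations have been fixed by the guess) is feasible, that is, belongs to $L$; otherwise the contribution is zero. The remaining count for each side of the separator is then read off from the table entries whose boundary-annotation parameters match the ones just guessed. Because the boundary annotations are baked into the subproblem key rather than guessed afresh each time the subproblem is consulted, sibling subproblems are forced to agree on all shared vertices and edges, and each valid annotated triangulation is counted exactly once.

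For the running time I would argue as in Theorem~\ref{thm:FullPlaneAlgo}, multiplying each source of enumeration by the corresponding annotation factor. The base enumeration of separators contributes $n^{(11+o(1))\sqrt{n}}$. The annotations of the $O(\sqrt{n})$ vertices and $O(\sqrt{n})$ edges of a separator contribute $|L|^{O(\sqrt{n})}$. The annotations of the $O(\sqrt{n})$ triangles directly attached to the separator, which must also be guessed at the moment they are committed, contribute another $|L|^{O(\sqrt{n})}$. Together these give the claimed bound $n^{(11+o(1))\sqrt{n}}\cdot |L|^{(12+o(1))\sqrt{n}}$; the ``$12$ vs.\ $11$'' reflects that a separator with $O(\sqrt{n})$ vertices is simultaneously incident to both $O(\sqrt{n})$ edges and $O(\sqrt{n})$ triangles whose annotations must be independently accounted for in the recursion step.

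The main obstacle I foresee is bookkeeping rather than a new algorithmic idea: making sure that every triangle of a reconstructed annotated triangulation is checked against $L$ \emph{exactly once}, neither missed nor double-counted. The natural fix is to charge each triangle to the unique outermost separator that first fully determines its annotation (the separator that contains at least one of its edges when it is peeled off), and to perform the $L$-membership test precisely at that moment. A secondary bookkeeping point is verifying that the annotation of a vertex or edge lying on two nested separators is consistent across the two corresponding subproblem keys; this is automatic because both keys are generated by the same enumeration step and must agree by definition for a table lookup to succeed. Once this bookkeeping is set up, correctness of the count follows from the correctness proof of Theorem~\ref{thm:FullPlaneAlgo} by interpreting every triangulation handled there as the underlying geometric triangulation of an annotated one.
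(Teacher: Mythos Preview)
Your high-level plan---take the separator-based dynamic program and additionally key each subproblem on the annotations carried by its $O(\sqrt{n})$-sized boundary---is exactly what the paper does: the ring and nibbled-ring subproblems already carry boundary annotations as part of their specification, and the running-time analysis of Theorem~\ref{thm:GeoRingAlgo} tracks the $|L|^{O(\sqrt{n})}$ blow-up essentially the way you describe.

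There is, however, one genuine step you are missing. Your assertion that ``every ring or nibbled-rind subproblem is specified by a geometric separator of size $O(\sqrt{n})$'' holds for the subproblems generated \emph{during} the recursion, but not for the \emph{initial} subproblem: its outer boundary is $\partial CH(S)$, which may have $\Theta(n)$ vertices. In the unannotated setting this is harmless---the hull boundary is fixed, nothing is enumerated on it---but in the annotated setting the hull vertices and edges carry annotations that are part of what you are counting. To set up the top-level subproblem you would therefore have to sum over all $|L|^{\Theta(|\partial CH(S)|)}$ choices of hull annotation, and when $|\partial CH(S)|=\Theta(n)$ this destroys the running time. The paper fixes this with a short trick you do not mention: enclose $S$ in a bounding triangle $\Delta^+$, fix one triangulation $T^*$ of the region between $\Delta^+$ and $\partial CH(S)$, and extend $L$ to an annotation system $L^*$ (with $|L^*|\le 2|L|$) whose triangles outside $CH(S)$ are exactly those of $T^*$ with the inner edge/vertex annotations copied from $L$ and empty strings elsewhere. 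Now the outer layer of the initial ring subproblem is just the three-vertex $\Delta^+$ with empty-string annotations, and the variable annotations on the original hull are absorbed into interior-triangle feasibility checks, where your $|L|^{O(\sqrt{n})}$ accounting already covers them.

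With that one patch, your sketch and the paper's proof coincide.
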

    
As examples of this generalization, we can count triangulations that are 3-colorable or where each point has a specified degree in the triangulation: all we need is to carefully design a suitable annotation system.
\begin{figure}[t]
  \centering
  \includegraphics{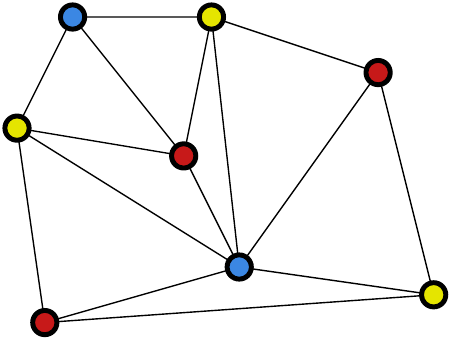}
  \caption{A three colorable triangulation. 
%   \daniel{this figure is not very important here.}
  }
  \label{fig:ThreeColorableTriangulation}
\end{figure}

   \begin{theorem}\label{thm:Count3Color}
    Given a set $S$ of $n$ points in the plane, we can counts all $3$-colorable triangulations of $S$ in time $n^{O(\sqrt{n})}$.
    \end{theorem}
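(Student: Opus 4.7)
The plan is to invoke Theorem~\ref{thm:AnnotToAlgo} with an annotation system $L$ that records a proper vertex $3$-coloring. Concretely, I would let $L$ contain every annotated triangle $(p,q,r,c_p,c_q,c_r,\varepsilon,\varepsilon,\varepsilon)$ such that $\{p,q,r\}\subseteq S$ spans an empty triangle, each $c_x\in\{1,2,3\}$, the three colours $c_p,c_q,c_r$ are pairwise distinct, and the edge annotations are all set to the empty string $\varepsilon$. Each empty triangle contributes $3!=6$ such annotated variants, so $|L|=O(n^3)$.

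Because in an annotated triangulation every vertex (and every edge) carries a single annotation that is shared by all triangles meeting it, a valid annotated triangulation for this $L$ is precisely a pair $(T,c)$ where $T$ is a triangulation of $S$ and $c\colon S\to\{1,2,3\}$ is a proper vertex $3$-colouring of $T$. Plugging the bound $|L|=O(n^3)$ into Theorem~\ref{thm:AnnotToAlgo} counts these pairs in time
\[
  n^{(11+o(1))\sqrt{n}}\cdot|L|^{(12+o(1))\sqrt{n}} = n^{O(\sqrt{n})}.
\]

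To recover the number of $3$-colorable triangulations from the count of pairs $(T,c)$, I would divide the answer by $6$, so the key auxiliary fact to establish is that every $3$-colorable triangulation of $S$ (for $n\geq 3$) admits exactly $6$ proper $3$-colourings. I would prove this by a short propagation argument: fix any triangle of $T$ and any one of the $6$ bijections of its vertices to $\{1,2,3\}$; then in every triangle that shares an edge with an already-coloured triangle, the third vertex's colour is forced by the two known colours, and iterating this rule extends the colouring over all of $T$. The only non-trivial step I foresee is verifying that this propagation actually reaches every vertex without contradiction, which reduces to observing that the ``shares-an-edge'' relation on the triangles of $T$ is connected -- a standard consequence of the fact that every interior edge of a plane triangulation lies on exactly two triangles, together with connectivity of $T$. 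The edge cases $n\leq 2$ are handled trivially, and the final running time remains $n^{O(\sqrt{n})}$.
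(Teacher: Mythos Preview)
Your proposal is correct and follows essentially the same approach as the paper: build an annotation system whose vertex annotations are the three colours and whose feasible triangles are exactly those using all three colours, then apply Theorem~\ref{thm:AnnotToAlgo}. The only difference is cosmetic: the paper fixes an edge $e^*=vw$ on $\partial CH(S)$ and hard-codes the colours of $v$ and $w$ into the annotation system, thereby obtaining a direct bijection between valid annotated triangulations and $3$-colourable triangulations, whereas you count all $(T,c)$ pairs and divide by $6$. Both rest on the same lemma --- that a $3$-colourable triangulation has a unique proper $3$-colouring once the colours of two adjacent vertices are fixed --- which the paper states without proof and you sketch via the dual-graph propagation argument. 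Your version requires you to argue that the $6$ permuted colourings are genuinely distinct (immediate, since they differ on the starting triangle), while the paper's pinning sidesteps the division entirely; either way the running time and the size bound $|L|=O(n^3)$ are the same.
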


  \begin{theorem}\label{thm:CountDegreeConstrainedTriangulations}
    Given a set $S$ of $n$ points in the plane with prescribed degrees on each vertex, we can count all triangulations $T$ satisfying the degree constraints in  $n^{O(\sqrt{n})}$ time.
  \end{theorem}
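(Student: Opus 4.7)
The plan is to derive Theorem~\ref{thm:CountDegreeConstrainedTriangulations} from Theorem~\ref{thm:AnnotToAlgo} by constructing an annotation system $L$ of polynomial size whose valid annotated triangulations enumerate each triangulation satisfying the degree constraints the same number of times. For every edge $e=(u,v)$ of the triangulation we attach a pair of ``slots'' $(s_u,s_v)\in\{1,\ldots,d(u)\}\times\{1,\ldots,d(v)\}$, intended to record the position of $e$ in the counter-clockwise cyclic order of edges around each endpoint. We put into $L$ exactly those annotated empty triangles $(a,b,c,\ldots)$ for which, at every corner $v\in\{a,b,c\}$, the two sides of the triangle incident to $v$ carry slot values at $v$ that differ by $1$, with the $+1$ going to the counter-clockwise next side, and the arithmetic taken modulo $d(v)$ if $v\notin\partial CH(S)$ and without wrap-around otherwise. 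For a hull corner $v$ we additionally require that the counter-clockwise (resp.\ clockwise) hull edge at $v$ carries slot $1$ (resp.\ $d(v)$); whether an incident side is a hull edge depends only on $S$, so this is a local check. Since each of the six annotation strings is at most a pair of integers in $\{1,\ldots,n\}$, we have $|L|\le n^{O(1)}$.

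The crucial point is that this purely local scheme captures the global degree constraint. For any triangulation $T$ of $S$, walking once counter-clockwise around the fan at an interior vertex $v$ advances the slot by $\deg_T(v)$ modulo $d(v)$, and closure forces $d(v)\mid\deg_T(v)$, hence $\deg_T(v)\ge d(v)$. At a hull vertex $v$, the pinning of the two hull edges to slots $1$ and $d(v)$ together with the consecutive-slot rule pins the entire fan to the sequence $1,2,\ldots,d(v)$, so $\deg_T(v)=d(v)$ exactly. Combining this with the identity $\sum_v\deg_T(v)=2|E(T)|=6|S|-6-2|\partial CH(S)|$, which must be matched by $\sum_v d(v)$ (otherwise the answer is $0$ and we output it without invoking the algorithm), the inequalities $\deg_T(v)\ge d(v)$ become equalities everywhere, so $\deg_T(v)=d(v)$ at every vertex. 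Conversely, for each such $T$ the number of valid annotations is exactly $M:=\prod_{v\in S\setminus\partial CH(S)}d(v)$: at each interior vertex the $d(v)$ cyclic shifts of the slot labelling are valid, whereas each hull fan is rigid.

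Plugging this $L$ into Theorem~\ref{thm:AnnotToAlgo} yields a running time of $n^{(11+o(1))\sqrt n}\cdot|L|^{(12+o(1))\sqrt n}=n^{O(\sqrt n)}$, and dividing the returned count by the constant $M$ gives the desired number. The main obstacle is the local-to-global step at interior vertices: the cyclic consecutive-slot rule alone admits multi-wrap configurations with $\deg_T(v)\in\{d(v),2d(v),\ldots\}$, and ruling them out requires both the hull pinning (to anchor the hull-vertex degrees) and the combinatorial identity above (to prevent interior vertices from overshooting). Once these two ingredients are in place the remainder is routine bookkeeping of the annotation count.
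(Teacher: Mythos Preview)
Your argument is correct, but it takes a more roundabout route than the paper. The paper's annotation system $L_\textup{TriDeg}$ (described in the proof of Lemma~\ref{lem:Annotations}) also labels the edges around each vertex $v$ with $1,\ldots,d(v)$, but it breaks the cyclic symmetry at interior vertices geometrically: label~$1$ is placed on the first edge clockwise after the \emph{upward direction} from $v$, and only the unique triangle containing that upward ray is permitted the wraparound pair $(d(v),1)$. This pins the labelling completely, so each degree-respecting triangulation has exactly one valid annotation and no triangulation with a wrong degree admits any --- the multi-wrap configurations you worry about simply cannot arise, and neither the global degree-sum identity nor the division by $M$ is needed.

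Your version trades that geometric tie-break for the combinatorial identity $\sum_v\deg_T(v)=6n-6-2h$ plus a final division. This is a legitimate alternative, but note two small points: you should explicitly handle the edge case where some prescribed $d(v)$ is zero (or below the minimum degree possible in a triangulation), since then $M=0$ and the division is ill-defined --- harmless because such instances are rejected in preprocessing, but it deserves a sentence; and your scheme relies on the hull vertices being rigidly pinned so that the global sum argument can cascade to the interior, which is fine but worth stating as a standalone lemma rather than folding into one paragraph. The paper's approach is shorter precisely because the upward-direction trick makes each vertex self-contained.
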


More generally, instead of triangulations, we could be interested in counting other geometric graph classes, such as non-crossing perfect matchings, non-crossing Hamilton cycles, etc. Surprisingly, many such problems can be expressed in a completely formal way in our framework of counting annotated triangulations. The idea here is to think of a geometric graph on a point set as a 2-edge-colored triangulation, with one color forming the graph itself and the other color representing non-edges. To make this idea work, we have to ensure that for each member of our graph class, we count only one 2-edge-colored triangulation. This is nontrivial, as a given geometric graph can be extended into a a 2-edge-colored triangulation in many different ways. Similarly to previous work \cite{DBLP:journals/dcg/AlvarezBCR15, DBLP:journals/comgeo/AlvarezBRS15}, we use the notion of constrained Delaunay triangulation (see \cite{hjelle2006triangulations}) to enforce that each graph has a unique extension into a valid 2-edge-colored triangulation.   By formalizing this idea and carefully designing annotation systems, it is possible to get $n^{O(\sqrt{n})}$ time algorithms for a large number of graph classes. The following theorem states some important examples to demonstrate the applicability of our approach.

\begin{restatable}[Counting Geometric Structures]{theorem}{CountingGeometricStuff}\label{thm:CountingGeometricStructures}
The following non-crossing structures can be counted in $n^{O(\sqrt{n})}$ time on a set of $n$ points in the plane: the set of all graphs, perfect matchings, cycle decompositions, Hamilton cycles, Hamiltonian paths, Euler tours, spanning trees, $d$-regular graphs, and quadrangulations.
\end{restatable}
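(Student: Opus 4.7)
The plan is to reduce each counting problem to counting annotated triangulations via Theorem~\ref{thm:AnnotToAlgo}. For a class $\mathcal{C}$ of non-crossing structures, I would encode a graph $G\in\mathcal{C}$ as a triangulation $T\supseteq G$ whose edges are $2$-coloured: the edges of $G$ are coloured \emph{on}, and the extra edges completing $T$ are coloured \emph{off}. To prevent multiple encodings of the same $G$, I would insist that the off-edges form the constrained Delaunay triangulation of $G$, i.e.\ every off-edge is locally Delaunay with respect to its two adjacent triangles in $T$. The local Delaunay property of an off-edge $(a,b)$ with incident triangles $abc$, $abd$ only asks whether $d$ lies in the circumcircle of $abc$; this is decidable from the two triangles alone, hence expressible by excluding the offending annotated triangles from $L$. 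General position (no four cocircular points) guarantees that the CDT is unique, giving a bijection between $\mathcal{C}$ and the valid $2$-edge-coloured triangulations.

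Next, for each class I would design a polynomial-size annotation system $L$ enforcing the structural constraints. Local constraints such as prescribed vertex degrees in the on-subgraph (perfect matchings: degree $1$; $d$-regular graphs: degree $d$; cycle decompositions: even degree; Hamilton cycles/paths: degree $2$ resp.\ $\le 2$; Euler tours: even degree), or quadrilateral face structure for quadrangulations, can be encoded by annotating each vertex with a partial-degree counter and each edge with its colour in exactly the style used in the proof of Theorem~\ref{thm:CountDegreeConstrainedTriangulations}: triangles are feasible only when all three vertex strings and all three edge strings are mutually consistent. All such counters take values in $\{0,\ldots,n\}$, so $|L|=n^{O(1)}$, and Theorem~\ref{thm:AnnotToAlgo} yields the claimed $n^{O(\sqrt{n})}$ running time.

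The substantial difficulty is handling the classes with global constraints (Hamilton cycles, Hamiltonian paths, Euler tours, spanning trees, connected graphs, and cycle decompositions, which all require some form of component tracking). The plan is to augment the dynamic-programming states of the ring and nibbled-rind subproblems underlying Theorem~\ref{thm:FullPlaneAlgo} with an extra field recording a set partition of the separator vertices describing which of them currently lie in the same on-component of the portion of $G$ already processed, plus, when needed, auxiliary flags such as "current on-degree is $0$, $1$, or $2$'' for Hamilton-type structures. Combining two subproblems along a common separator becomes a standard partition-merge together with an acyclicity/parity check, exactly as in treewidth-based connectivity DPs. Because every separator produced by the algorithm has size $O(\sqrt{n})$, the number of additional partitions is at most $\sqrt{n}^{\,O(\sqrt{n})}=n^{O(\sqrt{n})}$, so the overall running time remains $n^{O(\sqrt{n})}$.

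The step I expect to be the main obstacle is verifying that the canonical layer and path separators of the triangulation framework interact cleanly with both the CDT-based uniqueness argument and with the partition bookkeeping: one needs to argue that on-edges of $G$ are never "hidden'' by the separator in a way that breaks the component tracking, and that the feasibility of off-edges across a separator can still be checked purely locally after the annotations are restricted to each side. Once this compatibility is established, each individual item in the list is obtained by specialising the annotation strings and the connectivity/parity checks in the obvious way.
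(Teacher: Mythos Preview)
Your first two paragraphs are on target and match the paper: encode a graph $G$ as a $2$-edge-coloured triangulation, force uniqueness via the constrained Delaunay triangulation on the off-edges, and handle local constraints such as prescribed on-degrees by vertex/edge annotations of polynomial size. (One small slip: cycle decompositions are just $2$-regular on-subgraphs, so they need no component tracking.)

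Where you diverge from the paper is in the treatment of the global constraints. You propose to open up the dynamic program and carry a set partition of the separator vertices to track on-components, paying a $\sqrt{n}^{\,O(\sqrt{n})}$ factor. The paper does \emph{not} do this; it keeps Theorem~\ref{thm:AnnotToAlgo} as a black box and instead encodes connectivity (and the spanning-tree property) entirely inside a polynomial-size annotation system. The trick is to annotate every vertex $v$ with a pair $(d(v),p(v))$, where $d(v)$ is the graph distance in the on-subgraph from $v$ to a fixed root (the vertex of smallest order label) and $p(v)$ is the neighbour of $v$ of smallest order label among those at distance $d(v)-1$. A triangle is feasible only if these labels are locally consistent (e.g.\ the on-edge $\{v,p(v)\}$ is present, $\Delta$ does not block the segment $v\,p(v)$, no on-neighbour of $v$ with smaller distance has a smaller label than $p(v)$, etc.). Because $(d(v),p(v))$ is determined by $G$, the annotation is unambiguous, and because every vertex has a parent path to the root, the on-subgraph is forced to be connected. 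This yields an annotation system of size $n^{O(1)}$ for connectedness and for spanning trees, so Hamilton cycles, Hamilton paths, and spanning trees all fall out of Theorem~\ref{thm:AnnotToAlgo} directly via the $\oplus$-combination of annotation systems.

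Your partition-on-the-separator plan is plausible and would likely go through with care, but it buys nothing here: it forces you to revisit the internals of the ring/nibbled-ring recursion and to argue the compatibility issues you yourself flag, whereas the paper's root--parent annotation avoids all of that and keeps the reduction purely at the level of designing $L$.
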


% \begin{theorem}[Counting Geometric Structures]\label{thm:CountingGeometricStructures}
%   The following non-crossing structures can be counted in $n^{O(\sqrt{n})}$ time on a set of $n$ points in the plane: the set of all graphs, perfect matchings, cycle decompositions, Hamilton cycles, Hamiltonian paths, Euler tours, spanning trees, $d$-regular graphs, and quadrangulations.
% \end{theorem}

We would like to emphasize that the proof of
Theorem~\ref{thm:CountingGeometricStructures} uses the algorithm of
Theorem~\ref{thm:AnnotToAlgo} as a black box. Thus these results can be
proved in a completely formal way without the need for revisiting the
details of the proof of Theorem~\ref{thm:AnnotToAlgo}.  
In addition to the actual algorithms presented in the paper, we consider our second
main contribution to be the development of the framework of annotated
triangulations and demonstrating its flexibility in modeling other
problems.

\subsection{The Key Ideas}
% The complete proof can be found in the full online version. We want to use this extended abstract to present the key ideas of the algorithm in level of detail that gives the reader a good understanding of the algorithm without indulging into the nuances of the technical details.
% Before we do this, we want to give the reader the essence in an even more condensed high level form.
% 
%  Here we give the key ideas in a very condensed form. 

Our algorithm is based on separators similar to almost all previous algorithms. There are two major differences:  
the separators are defined not in terms of the input (point set), but in terms of the output (triangulation). This gives us higher flexibility so that
we are not restricted to use one kind of separator, but are able to design an algorithm scheme with \emph{two} types of separators.

Given a triangulation with outerplanar index $k>\sqrt{n}$, there exists a cactus layer of size at most $\sqrt{n}$  by the pigeonhole principal, see Figure~\ref{fig:Peeling}. This cactus layer is an ideal candidate for a separator. As cactus layers are nested and separate the inner from the outer part completely. In order to make them canonical, we choose the \emph{outermost} small cactus layer and \emph{peel it off}.
In subsequent recursions, we have to ensure that we count only triangulations, where this was indeed the outermost small cactus layer. This can be done by guessing all possible sizes of all layers that have smaller index. 

\begin{figure}[t]
 \centering
 \includegraphics{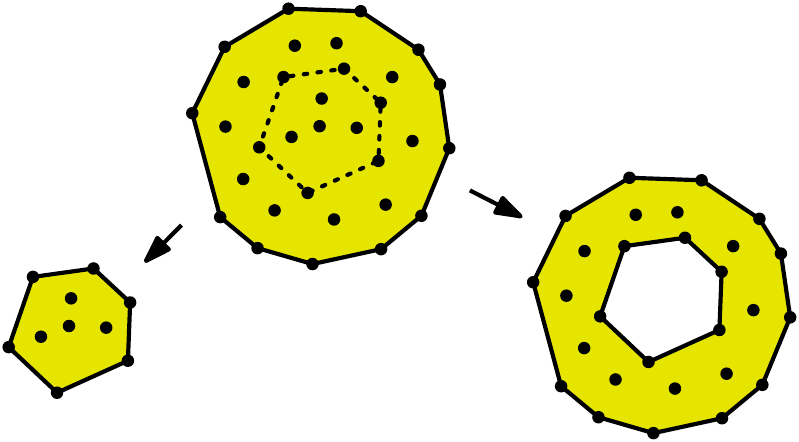}
 \caption{Cactus layes are ideal separators, as they separate in a simple way the inside from the outside and we can easily \emph{peel off} large layers from the outside.}
 \label{fig:Peeling}
\end{figure}

In case that the outerplanar index $k$ is below $\sqrt{n}$, 
 there exists a path of length at most $k-1$ from any vertex to the outermost layer, see Figure~\ref{fig:Nibbling}. (The idea is that every vertex either is adjacent to the outer face or has a neighbor with smaller index.)
If done correctly, these paths can be used as a separator within a well designed dynamic programming scheme. Anagnostou and~Corneil~\cite{anagnostou1993polynomial} demonstrated this for finding a minimum weight triangulation,  using onion layers instead of cactus layers, but the algorithmic idea is essentially the same.
Alvarez, Bringmann, Curticapean and Ray showed how to make these separators canonical and thus suitable for counting problems~\cite{DBLP:journals/dcg/AlvarezBCR15}, by giving each vertex a fixed distinguished rank and always choose the next vertex on the separator-path with the smallest available rank.

\begin{figure}[t]
 \centering
 \includegraphics{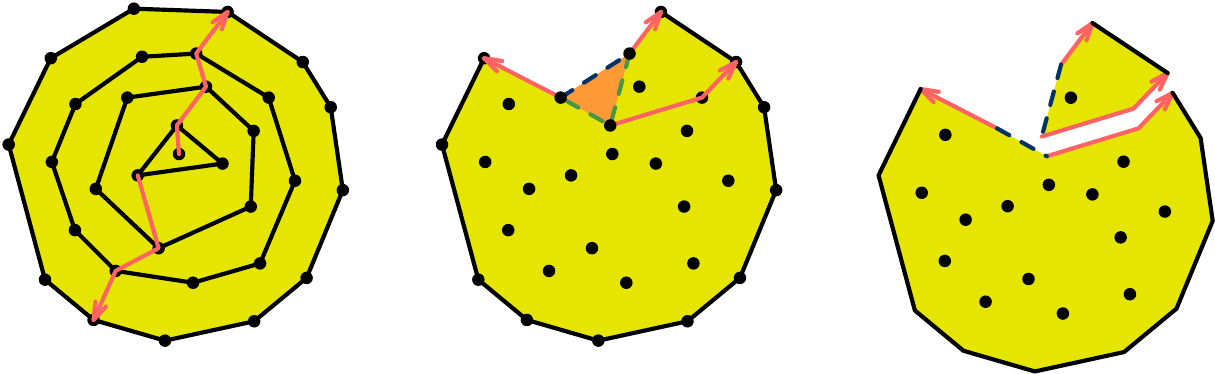}
 \caption{Paths from the interior to the outside are short separators, if the triangulation does not have too many layers.
 The resulting dynamic programming scheme is a little unintuitive, as it \emph{nibbles off} the algorithmic problem from the outside.}
 \label{fig:Nibbling}
\end{figure}

We use these path separators in an, at first, unintuitive way. 
We fix a special edge $e=uv$ on $\partial CH(S)$ and guess all triangles $\Delta = \Delta(u,v,w)$ and a short path $p$ from $w$ to $\partial CH(S)$. 
In this way, we attain a large and a small subproblem. The triangle $\Delta$ defines special edges for subsequent subproblems. We repeat the procedure for all appearing subproblems till the subproblems are of constant size.
We \emph{nibble off} small bites in each recursion step. Only at later stages larger bites are taken.

The technical and conceptual difficulties come from the need to combine the two different separators into \emph{one} dynamic programming scheme. Note that we might first guess a layer then the sizes of layers with smaller index and thereafter use the path separators as described above.
Thus, we have to define very carefully subproblems for our dynamic programing routine that are specifically designed to work for both separators. 

The runtime bound follows from the size of the separators. Whenever we guess a separator of size $O(\sqrt{n})$, we have at most $\binom{n}{O(\sqrt{n})} = n^{O(\sqrt{n})}$ possibilities.
\subsection{Related Work}\label{sec:RelatedWork}

\begin{figure}[tbph]
\centering
{\small
\begin{tabular}[htbp]{|l|l|l|c|}%\label{tbl:Contribution}
  \hline
 year &  contribution & idea & cite\\
 \hline
 1979 & polynomial time algorithm for polygons & divide and conquer 
 & \cite{gilbert1979new, klincsek1980minimal}\\
   \hline
 1993 & $n^{O(h)}$ MWT & 
 use short legal paths 
 & \cite{anagnostou1993polynomial} \\
   \hline
 1996 & $O(t(P)\textrm{poly}(n))$ enumeration algorithm & reverse search technique 
 & \cite{avis1996reverse}\\
 \hline
 1999 & $O(tr(P)^2)$ time counting algorithm & D\&C + triangulation paths 
 &\cite{aichholzer1999path} \\
   \hline
 2001 & triangulation path enumeration  & reverse search 
 & \cite{dumitrescu2001enumerating} \\
 & + lower bound & & \\
   \hline
 2004 & empirically fast algorithm & generalize polygon algorithm 
 & \cite{ray2004simple} \\
   \hline
 2012 & $n^{O(h)}$ time algorithm  & legal paths unique using labels 
 & \cite{DBLP:journals/dcg/AlvarezBCR15} \\
   \hline
 2013 & $O(tr(P))$ time algorithm & 
 sweep instead of 
 & \cite{DBLP:journals/corr/AlvarezBR13} \\
 &  triangulation path upper bound & divide and conquer  &  \\
   \hline
2013 & $n^{O(\sqrt{n})}$ time algorithm 
& guess $O(\sqrt{n})$ sized separators
& \cite{DBLP:journals/comgeo/AlvarezBRS15}\\
& with $n^{\left(n^{4/3}\right)}$ approximation-ratio  & solve small subproblmes exact & \\
  \hline
 2013 & $O(2^n n^2)$ time algorithm  & sweep using $x$-monotone curves 
 & \cite{DBLP:conf/compgeom/AlvarezS13} \\
   \hline
 2015 & QPTAS & techniques of Adamszek \& Wiese & \cite{DBLP:conf/icalp/KarpinskiLS15} \\
 \hline
\end{tabular}
}
\caption{Summary of important algorithms to count triangulations.}
\end{figure}

Many authors have worked on finding efficient algorithms for 
counting all triangulations on a given set of points.
Some algorithms have been developed with the purpose of 
finding a minimum weight triangulation, but it turned out
that they can be used or at least adapted to also count
triangulations.

We found it very interesting to observe how ideas used to
compute a minimum weight triangulation can actually be used
to count triangulations instead. 
Here, we survey the relevant work on counting and we also 
try to point out where these ideas have been used before
to compute the minimum weight triangulation.

In 1979, Peter D. Gilbert showed in his master thesis~\cite{gilbert1979new} 
how to compute
a minimum weight triangulation of a simple polygon $P$ using divide
and conquer. In 1980, Klincsek~\cite{klincsek1980minimal} 
showed the same result independently.
The idea of the divide and conquer scheme is based
on the observation that any edge $e$ of any triangulation 
has a triangle $\Delta$ adjacent to $e$. 
The algorithm guesses all potential $\Delta$s 
and recurses on the two arising subproblems.
The number of occuring subproblems is bounded by $n^2$ and
on each a linear number of recursion steps is needed.
Thus the running time is $O(n^3)$.
It is easy to see that this algorithm can be used 
as well to count triangulations. 

Only much later, in 2004, Saurabh Ray and Raymund Seidel showed 
how to use this idea as an algorithm for counting triangulations
for points in the plane~\cite{ray2004simple}. 
While their algorithm seems to be reasonable fast in practice, 
they did not supply a run time analysis.
However, they observed a runtime of $\sqrt{\mathrm{t}(S)}$ empirically,
where $\mathrm{t}(S)$ is the total number of triangulations of $S$, a set of 
$n$ points in the plane.

In 1996, the first enumeration algorithm for counting triangulations 
was published by David Avis and Komei Fukuda~\cite{avis1996reverse}.
They developed the so called \emph{reverse search technique} and applied it
to numerous problems. The runtime of the enumeration algorithm on a set of points $S$ is $t(S)\,n^{O(1)}$.
To see how it works for enumerating triangulations note that the
\emph{flip graph} $G$ of the triangulations is connected.
Let $\Delta_1$ and $\Delta_2$ be two adjacent triangles  
and assume that their union $\square$ is convex. 
Then a \emph{flip} is a replacement of the shared edge by the 
other diagonal of $\square$.
The vertices of the flip graph correspond to the set of triangulations
and two triangulations are adjacent if one can be attained from
the other by a flip.
The strategy of the reverse search technique is to identify
a rooted spanning tree $\tau\subseteq G$ and traverse it.
The root of $\tau$ is the Delaunay triangulation of the underlying point
set. It is known that any sequence of \emph{Lawson-flips} will turn
any triangulation to the Delaunay triangulation. This defines an
acyclic graph with the Delaunay triangulation as unique sink.
To make the outdegree of each triangulation $1$, we give an order
to all potential edges and we always flip the Lawon-edge with
highest priority. This defines $\tau$. In order to traverse the tree
in a depth first manner, the constrained Delaunay triangulation is
needed to keep track of edges, which we do not want to flip.

In 1993 Efthymios Anagnostou and Derek Corneil presented an algorithm 
to compute the minimum weight triangulation of a given point set 
in $O(n^{3h+1})$ time~\cite{anagnostou1993polynomial}. 
Here $h$ denotes the number of \emph{onion layer}s of the given point set.
% 
% Given a set of points $S$ in the plane, we denote by $\partial CH(S)\subseteq S$
% the set of all points on the boundary of the convex hull of $S$.
% We define $S_0 = S$ and $S_{i+1} = S_i \setminus \partial  CH(S_i)$.
% The \emph{onion layer}s $H_i$ are defined as $\partial CH(S_i)$.
% The number $h$ is the smallest number such that $H_h = \varnothing$.
% A path is called \emph{legal path} if the successor of each vertex is on
% an onion layer further outside.
% 
The idea of their algorithm is to use \emph{legal paths} from the most inner onion layer to the boundary of the convex hull. The paths are required to visit each layer at most once. These paths can be used as separators.
It is easily seen that there are at most $O(n^h)$ many such paths.
Each subproblem in their dynamic programming approach is defined
by two legal paths and they use one legal path to split their subproblem.
This yields the bound on the running time.
Ketan Dalal \cite{DBLP:journals/rsa/Dalal04} showed in 2004 that
the expected number of onion layers is $\Theta(n^{2/3})$. 
This makes their algorithm subexponential for random point sets on average.

19 years later in 2012 Victor Alvarez, Karl Bringmann, Radu Curticapean and Saurabh Ray
presented an algorithm with the same idea, which could also count the
number of triangulations~\cite{DBLP:journals/dcg/AlvarezBCR15}. The problem of the previous algorithm by
 Anagnostou and Corneil was, that it might potentially over count. 
To prevent this, they need to make sure that for any vertex and 
any triangulation a unique legal path is defined. To do this they gave
every vertex a label and always choose the vertex with smallest label to 
extend their legal path.
With a refined analysis they could show an upper bound of $O(c^n)$ in the 
worst-case, with $c \approx 3.1414 $.

In 1999 Oswin Aichholzer attacked the problem of counting
triangulations from a different angle by introducing 
the concept of a \emph{triangulation paths}~\cite{aichholzer1999path}. 
Given a triangulation $T$ and a line $\ell$ intersecting this triangulation
a triangulation path is a sequence of segments of $T$ 
twining around $\ell$ with some additional technical 
conditions that we do not mention here. 
The remarkable property of the triangulation path is that 
it is unique and thus eligible to be used as 
a potential separator for divide \& conquer. 
The algorithm guesses all potential paths and recurses 
 on all arising subproblems.
We denote by $tr(S)$ the number of all potential 
triangulation paths of $S$ and by $tr(n)= \max_{|P|=n}{tr(P)}$. 
Aichholzer showed that the running time $T(n)$ adheres the recursion 
$T(n) = O(tr(n)T(n/2))$ and thus solves to 
$O(tr(n)^2)$. 
Thus the running time of Aichholzers algorithm
is bounded by $O(tr(n)^2)$.

In 2001, Adrian Dumitrescu, Bernd G{\"a}rtner, Samuele Pedroni and Emo Welzl
reduced the hope that triangulation paths can be used for efficient algorithms~\cite{dumitrescu2001enumerating}.
They provide a simple lower bound example showing $tr(n)\geq 4^{n-\Theta(\log n) }$.
On the positive side they apply the reverse search technique 
to enumerate all triangulation paths efficiently.
They used again the constrained Delaunay triangulations.

Despite hope being diminished, in 2013, Victor Alvarez, 
Karl Bringmann, and Saurabh Ray
presented an algorithm that could count triangulations 
in $O(tr(n))$~\cite{DBLP:journals/corr/AlvarezBR13}. This is a great improvement over Aichholzers algorithm. Their main idea is to use \emph{sweeping} instead of 
divide and conquer. This idea is actually more interesting than 
the running time, as it was later employed for a more efficient algorithm.
Their algorithm is technically non-trivial as it demands
structural insight to triangulation paths in order to identify 
all potential successors of a triangulation path.
They also provide an upper bound of $O(9^n)$ on $tr(n)$.

All these algorithms were subsumed in 2013 by Victor Alvarez and Raimund Seidel~\cite{DBLP:conf/compgeom/AlvarezS13}, winning the best paper award 
on the Symposium of Computational Geometry. 
Their algorithm has a worst-case running time of $O(2^nn^2)$.
Remarkable is their algorithm, because it was 
the first algorithm that provably always counted triangulations
faster than the number of triangulations itself. Further the algorithm
is simple. 
Their idea is to use $x$-monotone curves as separators together
with sweeping. It is easy to see that the number of $x$-monotone
curves on $S$ is bounded by $2^n$. 
Each $x$-monotone curve $c$ defines the subproblem of computing
the number of triangulations below $c$. 
The interaction between subproblems is elegantly encoded
into an algebraic shortest path problem.

% All algorithms presented so far had the remarkable property that 
% they do not just count triangulations, but can be simply modified
% to also solve all kind of 'decomposable' problems. 
% This was pointed out already by several researchers.
% % Aichholzer~\cite{aichholzer1999path}
% % and later again by Alvarez and Seidel~\cite{DBLP:conf/compgeom/AlvarezS13}. 
% We will formalize this in Section %~\ref{sec:decomposable}
% .

In 2013 Victor Alvarez, Karl Bringmann, Saurabh Ray and Raimund Seidel 
presented a simple approximation
algorithm~\cite{DBLP:journals/comgeo/AlvarezBRS15}. 
The running time of their algorithm is $n^{O(\sqrt{n})}$.
It has an approximation factor of $n^{O(n^{4/3})}$.
This is huge but as our aim is to approximate a number
of exponential size it might be not too bad actually.
The idea is to use 
simple cycle separators originally from
G.L. Miller~\cite{DBLP:conf/stoc/Miller84,DBLP:journals/jcss/Miller86} 
and later improved by H.N. Djidjev and 
S.M. Venkatesan~\cite{DBLP:journals/acta/DjidjevV97}.
of size $O(\sqrt{n})$.
Here any triangulation can be counted many times, but
the total over counting can be bounded because after
a certain number of rounds the arising subproblems are 
so small, that they can be solved fast by an exact exponential time algorithm.
In a recursive call of the algorithm all potential 
$n^{O(\sqrt{n})}$ separators are guessed. As there 
is no unique balanced separator for each triangulation 
every triangulation might be counted at most $n^{O(\sqrt{n})}$ times.
To bound the approximation factor, the algorithm computes
the number of triangulations exactly as soon as the subproblems
get small enough.
This algorithm is also able to compute the minimum weight
triangulation and has roughly the same runtime as previous algorithms  
in the worst case~\cite{DBLP:conf/wg/KnauerS06, DBLP:conf/cccg/Lingas98}.

In 2015, an approximation scheme was presented by
Marek Karpinski, Andrzej Lingas, and Dzmitry 
Sledneu~\cite{DBLP:conf/icalp/KarpinskiLS15}.
Their running time is quasi-polynomial that is
$2^{( \log n/\varepsilon )^{O(1)}}$.
The approximation factor is of the form $2^{\varepsilon n}$.
Thus, this algorithm has a better running time, but a
worse approximation factor.
They employ the technique developed by 
Adamszek and Wiese~\cite{DBLP:conf/focs/AdamaszekW13}.
The technique main idea is to use separators, that are very small
and thus all of these separators can be guessed in 
quasi polynomial time. The separators have the disadvantage
that certain solutions are not accounted for.

A similarity of almost all algorithms is that they use 
some kind of separator for a dynamic programming/divide \& conquer scheme.
In 1979, Richard J. Lipton and Robert Endre Tarjan described 
in their seminal paper a simple algorithmic paradigm that
has been employed countless 
times~\cite{lipton1979separator,DBLP:journals/siamcomp/LiptonT80}:
``Three things are necessary for the success and efficiency of
divide-and-conquer: (i) the subproblems must be of the same type as the original and
independent of each other (in a suitable sense); (ii) the cost of solving the original
problem given the solutions to the subproblems must be small; and (iii) the subproblems
must be significantly smaller than the original.''
Let us call this the Lipton-Tarjan-paradigm.
It turns out that (iii) can be replaced by another property.
Namely, that we can bound all potential subproblems that can occur. 
 Denote by $s(n)$ the number of all potential separators.
 If each subproblem is defined by at most
 $c$ separators, the number of potential subproblems is 
 bounded by $s(n)^c$ and the running time is bounded by
 $T(n)\leq s(n)^{c+1}$.
 This strategy requires that each subproblem is defined by at 
 most a constant number of separators.
 It goes particularly well with sweeping as 
 every subproblem is defined by only one separator.

The algorithm
in~\cite{aichholzer1999path,DBLP:journals/comgeo/AlvarezBRS15} follow
the Lipton-Tarjan-paradigm; 
the algorithms in~\cite{DBLP:journals/dcg/AlvarezBCR15, DBLP:journals/corr/AlvarezBR13, DBLP:conf/compgeom/AlvarezS13, anagnostou1993polynomial,
gilbert1979new, klincsek1980minimal} 
follow the second approach.

Regarding the separators used for 
the exact algorithms, we want to mention that
they are designed in a way that each triangulation admits exactly one such separator. To guarantee uniqueness often additional work is required.
To illustrate the importance of uniqueness, recall
that this was the main property of triangulation paths 
proved by Aichholzer~\cite{aichholzer1999path}.
We cannot think of a principal reason, why it should not be
possible that any kind of separator can be made unique.

Most algorithms that are able to count triangulations can also be adopted
to count other kind of geometric structures. 
Studied structures are the set of all potential geometric graphs, 
perfect matchings, spanning trees, cycle partitions, convex partitions and spanning cycles.
This was shown for spanning trees, perfect matchings and spanning cycles by 
Alvarez, Bringmann, Curticapean and Ray in~\cite{DBLP:journals/dcg/AlvarezBCR15} and again by 
Alvarez, Bringmann and Ray~\cite{DBLP:journals/comgeo/AlvarezBRS15}.
In 2014, Manuel Wettstein showed how to adapt the algorithm 
by Alvarez and Seidel to count all the structures mentioned above, see~\cite{WettsteinMasterThesis, DBLP:conf/compgeom/Wettstein14}

The basic strategy is to 
use annotations on the separators. These annotations give 
additional information on the kind of structures to be counted.
The approach of Wettstein uses the specific structure of 
the separators of the algorithm by Alvarez and Seidel,
namely being $x$-monotone curves.
The approach by the other group of authors is more general
and can be applied more broadly.
The first idea is that each geometric structure $\mathcal{X}$ is in one to one 
correspondence to the constrained Delaunay triangulations containing ${\cal X}$.
Thus it is sufficient to count those constrained Delaunay triangulations.
In order to be able to guarantee that two constrained Delaunay triangulations
compose to one, the separators are made 'fat'.

Up to date there exist no lower bounds on the counting problems mentioned so far.
However, as all the afore mentioned algorithms are also able to solve 'decomposable'
problems, we review lower bounds for them.

The most prominent decomposable problem is that of finding a minimum 
weight triangulation. It was a major break-through in 2006,
when Wolfgang Mulzer and G\"{u}nter Rote presented their,
by now famous, NP-hardness proof~\cite{DBLP:journals/jacm/MulzerR08}.
Their proof needed computer assistance, to check that wires and other gadgets
worked as intended. 
The major insight is that the $\beta$-skeleton is always part of a 
minimum weight triangulation, for certain values of $\beta$. 
With this in mind, it is possible to construct 
point sets that enforce certain edges to form 'tunnels' and 'walls'.
Eventually, this becomes complex enough to build suitable
gadgets, to encode any planar positive $1$-In-$3$-SAT formula.

Another, interesting problem is that of computing a triangulation in
case that only a subset of the edges are eligible and the others are forbidden.
It was shown by Errol Lynn Lloyd in 1977 that this problem is NP-hard~\cite{DBLP:conf/focs/Lloyd77}.
This result was strengthened to \#W[2]-hardness, by Alvarez, 
Bringmann, Curticapean and Ray in case that 
the input point set consist of $k$ onion-layers~\cite{DBLP:journals/dcg/AlvarezBCR15}.

%%% Local Variables:
%%% mode: latex
%%% TeX-master: "main2"
%%% End:

\section{Ring Subproblems}

Our algorithm is based on dynamic programming: we define a large number of subproblems that are {\em more general} than the problem we are trying to solve. We generalize the problem by considering {\em rings:} we need to triangulate a point set in a region between a polygon and a cactus. Additionally, we may have layer-constraints prescribing that a certain number of vertices should appear on certain layers.

In this section we present the definition of the ring subproblems
used by the algorithm and show how an algorithm that can solve those
problems implies Theorem~\ref{thm:FullPlaneAlgo} and
Theorem~\ref{thm:AnnotToAlgo} for counting triangulations and
annotated triangulations respectively.  

\begin{figure}[t]
  \centering
  \includegraphics{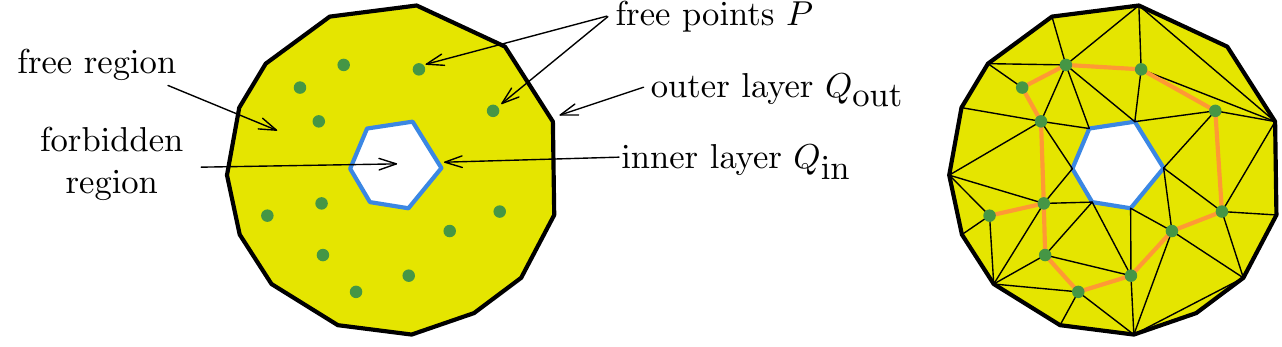}
  \caption{A simple layer-unconstrained ring subproblem and a valid triangulation.
  The width of the ring subproblem is three 
  and the triangulation has three cactus layers.}
  \label{fig:layerSubDef}
\end{figure}

The following definition formally describes ring subproblems, which serve as the basis of the dynamic programming, see Figure~\ref{fig:layerSubDef} for an illustration.

\begin{definition}[Ring Subproblems]\label{def:LSP}
  A \emph{ring subproblem} $\SP$ consists of the following:
  \begin{description}
   \item[outer layer:] The \emph{outer layer} is the non-crossing union of some simple polygons $Q_\textup{out} = Q_\textup{out}(\SP) = (q_1,\ldots,q_a)$. For clarification, $q_1,\ldots,q_a$ are $a$ different simple polygons. And their union forms a cactus graph.
   This is: we require that all edges of $Q_\textup{out}$ are incident to the outer face of $Q_\textup{out}$.
   We will call \SP\ \emph{simple} if $Q_\textup{out}$ consists of only one polygon.

   \item[inner layer:] The \emph{inner layer} $Q_\textup{in} = Q_\textup{in}(\SP)$ is a cactus   contained in $Q_\textup{out}$. 
   We allow the inner layer to be empty. 

 \item[inner/outer layer index and width:] 
   The \emph{inner(outer) layer index} is some positive integer denoted by
   $\ini{\SP}$ (resp., $\out{\SP}$) and associated with the inner (resp., outer) layer.
     The width of a ring subproblem is 
  defined as $w = w(\SP) = \ini{\SP} - \out{\SP} +1 $.
  The meaning of the width is a little tricky. 
  In case that the inner layer is empty it indicates an \emph{upper} bound on the number of layers to be inserted. Otherwise, if the inner layer is non-empty it gives the \emph{precise} number of layers.
 \item[free region:] The region inside $Q_\textup{out}$ excluding the bounded faces of $Q_\textup{in}$ is the \emph{free region}.
 \item[free points:] A set of points $P$ in the interior 
 of the free region. 
  \item[layer-constraints:] 
    This is just a vector $c = (i_1,\ldots,i_n)$ of length $n$,
    with its entries $i_j\in \{0,1,2,3,\ldots, n\}$.
    We refer to $c$ as the \emph{layer-constraint vector} of $\SP$.
    The $j$-th entry of $c$, denoted by $c(j) = i_j$, 
     indicates that the $j$-th layer should have exactly $i_j$ vertices.
    We assume that  $c(j) = 0$, 
    for all $j \notin [\out{\SP}  , \ini{\SP} ]$.
    (Here, $n$ denotes the size of the underlying point set.) 
  \end{description}
  If no layer-constraint vector is specified, we speak of 
  the \emph{layer-unconstrained} ring subproblem. Otherwise, 
  we speak of a \emph{layer-constrained} ring subproblem.
  Given a layer-constraint vector $c$ and a layer-unconstrained 
  ring subproblem \SP , we define $\SP(c)$ as the layer-constrained 
  ring subproblem appended with the layer-constraint vector $c$. 

For the more general algorithm that is also able to count annotated triangulations, additional informations need to be maintained. 
\begin{description}
 \item[boundary annotation:] Some string $s$ on each vertex and edge of the inner and outer layer. 
 \item[annotation system:] 
 An \emph{annotated} triangle is a $9$-tuple consisting of $3$ points, which form an empty triangle and $6$ strings, one for each vertex and edge of the triangle.
  An annotation system is a list $L$ of \emph{annotated} triangles.
\end{description}  
The size $|L|$ of the annotation system $L$ is defined as the total number of annotated triangles.
We assume that the length of each string is in $n^{O(1)}$.
\end{definition}

\begin{definition}[Valid Triangulation]\label{def:ValidTriangRING}
  Given a ring subproblem \SP, consider a graph $T$ extending the 
  graph formed by $Q_\textup{in}\cup Q_\textup{out}\cup P$.
   The graph $T$ can be decomposed into cactus layers $L_i$ as explained in Section~\ref{sec:Prelim}. 
  Here, we slightly change the indexing, by requiring that the first layer $L_j$ has index $j = \out{\SP}$ and the second layer has index $\out{\SP} + 1 $ and so on. 
  We denote by $d(v)$ the index of each vertex defined in this way.
  We call such a graph $T$ of $\SP$ a \emph{valid triangulation 
  of $\SP$} if the following conditions are satisfied:
  \begin{enumerate}[noitemsep,topsep=3pt,parsep=1pt]
  \item  \label{itm:TriangleRING} All faces in the free region 
  are triangles and there are no edges outside the free region.
   \item \label{itm:CondInnerLayerRING} The graph $L_{j}$ with $j = \ini{\SP}$ is the inner layer $Q_\textup{in}$.
   \item \label{itm:ConstraintsRING} All layer-constraints on the layers are satisfied, that is
    $ |V(L_i)|=c(i)$.
  \end{enumerate}
  We drop the last condition in case that \SP \ is a layer-unconstrained ring subproblem, that is,  \SP \ has no layer-constraints.

  Furthermore, a triangulation comes together with an annotations 
  of the edges and the vertices. 
  \begin{enumerate}[noitemsep,topsep=3pt,parsep=1pt]
  \setcounter{enumi}{3}
   \item \label{itm:FeasibleAnnotationsRING} Each empty annotated triangle $\Delta$ of the triangulation $T$ is in the annotation system $L$. 
  \end{enumerate}
\end{definition}

Condition~\ref{itm:CondInnerLayerRING} 
implies that there are exactly $w$ distinct layers in case that 
the inner layer is non-empty, because of the way we defined the indexing.
In case that the inner layer is empty this condition merely 
implies that there are at most $w$ non-empty layers.
   
   We denote by $t(\SP)$ the number of valid triangulations of $\SP$.
 
\begin{theorem}\label{thm:GeoRingAlgo}
  There exists an algorithm that, given an annotated 
  layer-unconstrained ring subproblem \SP\ on $n$ 
  vertices and an annotation system $L$, computes 
  the number of all valid triangulations of \SP\ in 
  $n^{(11+o(1))\sqrt{n}}\cdot |L|^{(12+o(1))\sqrt{n}}$ time.
\end{theorem}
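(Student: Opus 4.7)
The plan is to give a recursive dynamic programming algorithm over ring subproblems (both layer-constrained and layer-unconstrained) that, at each call, chooses one of two kinds of canonical separators of size $O(\sqrt{n})$ depending on the ``shape'' of the desired triangulations, and then recurses on the resulting ring subproblems. The unified subproblem format from Definition~\ref{def:LSP} is designed precisely so that both types of recursion can be expressed within the same DP table: the outer/inner layers together with the layer-constraint vector let us prescribe exactly which cactus layers must appear, while the boundary annotations let us carry the string information across the cut.

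First I would handle the \emph{layer-rich} case. Given a layer-unconstrained ring subproblem of width $w$, the valid triangulations split according to their number of cactus layers. If this number exceeds $\sqrt{n}$, then among the outermost $\sqrt{n}$ layers there is at least one of size $\leq \sqrt{n}$ by pigeonhole. To canonicalize, I would loop over (i) the sizes $c(\out{\SP}), c(\out{\SP}+1), \ldots$ of the initial layers until the first index $j$ with $c(j) \leq \sqrt{n}$, and (ii) the actual cactus graph realizing layer $j$, which has at most $\binom{n}{\sqrt n} \cdot n^{O(\sqrt n)} = n^{O(\sqrt n)}$ possibilities together with all of its boundary annotations (each choice multiplies by a factor $|L|^{O(\sqrt n)}$). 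Peeling this layer off produces an outer ring subproblem with constrained layers (known sizes, width $j-\out{\SP}+1$) and an inner ring subproblem whose outer boundary is the guessed cactus; the counts multiply, which is sound because the guessed layer is uniquely determined by each triangulation.

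Next, for the \emph{layer-poor} case (width at most $\sqrt{n}$), I would use short path separators in the ``nibbling'' style. Pick a canonical edge $uv$ on the outer boundary of $Q_\textup{out}$, enumerate all empty triangles $\Delta(u,v,w)$ compatible with the annotation system $L$, and then guess a short canonical path $p$ of length at most $w-1$ from $w$ to $\partial Q_\textup{out}$ using the smallest-order-label rule from Lemma~\ref{lem:LayerDistance}; Lemma~\ref{lem:LayerDistance} guarantees that such a path exists and is uniquely defined per triangulation. Removing $\Delta \cup p$ cleaves the free region into a small ``bitten-off'' ring subproblem adjacent to $uv$ and a larger residual ring subproblem whose outer boundary has been updated; the layer-constraint vector and the annotations on the new boundary segments are inherited from \SP\ and from $p$. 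Each such guess has at most $n \cdot n^{O(\sqrt n)} \cdot |L|^{O(\sqrt n)}$ choices.

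The two cases are combined by first checking whether \SP\ is sufficiently wide (width $>\sqrt n$) to force the existence of the small layer, and only then branching on both alternatives; the layer-constraint mechanism of Definition~\ref{def:LSP} is what keeps the recursion sound when a peel followed by a nibble (or vice versa) is performed. The base case is a ring subproblem containing $O(1)$ free points, handled by brute-force enumeration consistent with $L$. For the runtime, I would bound the total number of distinct ring subproblems encountered: each is determined by $O(\sqrt n)$ boundary vertices with strings, $O(\sqrt n)$ edges with strings, a layer-constraint vector with $O(\sqrt n)$ nonzero entries, and the underlying free points. This yields at most $n^{O(\sqrt n)} \cdot |L|^{O(\sqrt n)}$ subproblems, each resolved with the same number of branching choices, giving the claimed bound once the constants $11$ and $12$ are tracked carefully. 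The main obstacle I anticipate is precisely this bookkeeping: verifying that after a peel step the residual inner ring is itself a ring subproblem of the correct form, that the layer-constraint vector transfers correctly across both peel and nibble operations, and that the canonical choices (outermost small cactus layer; smallest-order-label path) are globally consistent so that no triangulation is counted twice nor omitted.
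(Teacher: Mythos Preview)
Your high-level plan matches the paper's two-phase \textsc{Peeling}/\textsc{Nibbling} strategy, and the layer-rich (peeling) case is essentially right. But there are two concrete gaps.

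First, your claim that ``the unified subproblem format from Definition~\ref{def:LSP} is designed precisely so that both types of recursion can be expressed within the same DP table'' is not correct, and this is where the real work lies. After you remove $\Delta\cup p$ in the nibbling step, the two resulting regions are \emph{not} ring subproblems in the sense of Definition~\ref{def:LSP}: their boundary polygon consists of a piece of the old outer layer, one or two separator paths, and a base edge, and these pieces sit at \emph{different} layer indices. A ring subproblem has its outer layer at a single index $\out{\SP}$, so you cannot simply absorb $p$ into $Q_\textup{out}$ without destroying the layer bookkeeping that makes the canonical path well-defined in subsequent recursions. The paper handles this by introducing a second, more elaborate subproblem type (``nibbled ring subproblems'') that explicitly records two boundary paths and a base edge, and defines layers via auxiliary abstract edges connecting matching vertices of the two paths. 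Without this (or an equivalent mechanism), Condition~\ref{itm:CondPathSEC}-style correctness of the order-label path in later steps cannot be verified, and you will either overcount or fail to recurse at all.

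Second, you are missing the step that handles non-simple outer layers. After a peel, the inner subproblem has $Q_\textup{out}$ equal to the guessed cactus $L$, which in general has several bounded faces and hence several simple polygons. Your nibbling description (``pick a canonical edge $uv$ on the outer boundary'') tacitly assumes a single polygon. The paper inserts a \textsc{SplitByComponents} step that factors a non-simple layer-unconstrained ring subproblem into a product over its simple components before either peeling or nibbling is applied; you need the analogue of Lemma~\ref{lem:CorrectCountSimplify} here, and you should note that this is where it matters that the inner subproblem after peeling is layer-\emph{unconstrained} (otherwise splitting the constraint vector across components would cost you).
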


To prove Theorem~\ref{thm:AnnotToAlgo} using Theorem~\ref{thm:GeoRingAlgo}, we start with a ring subproblem where the inner layer is empty and the 
outer layer is the convex hull of the point set. 
There is a technical issue here: the definition 
of the ring subproblem requires a fixed annotation on the outer layer. This means that we can use the algorithm of Theorem~\ref{thm:GeoRingAlgo} only if we try all possible annotations on the boundary, which could be a prohibitively large number of possibilities. Therefore, we use the standard trick of extending the point set to make the convex hull a triangle.

 \begin{proof}[Proof of Theorem~\ref{thm:AnnotToAlgo} by Theorem~\ref{thm:GeoRingAlgo}.]
 We define a ring subproblem $\Or$ such that the triangulations of $S$ and $\Or$ stay in one to one correspondence. 
  Consider  $3$ points forming a triangle $\Delta^+$ containing $S$, see Figure~\ref{fig:TriangleTrick}. The point set $S^+$ is defined as $\Delta^+\cup S$.
\begin{figure}[t]
  \centering
  \includegraphics{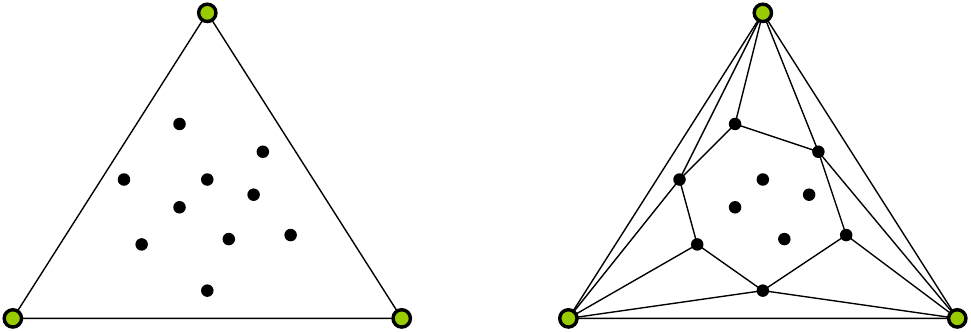}
  \caption{left: The point set $S$ in black and three extra points in green around it. Right: Triangulation $T^*$ of the part outside $CH(S)$.}
  \label{fig:TriangleTrick}
\end{figure}

  We define another annotation system $L^*$ as follows. Fix a triangulation $T^*$ of the region enclosed by $\Delta^+$ and $\partial CH(S)$. 
  Given an edge $e= vw$ of $\partial CH(S)$ and a feasible triangle $\Delta \in L$, with $e \in \Delta$. Let $a_v$, $a_w$ and $a_e$ be the annotation of $v,w$ and $e$ of $\Delta$ respectively. We define $\Delta^*$ as the triangle of $T^*$ incident to $e$ annotated with $a_v$, $a_w$ and $a_e$ for $v,w$ and $e$, and empty strings otherwise.
  To attain $L^*$, we add all triangles of $\Delta^*$ to $L$, as above. 
  Recall that $\mathcal{T}^A({L^*})$ is the set of annotated triangulations with respect to $L^*$. It holds $|\mathcal{T}^A(L)| = |\mathcal{T}^A({L^*})|$.
  With this set up, we are ready to use Theorem~\ref{thm:GeoRingAlgo} by defining an appropriate annotated layer-unconstrained ring subproblem $\Or$. 
%   
%    The way, we use Theorem~\ref{thm:GeoRingAlgo} is to define 
%    an annotated layer-unconstrained ring subproblem $\Or$ 
%    such that each annotated triangulation of $S$ corresponds to 
%    a valid annotated triangulation of $\Or$ and vice versa. 
%    
  Then the algorithm to count all annotated triangulations 
  of $S$ is to count all triangulations of $\Or$. 
  This takes $n^{O(\sqrt{n})}\cdot |L^*|^{O(\sqrt{n})}$ time.
  From $|L^*|\leq 2|L|$ follows the running time bound as stated in the theorem.
  We define $\Or$ as follows.
  \begin{description}[noitemsep,topsep=3pt,parsep=0pt,partopsep=0pt]
   \item[outer layer:] $Q_\textup{out} = \Delta^+$ annotated with empty strings.
    \item[inner layer:]  The inner layer $Q_\textup{in}$ is empty.
   \item[free points:]  The {free points} $P$ are exactly $S$.
   \item[inner/outer layer index] $\out{\Or} = 0$ and $\ini{\Or} = n+1$.
    \item[annotation system:] $L^*$
  \end{description}
  Given a set of points $S$ in the plane, it is easy to verify that $\Or$ is a well defined simple annotated layer-unconstrained ring subproblem. 
  Any annotated triangulation $T$ of $S$ can be extended by $T^*$ to a triangulation $T_\textup{ring}$ of $\Or$. 
  We need to check the Conditions~\ref{itm:TriangleRING},~\ref{itm:CondInnerLayerRING} and~\ref{itm:FeasibleAnnotationsRING} 
  of Definition~\ref{def:ValidTriangRING} to confirm that $T_\textup{ring}$ is indeed a \emph{valid} triangulation.
  Condition~\ref{itm:TriangleRING} merely states that $T_\textup{ring}$ is a triangulation.
  Condition~\ref{itm:CondInnerLayerRING} states that the $(n+1)$-st layer of $T_\textup{ring}$ is supposed to be empty. This is true as no triangulation has more than $(n/3 + 1)$ cactus layers. 
  Note that we don't have to check Condition~\ref{itm:ConstraintsRING} as we have \emph{not} specified a layer-constraint.
  For Condition~\ref{itm:FeasibleAnnotationsRING}, we have to show that each empty triangle of $T_\textup{ring}$ belongs to $L^*$. This is the case for the triangles outside $\partial CH(S)$ by the definition of $L^*$ and $T_\textup{ring}$. For each triangle inside $\partial CH(S)$, it follows from the fact that it was true for $T$ and $L$.
  
  It is straight-forward that two different triangulations $T$ and $T'$ of $S$ induce indeed two different triangulations $T_\textup{ring}$ and $T_\textup{ring}'$ of $\Or$, even if $T$ and $T'$ differ only by their annotations.
  
  At last, we need to show that every valid triangulation $T_\textup{ring}$ of $\Or$ comes from some triangulation $T$ of $S$. The triangulation $T$ comes from restricting $T_\textup{ring}$ to $S$.
\end{proof}

  We can easily derive Theorem~\ref{thm:FullPlaneAlgo} from Theorem~\ref{thm:AnnotToAlgo}. 
  
\begin{proof}[Proof of Theorem~\ref{thm:FullPlaneAlgo} by Theorem~\ref{thm:AnnotToAlgo}]
  Choosing $L$ to be the set of all empty triangles of $S$ annotated by empty strings yields the claim. 
  Every triangle of every triangulation of $S$ is contained in $L$ by definition and thus a valid \emph{annotated} triangulation. Given two annotated triangulations $T_1$ and $T_2$. If $T_1$ and $T_2$ are different as annotated triangulations, then they are also different as \emph{plain} triangulations, as every vertex and edge is annotated with the same empty string.
  
  To see that the constant $11$ is achievable, it is necessary to observe that the algorithm presented in this paper also works if it deals without annotation systems. In this case the multiplicative factor $|L|^{(7+o(1))\sqrt{n}}$ is omitted.
\end{proof}

\section{Thin Rings}\label{sec:ThinRings}

This section presents the proof of the following theorem, which gives an algorithm for solving ring subproblems with a certain width $w$. This algorithm will be invoked by the main algorithm for values $w\leq \sqrt{n}$.

\begin{theorem}\label{thm:FullParaAlgo}
  There exists an algorithm that given a simple (layer-constrained or layer-unconstrained) ring subproblem $\SP$ with width $w$ and $n$ free points, computes the number of all valid annotated triangulations of $\SP$ in time $n^{(5+o(1))w} \cdot |L|^{(6+o(1))w}$.
\end{theorem}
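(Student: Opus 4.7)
\medskip

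\textbf{Proof proposal (plan for Theorem~\ref{thm:FullParaAlgo}).}
The plan is to design a dynamic programming scheme based on the ``nibbling'' idea sketched in the introduction: fix a distinguished edge on the outer layer and, instead of trying to split the ring by a balanced separator, remove one triangle at a time together with a short canonical path going from the newly exposed vertex to the outer boundary. The crucial point is that since $\SP$ has width $w$, by (the ring-analogue of) Lemma~\ref{lem:LayerDistance} every vertex in any valid triangulation has a path of length at most $w-1$ to the outer layer, obtained by repeatedly walking to the neighbor with layer-index one smaller. Using the order labels $l(\cdot)$, we single out such a canonical path at every vertex by always choosing, among neighbors of strictly smaller layer-index, the one with smallest order label. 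These canonical paths, together with the triangle peeled off at each step, are the separators of the algorithm.

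The first step is to introduce \emph{nibbled} ring subproblems as the states of the DP. A nibbled subproblem is a simple ring subproblem augmented with (i)~a distinguished edge $e=uv$ on $Q_\textup{out}$, (ii)~a canonical path $p$ of length at most $w-1$ attached to the outer layer (serving as one ``side cut'' into the ring), and (iii)~a layer-constraint vector restricted to the layers the subproblem still spans; the annotation strings on the boundary are inherited. A recursion step at a nibbled subproblem proceeds by: guessing the third vertex $w$ of the triangle $\Delta(u,v,w)$ incident to $e$ together with its annotation, guessing the canonical path $p_w$ from $w$ to $Q_\textup{out}$ (including its order labels and layer-indices so canonicity can be checked locally), and declaring $\Delta$ to be a feasible annotated triangle of $L$. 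This cuts the subproblem into at most two smaller nibbled subproblems along the new path $p_w$, each of which is again a simple (smaller) nibbled ring subproblem, and we multiply the values returned by each. The layer-constraints are distributed by guessing, for each relevant layer, how many vertices fall on each side of $p_w$; this is an additional choice of at most $n^{O(w)}$ vectors.

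To count without overcounting, we use the canonicity of $p_w$ in exactly the way of Alvarez--Bringmann--Curticapean--Ray \cite{DBLP:journals/dcg/AlvarezBCR15}: for every candidate path we reject all triangulations in which the guessed vertex does not actually have the claimed layer-index or in which there is a neighbor in layer $d(w)-1$ with smaller order label than the one we chose. Because these conditions are local to the guessed triangle, its path, and the boundary of the adjacent piece, they can be verified from the annotations stored on the boundary of the nibbled subproblem and checked incrementally as the recursion proceeds. This is what guarantees that every valid annotated triangulation of $\SP$ is counted exactly once by the top-level recurrence. The base case is a nibbled subproblem with no free points, which is triangulated either trivially or only in the unique way consistent with the canonical paths on its boundary.

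For the running time, a nibbled ring subproblem is specified by its distinguished edge ($n^{O(1)}$ choices), the canonical path on its boundary (at most $n^{w}$ choices since it has at most $w$ vertices each with $n^{O(1)}$ bits of annotation), and the layer-constraint vector restricted to the at most $w$ active layers (at most $n^{O(w)}$ choices), giving $n^{(1+o(1))w}\cdot|L|^{O(w)}$ states in total. Each state is resolved by guessing a triangle from $L$ ($|L|$ choices), a canonical path from the new vertex ($n^{w}$ choices with annotations contributing at most $|L|^{O(w)}$), and the partition of layer-constraints ($n^{O(w)}$ choices), so each transition costs $n^{(1+o(1))w}\cdot|L|^{(1+o(1))w}$. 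Balancing the exponents and accounting carefully for where $|L|$ enters (each path vertex carries one annotated triangle's worth of information, and the top-level guess is itself an annotated triangle) yields the claimed $n^{(5+o(1))w}\cdot|L|^{(6+o(1))w}$ bound. The main obstacle I expect is the bookkeeping of item (iii): keeping the layer-constraint vector and the canonical-path annotations synchronized between the two child subproblems so that (a)~we neither overcount nor miss any partition and (b)~the path remains globally canonical, not just canonical within each piece separately. Once this bookkeeping is set up cleanly, the rest is a standard size-of-states times cost-per-transition analysis.
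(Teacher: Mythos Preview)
Your approach is essentially the paper's: reduce to ``nibbled'' subproblems, peel off the triangle at a base edge, follow a canonical outgoing path from the new vertex to the outer layer, and recurse on the two pieces with the layer-constraint vector split between them. The concrete gap in your formalization is that a nibbled subproblem must carry \emph{two} boundary paths, not one. After the very first split your description still works (each child is bounded by part of $Q_\textup{out}$, the new path $p_w$, and one edge of $\Delta$), but at the next level that child already has $p_w$ as one side; splitting again produces a grandchild bounded by $p_w$ on one side and the newer separator path on the other. In general a nibbled ring subproblem is bounded by a portion of $Q_\textup{out}$, a base edge, and \emph{two} directed paths $p_1,p_2$ from the endpoints of the base edge up to $Q_\textup{out}$, each of length at most $w$. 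The initial transformation from the given ring subproblem takes both paths to have length $0$ (the endpoints of the chosen base edge); a split replaces one of them by the separator path in each child.

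This omission is also why your state count is too small to recover the constants $5$ and $6$. One annotated boundary path contributes $n^{w+O(1)}\cdot|L|^{2w+O(1)}$ possibilities (at most $w+1$ vertices and $w$ edges, each carrying a feasible annotation); with two paths and an $n^{w}$ factor for the constraint vector, the number of states is $n^{3w+O(1)}\cdot|L|^{4w+O(1)}$. The cost per state is one separator path ($n^{w+O(1)}\cdot|L|^{2w+O(1)}$) times the number of compatible splits of the constraint vector ($n^{w}$), i.e.\ $n^{2w+O(1)}\cdot|L|^{2w+O(1)}$, and the product is $n^{(5+o(1))w}\cdot|L|^{(6+o(1))w}$. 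Your worry about global versus local canonicity is well placed; the paper resolves it by making ``$v_{i+1}$ is the neighbor of smallest order label among those with smaller layer-index'' a \emph{validity condition on the boundary paths} of a nibbled subproblem (so a child simply refuses triangulations that violate it), together with a matching ``replaced order-label condition'' on candidate separator paths that is checkable from the boundary of the subproblem plus the guessed triangle and path alone.
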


Theorem~\ref{thm:FullParaAlgo} implies  Theorem~\ref{thm:ThinPlane} in a similar fashion as Theorem~\ref{thm:GeoRingAlgo} implies Theorem~\ref{thm:AnnotToAlgo}.
% 
% We show Theorem~\ref{thm:ThinPlane} as a corollary of Theorem~\ref{thm:FullParaAlgo}.

\begin{proof}[Proof of Theorem~\ref{thm:ThinPlane} via Theorem~\ref{thm:FullParaAlgo}] 
  We describe an algorithm that computes the number $t_{\leq}(S,k)$  of all valid triangulations with outerplanar index at most $k$.
  We can compute the number $t_{=}(S,k)$ of valid triangulations with outerplanar index exactly $k$, by 
  $t_{=}(S,k) \, = \, t_{\leq}(S,k)\, - \, t_{\leq}(S,k-1)$.
  
  We define a layer-unconstrained ring subproblem $\Or$ such that valid triangulation of $\Or$ with outerplanar index $k$ and triangulations of $S$ with outerplanar index $k$ are in one to one correspondence.
%   And every  triangulation of $S$ is a valid triangulation of $\Or$.  
  We describe each component of $\Or$ explicitly.  Then we just compute the number of valid triangulations  of $\Or$ using the \textsc{RingSec} algorithm, as stated in Theorem~\ref{thm:FullParaAlgo}. 
  \begin{description}
   \item[outer layer:] The outer layer $Q_\textup{out}$ is the boundary of the convex hull of $S$.   The \emph{outer layer index} equals one.
    \item[inner layer:]  the \emph{inner layer} $Q_\textup{in}$ is empty,
    and its \emph{inner layer index} is $k+1$.
   \item[free points:]  the \emph{free points} $P$ are the points of $S$ not on $\partial CH(S)$.
    \item[boundary annotations:] Each edge and vertex of the inner layer, the outer layer, the boundary paths and the base edge is annotated with the empty string.
    \item[annotation system:] all possible empty triangles annotated with empty strings are feasible.
  \end{description}
    As we have specified all components, it is clear that 
    $\Or$ is indeed a layer-unconstrained nibbled ring subproblem. 
    The one to one correspondence between triangulations of $S$ and $\Or$ follows easily. Let us only emphasize that Condition~\ref{itm:CondInnerLayerRING} of Definition~\ref{def:ValidTriangRING} ensures that any valid triangulation of $\Or$ has \emph{at most} $k$ layers. As we require the $(k+1)$-st layer to be empty.
%     
%     To get the constant $5$ in the exponent, it is necessary to observe that the algorithm also works without annotation systems.
\end{proof}

We use path-separators for the algorithm in Theorem~\ref{thm:FullParaAlgo}. This requires a yet more
specialized definition of subproblems for our dynamic 
programming scheme: \emph{nibbled ring subproblems}.

\begin{figure}[p]
  \centering
  \includegraphics[]{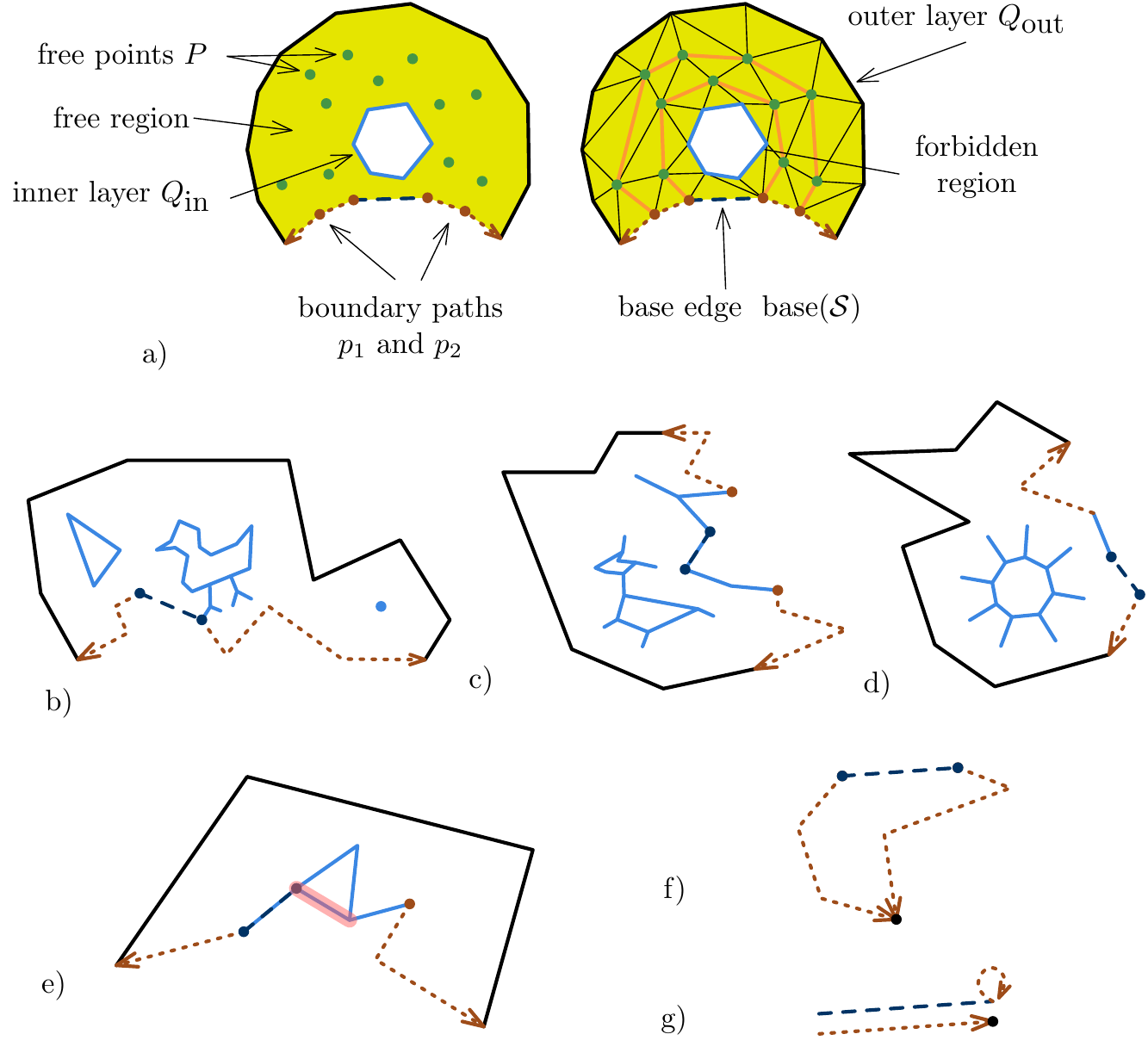}
  \caption{
  a) A nibbled ring subproblem $\SP$ consisting of a base edge $base(\SP)$ depicted in dark blue dashed; 
  two boundary paths, displayed in dotted brown from $base(\SP)$ to the outer layer $Q_\textup{out}$, displayed in black;
  the free region depicted in light yellow;
  containing free points, depicted in green;
  The forbidden region is depicted in white.
  b) The inner layer is a cactus and it may consist of several 
  components.
  It might happen that the base edge shares a vertex with the inner layer.
  c) The special case that the base edge lies on a component $D$ of the 
  inner layer. The boundary paths start, in this situation, 
  from some vertex of $D$. The component $D$ is supposed to have only edges on the free region of \SP .
  d) The special case that a part of the 
  inner layer and the base edge is on the boundary polygon. 
  In this case however, the inner layer and the base edge are edge disjoint.
  e) In this case, the edge marked red is \emph{not} adjacent to the free region 
  and thus this is not a valid nibbled ring subproblem.
  f) The special case that the outer layer consists of a single point.
  g) The special case that \SP\ has an empty free region. 
  The boundary path $p_1$ coincides with the base edge and 
  the outer layer consists of a single point.
  The boundary path $p_2$ consists of a single vertex.
  This case is called degenerate subproblem.}
  \label{fig:SimpleLegalProblem}
\end{figure}

We give a complete self-contained definition of \emph{nibbled ring problems}, see Figure~\ref{fig:SimpleLegalProblem}. As we want the definition to be self contained, we repeat also the parts that are equivalent to ring subproblems. 
These are the width, the inner and outer index, the free region, the free points,  the layer constraints, the boundary annotations and the annotation system.

\begin{definition}[Nibbled Ring Subproblem]
   A \emph{nibbled ring subproblem} $\SP$ can be considered as a plane graph consisting of several components with some extra conditions.
   The edges incident to the outer face form a simple polygon, denoted as \emph{boundary polygon} $Q_\textup{bound}$.
   \begin{description}
   \item[outer layer:] The \emph{outer layer} is a simple non-crossing connected polygonal chain $Q_\textup{out} = Q_\textup{out}(\SP)$ that is a connected subset of the boundary polygon $Q_\textup{bound}$. 
   \item[boundary paths:] Two directed non-crossing paths 
   $p_1(\SP) = u_1\ldots u_b'$ and $p_2(\SP)= v_1\ldots v_b$
   ending at the different end vertices of the outer layer $Q_\textup{out}$. 
   Both paths are part of the boundary polygon and edge-disjoint from the outer layer.
   The length of the two boundary paths must not differ by more than one.
    \item[inner layer:] A cactus $Q_\textup{in} = Q_\textup{in}(\SP)$ called \emph{inner layer}.
    We explicitly allow the inner layer to be empty, but the inner layer index should be always defined. The inner layer may have one or zero components on the boundary, but it is always disjoint from the outer layer and does not share an edge with the boundary paths. To be most precise, the intersection of the inner layer and the boundary must be connected or empty.
   
    \item[inner/outer layer index and width:] 
   The \emph{inner(outer) layer index} is some positive integer denoted by
   $\ini{\SP}$ (resp., $\out{\SP}$) and associated with the inner (resp., outer) layer.
     The width of a ring subproblem is defined as $w = w(\SP) = \ini{\SP} - \out{\SP} +1 $.
  The meaning of the width is a little tricky. 
  In case that the inner layer is empty it indicates an \emph{upper} bound on the number of layers to be inserted. Otherwise, if the inner layer is non-empty, it indicates the \emph{precise} number of layers.

   \item[forbidden regions/free regions:] The bounded faces 
   of the inner layer are called \emph{forbidden region}. 
   The remaining part inside the boundary polygon
   is the free region.
   We demand that all edges of the inner layer are incident to the free region.
%    (Or equivalently, the component of $Q_{inner}$ that is on the boundary is a tree.)
   For an example, where this is violated, see Figure~\ref{fig:SimpleLegalProblem}~d) and the edge that is marked red of the inner layer.
   \item[base edge:] One simple segment called \emph{base edge}
   of \SP, denoted by $b=\textup{base}(\SP)$.
   The base edge is always an edge of the boundary polygon
   and edge-disjoint from the outer layer and the boundary  paths. 
   There is a singular exception. The base edge might coincide with
   one boundary path in case the free region is empty, see Figure~\ref{fig:SimpleLegalProblem}~g).
%    In case the inner layer has an edge on the boundary the base edge
%    must be an edge of the inner layer
    There are three possible configurations of the base edge together with the inner layer, see Figure~\ref{fig:SimpleLegalProblem}~c)~d) and~e).
   \item[free points:] A set of points $P = P(\SP)$ 
   in the interior of the free region. 
%    We require $|P|\leq n$.
   \item[vertices:] The vertices of a nibbled ring subproblem are all the vertices
   of its components together with the free points.
   \end{description}
   In case, the subproblem $\SP$ is clear from the context, we will suppress it in the notation, that is, we will write $Q_\textup{in}$ instead of $Q_\textup{in}(\SP)$ and so on.
  \begin{description}
    \item[layer-constraints:] 
    This is just a vector $c = (i_1,\ldots,i_n)$ of length $n$,
    with its entries $i_j\in \{0,1,2,3,\ldots, n\}$.
    We refer to $c$ as the \emph{layer-constraint vector} of $\SP$.
    The $j$-th entry of $c$, denoted by $c(j) = i_j$, 
     indicates that the $j$-th layer should have exactly $i_j$ vertices.
    We assume that  $c(j) = 0$, 
    for all $j \notin [\out{\SP}  , \ini{\SP} ]$.
  \end{description}
  If no layer-constraint vector is specified, we speak of the \emph{layer-unconstrained} nibbled ring subproblem.
  We denote by $\SP(c)$ the layer-unconstrained nibbled ring subproblem \SP\  appended with the layer-constraint vector $c$.

For the more general algorithm that is also able to count annotated triangulations, additional informations need to be maintained. 
\begin{description}
 \item[boundary annotation:] Some string $s$ on each vertex and edge of the inner layer, outer layer, the boundary paths and the base edge. 
\end{description}
  We should think of vertex and edge annotations as information that are attached to our subproblem, which is used later to impose some constraints on valid triangulations. It being a string is just a way to encode them.
% \begin{description}
%  \item[list of feasible triangles:] A list feasible-triangles$(\SP)$ of all annotated triangles that are \emph{feasible}.
% \end{description}  
% Consider the case that each edge and vertex is annotated with the empty word $\varepsilon$ and all potential triangles are in the list of feasible-triangles$(\SP)$ then no actual restriction is imposed.
% The size of the annotation is defined as $A = | \mbox{feasible triangles$(\SP)$} |$.
% \begin{description}
%  \item[list of feasible annotations:] For each potential edge and vertex that is not yet annotated, an individual list \emph{potential annotations}$(o)$ of strings. Here $o$ can be either a potential edge or a free vertex.
% \end{description}
% Actually the list of potential annotations for a vertex or edge is implicitly given by the list of feasible triangles, by just considering all annotations of feasible triangles with that specific vertex or edge.

\begin{description}
\item[annotation system:] 
 An \emph{annotated} triangle is a $9$-tuple consisting of $3$ points, which form an empty triangle and $6$ strings, one for each vertex and edge of the triangle.
  An annotation system is a list $L$ of \emph{annotated} triangles.
%  \item[list of feasible triangles:] 
%  A \emph{feasible} triangle is a $9$-tuple consisting of $3$ points, which form an empty triangle and $6$ strings, one for each vertex and edge of the triangle.
%   A list $L= L(\SP)$ of \emph{feasible-triangles$(\SP)$} should be supplied.
%   We refer to this list also as \emph{annotation system}.
\end{description}  
Consider the case that all potential triangles annotated with the empty string form the annotation system $L$ then no actual restriction is imposed.

The size of the annotation system is defined as the length of the list and denoted by $|L|$.
From the list of feasible triangles, we can derive for each vertex and edge a \emph{list of feasible annotations}. It is defined by considering all annotations of feasible triangles with that specific vertex or edge.
\end{definition}

\begin{figure}[t]
  \centering
  \includegraphics{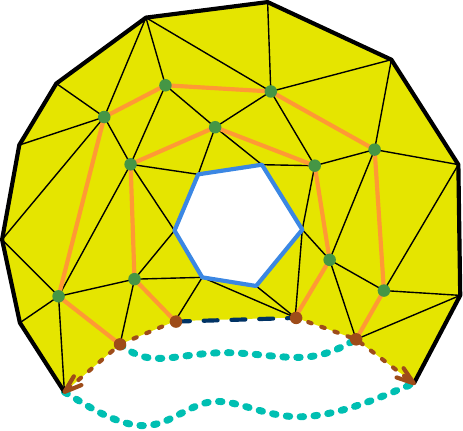}
  \caption{The layers of nibbled ring problems are defined with the help of auxiliary edges (drawn in turquise dotted), which represent the part of the layer of the original triangulation that is already removed (``nibbled'') by previous recursion steps.}
  \label{fig:AuxiliaryEdges}
\end{figure}

\begin{definition}[Valid Triangulation]\label{def:ValidTriangSEC}
 Given a nibbled ring subproblem \SP, consider 
 the boundary paths $p_1(\SP) = u_1\ldots u_{b'}$ and $p_2(\SP)= v_1\ldots v_b$ and assume without loss of generality $b\leq b'$.
 Then, we define $b$ auxiliary abstract edges $(u_{b'},v_b),(u_{b'-1},v_{b-1}),\ldots $, see Figure~\ref{fig:AuxiliaryEdges}. 
 We draw all the edges (not necessarily straight line) in a non-crossing manner so that the outer layer remains incident to the outer face.
 Let $T$ be a triangulation extending the graph formed by 
  $Q_\textup{in}\cup Q_\textup{out}\cup P \cup p_1 \cup p_2 \cup base$.
  We define $T'$ as the triangulation $T$ together with the 
  auxiliary edges, which we just described.
  
  The graph $T'$ can be decomposed into cactus layers $L_i$ as explained in Section~\ref{sec:Prelim}. 
  Here, we slightly change the indexing, by requiring that the first layer $L_j'$ has index $j = \out{\SP}$ and the second layer has index $\out{\SP} + 1 $ and so on. 
  The layers of $T$ are defined by removing the auxiliary edges again.
  We denote by $d(v)$ the index of each vertex defined in this way. It is easy to see that $d(v)$ corresponds to the distance to the outer layer plus the outer-layer index.
  We call such a graph $T$ of $\SP$ a \emph{valid triangulation 
  of $\SP$} if the following conditions are satisfied:
  \begin{enumerate}[noitemsep,topsep=3pt,parsep=3pt,partopsep=0pt]
   \item \label{itm:CondTrianlgesSEC} All faces in the free region are triangles and there are no edges 
   outside the free region.
   \item \label{itm:CondInnerLayerSEC} 
%    The graph $L_{\ini{\SP}}$   
%       is the inner layer $Q_\textup{in}$.
      The graph $L_{j}$ with $j = \ini{\SP}$ is the inner layer $Q_\textup{in}$.
   \item \label{itm:CondConstraintsSEC} All layer-constraints are satisfied, that is
    \[c(i) = |V(L_i)|.\]
  \item \label{itm:CondPathSEC} For any vertex $v_i$ of a boundary path $p$, 
  the successor of $v_i$ ($v_{i+1}$) must satisfy 
   $d(v_i) = d(v_{i+1})+1$.
  Further $v_{i+1}$ must be the neighbor of $v_i$ in $T$ with smallest order label among the neighbors with smaller distance to the outer layer.
  \end{enumerate}
  We will later see that Condition~\ref{itm:CondConstraintsSEC} is always required, as we do not deal with nibbled ring subproblems 
  without layer-constraints.

  Further, a triangulation comes together with annotations of the edges and the vertices. 
  \begin{enumerate}
  \setcounter{enumi}{4}
   \item \label{itm:feasibleTriannglesSEC} Each annotated colored triangle $\Delta$ of the triangulation $T$ is in the annotation system, i.e.\,$\Delta \in L$.
  \end{enumerate}
\end{definition}%Valid Triangulation

We denote by $t(\SP)$ the number of valid triangulations of $\SP$.

\begin{figure}[t]
  \centering
 \includegraphics[width = 0.7\textwidth]{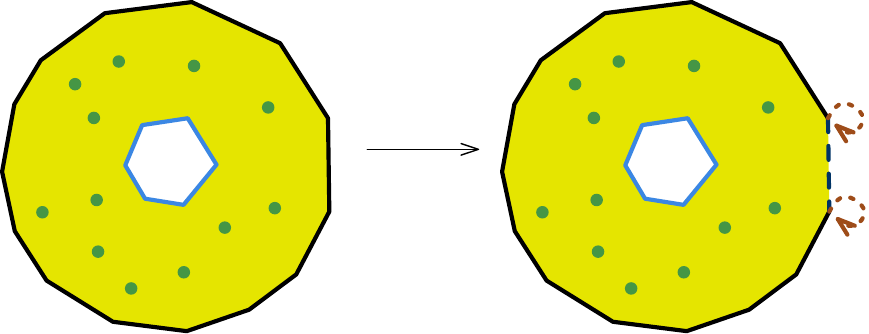}
  \caption{Left: a simple ring subproblem Right: the transformed nibbled ring  subproblem.}
  \label{fig:Transformation}
\end{figure}

In the remainder of this section, we will show the following  result.

\begin{theorem}\label{thm:RingSector}
  There exists an algorithm that
  given a nibbled ring subproblem $\SP$ with width $w$ computes the number of all valid triangulations of $\SP$  in $n^{(5+o(1))w} \cdot |L|^{(6+o(1))w}$ time.
\end{theorem}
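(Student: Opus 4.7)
The plan is a dynamic programming routine over nibbled ring subproblems with memoization. For a subproblem $\SP$ with base edge $b = uv$, if $\SP$ is degenerate (empty free region) we return $1$ or $0$ depending on whether the layer constraints and boundary annotations are consistent. Otherwise, we ``nibble off'' the unique triangle adjacent to $b$ in any valid triangulation: we branch on (i) the third vertex $z$ of the triangle $\Delta(u,v,z)$ incident to $b$, (ii) an annotation of $\Delta$ drawn from $L$, and (iii) a canonical path $p = z, z_1, z_2, \ldots, z_k$ of length $k \leq w-1$ from $z$ to the outer layer $Q_\textup{out}$, where each $z_{i+1}$ is chosen by Lemma~\ref{lem:LayerDistance} as the unique neighbor of $z_i$ with smallest order label among those with $\lind{\cdot}$ smaller by exactly one. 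The triangle $\Delta$ together with $p$ splits $\SP$ into two smaller nibbled ring subproblems $\SP_1$ and $\SP_2$. We further split the layer-constraint vector $c$ between $\SP_1$ and $\SP_2$ by choosing, for each layer index $j$, how many of the remaining $c(j)$ vertices go to each side (the vertices of $\Delta$ and $p$ are first subtracted from $c$ at the appropriate layer indices). Finally $t(\SP) = \sum t(\SP_1) \cdot t(\SP_2)$ over all branches.

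Correctness follows because Lemma~\ref{lem:LayerDistance} and the order labels guarantee that, for every valid triangulation $T$ of $\SP$, the third vertex $z$ and the canonical path $p$ are uniquely determined by $T$. Hence $T$ is accounted for in exactly one branch of the recursion, so $t(\SP)$ is computed correctly. Condition~\ref{itm:CondPathSEC} of Definition~\ref{def:ValidTriangSEC} ensures consistency between the canonical path we guess and the boundary paths of the subproblems $\SP_1, \SP_2$: the new path becomes part of the boundary paths for the two children, and its length (at most $w$) matches the width. The various special configurations of the inner layer relative to the base edge (Figure~\ref{fig:SimpleLegalProblem}c--d, etc.) are handled as separate cases but obey the same canonical recipe.

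For the running time, the number of distinct subproblems that can ever appear in the recursion is bounded by the choices for their defining data: the outer layer (a piece of the original convex hull boundary), the two boundary paths of length $O(w)$ from $n$ vertices (contributing $n^{O(w)}$), the inner layer (at any depth it is a union of at most $O(w)$ pieces of previously guessed cacti, contributing $n^{O(w)}$), the base edge ($n^{O(1)}$), and the layer-constraint vector, whose at most $w$ nonzero entries lie in $[0,n]$ (contributing $n^{O(w)}$). The total subproblem count is therefore $n^{O(w)}$. At each subproblem, the branching enumerates $n$ choices for $z$, at most $n^{O(w)}$ canonical paths, a split of the layer-constraint vector into at most $n^{O(w)}$ pairs, and annotations of the new triangle and of the $O(w)$ triangles needed to attach the path, contributing a factor $|L|^{O(w)}$. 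Multiplying, the work per subproblem is $n^{O(w)} \cdot |L|^{O(w)}$, so with memoization the overall running time is $n^{O(w)} \cdot |L|^{O(w)}$; bookkeeping the exact constants (one factor $n$ per path vertex on each of two paths, one $|L|$ per path edge and per triangle) yields the stated bound $n^{(5+o(1))w} \cdot |L|^{(6+o(1))w}$.

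The main obstacle is the canonicity of the recursive decomposition. It is not enough that the path separator itself be canonical; we must also guarantee that the children $\SP_1$ and $\SP_2$ are themselves \emph{nibbled ring subproblems} in the sense of the definition (boundary paths of nearly equal length, auxiliary edges consistent with the indexing, inner layer touching the boundary in a connected piece, etc.), and that the induced layer indexing on the children matches the one in the original triangulation. Verifying this requires a careful case analysis of where the path $p$ ends on $Q_\textup{out}$ and whether $z$ or intermediate $z_i$ coincide with vertices of the existing boundary paths or with the inner layer; once these cases are handled consistently, both correctness and the running-time bound follow uniformly.
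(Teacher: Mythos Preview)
Your approach is essentially the paper's: dynamic programming over nibbled ring subproblems, branching on the base triangle and a short path to the outer layer, splitting the layer-constraint vector, and memoizing. The running-time accounting is also in the same spirit, and the paper arrives at the same exponent breakdown (roughly $n^{3w}\cdot|L|^{4w}$ subproblems times $n^{2w}\cdot|L|^{2w}$ branches each).

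There is, however, a genuine gap in your correctness argument. You write that for every valid triangulation $T$ the triple $(\Delta,z,p)$ is uniquely determined, ``hence $T$ is accounted for in exactly one branch.'' Uniqueness of the canonical path gives you only that each $T$ is counted \emph{at least} once. What is missing is the converse: for an \emph{arbitrary} enumerated path $p$ (which need not be canonical for any triangulation, since the algorithm does not know $T$ and must enumerate a purely syntactic class of paths), and for any pair of valid triangulations $(T_1,T_2)$ of the two children, you must show that their union $T$ is a valid triangulation of $\SP$ \emph{and} that the canonical path of $T$ is exactly $p$. Without this, two different enumerated paths could produce the same $T$ and you would overcount. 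This is precisely why the paper separates the notion of a \emph{canonical outgoing path} (defined relative to $T$) from a \emph{separator path} (defined relative to $\SP$ alone, with a ``replaced order label condition''), and why Condition~\ref{itm:CondPathSEC} is imposed on the children: it forces the guessed path $p$, now a boundary path of each child, to be the label-minimal outgoing path in $T_1$ and $T_2$, and hence in $T$. Your proposal mentions Condition~\ref{itm:CondPathSEC} only as ensuring ``consistency,'' but its real job is to kill overcounting in the $\geq$ direction of the recursion; you should make that explicit and carry out both directions of the bijection.

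A smaller point: in your subproblem count you charge $n^{O(w)}$ for the inner layer. In the paper's analysis the inner layer of every recursive subproblem is determined by the original $Q_\textup{in}(\SP)$ together with the two boundary paths and the base edge, so it does not contribute an independent factor; the $n^{3w}\cdot|L|^{4w}$ bound comes solely from two annotated boundary paths and one constraint vector.
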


Theorem~\ref{thm:RingSector} easily implies Theorem~\ref{thm:FullParaAlgo}, using a simple transformation from ring subproblems to nibbled ring subproblems, see Figure~\ref{fig:Transformation}.
The algorithm indicated in Theorem~\ref{thm:RingSector} is called \textsc{Nibbling} for later reference. The algorithm indicated in Theorem~\ref{thm:FullParaAlgo} is also called \textsc{Nibbling} as
it only transforms the problem into a nibbled ring subproblem.

\begin{proof}[Proof of Theorem~\ref{thm:FullParaAlgo} by Theorem~\ref{thm:RingSector}]
Given a ring subproblem \SP, we define a nibbled ring subproblem \SP' as follows:
\begin{description}
 \item[base edge:] Pick any edge on $Q_\textup{out}(\SP)$ and define it to be the base edge $\textup{base}(\SP')$ of \SP'.
 \item[outer layer:] Let $Q_\textup{out}(\SP')$ be the simple polygonal chain remaining 
 after deletion of $\textup{base}(\SP')$.
 \item[boundary paths:] Are defined to be the endpoints of the base edge and have length zero.
\end{description}
Layer-constraints, indices, the inner layer, free points, annotations and the annotation system carry over without changes, see Figure~\ref{fig:Transformation}.

\vspace{0.2cm}
\noindent \textbf{Claim}(Transformation)
  Let $\SP$ be a simple ring subproblem and $\SP'$ be its transformation as defined above. Then $t(\SP) = t(\SP')$.
\vspace{0.2cm}

We have to show $T$ is a valid triangulation of 
  \SP\ if and only if it is a valid triangulation of $\SP'$.
The free region, the inner layer and the layer-constraints of $\SP$ and $\SP'$ are the same and thus Conditions~\ref{itm:TriangleRING},~\ref{itm:CondInnerLayerRING} and \ref{itm:ConstraintsRING} of Definition~\ref{def:ValidTriangRING} and~\ref{def:ValidTriangSEC} are equivalent.
Condition~\ref{itm:CondPathSEC} of Definition~\ref{def:ValidTriangSEC} is trivially satisfied as both boundary path have length zero.
Also the condition about feasbile triangles does not change.
\end{proof}

% 
% \till{I think here might be a good point to explain a little more of the algorithm of this section, so that it is possible to motivate the next definition. Essentially explaining Figure~\ref{fig:legalSplitSoCG}.}

    \begin{figure}[tp]
  \centering
  \includegraphics{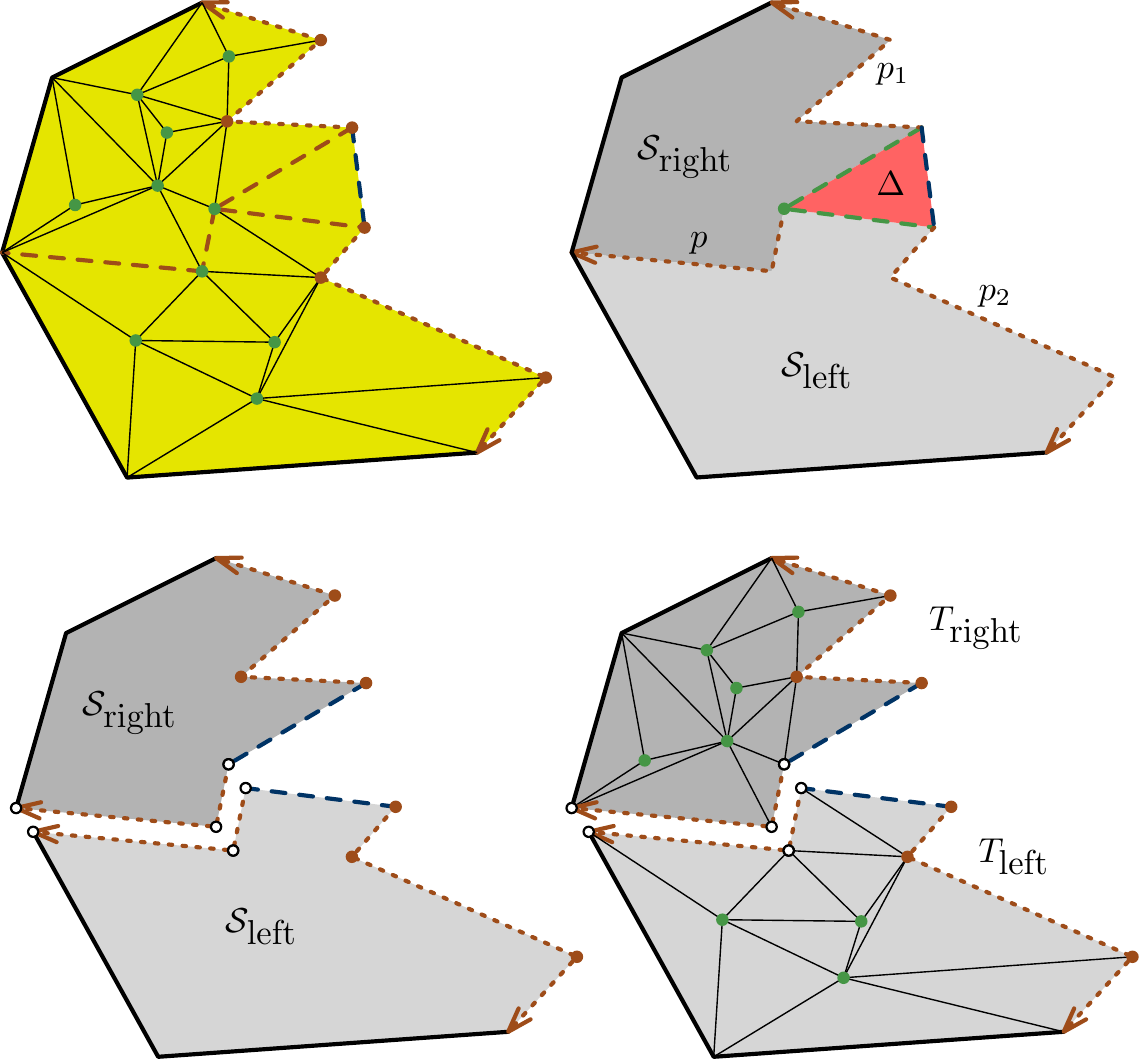}
  \caption{On the top left is a nibbled ring subproblem $\mathcal{S}$ together with a
  valid triangulation $T$. There is a unique triangle $\Delta$ adjacent to the base edges. From the vertex $v$ of $\Delta$ that is not incident to the base edge exists 
  a path $p$ to the outer layer. The triangle $\Delta$ and the path $p$ are drawn dashed brown.
  The path $p$ is uniquely determined, if we always use the vertex with the smallest order label among all available choices. 
  On the top right the nibbled ring problem $\mathcal{S}$ is depicted together with the
  separator path $p$ that splits it into two subproblem $\mathcal{S}_\textup{right}$ and $\mathcal{S}_\textup{left}$.
  At the bottom left, both subproblems  are displayed.
  The three white vertices are shared.
  At the bottom right the restricted triangulations $T_\textup{left}$ and $T_\textup{right}$
  are displayed.}
  \label{fig:legalSplitSoCG}
\end{figure}

\begin{definition}[Degenerate subproblems]
  The subproblems that are not solved recursively are called 
  \emph{degenerate subproblems}, see Figure~\ref{fig:SimpleLegalProblem}~f). 
  To be explicit they consist of
  one base edge, which coincides with one of the boundary paths. 
  The other boundary path has length zero and 
  the outer layer also consists of  a 
single vertex only. Both edges and vertices have some feasible annotation.
  The number of triangulations of an degenerate problem is $1$. 
  It must be checked that each layer has the intended size to see if it is valid.
  (The intended size is given by the layer-constraint vector.)
  All other conditions are always satisfied.
  (Note that this is the only place, where the algorithm actually checks if the constraint condition is satisfied.)
  The number of degenerate subproblems on a set of $n$ points 
  in the plane is quadratic in $n$ and $|L|$.
\end{definition}

For the remainder of this section, we formally define how we split a 
nibbled ring subproblem into two nibbled ring subproblems, see also Figure~\ref{fig:legalSplitSoCG}. As a first step, 
we define separator paths. Their definition is motivated by canonical outgoing paths. Canonical outgoing paths are defined in terms of a
given triangulation $T$. This makes them unsuitable as a separator for 
nibbled ring subproblems, as we do not know the triangulations of our nibbled ring subproblems.
Separator paths are defined only in terms of the nibbled ring subproblem.
However, we need to make sure that every canonical path is also a separator path.
Thereafter, we are ready to define the split of a nibbled ring subproblem $\SP$ into two nibbled ring subproblems 
$(\SP_\textup{left},\SP_\textup{right})$ 
using a separator path $p$. We also have to show how we split the layer-constraints. 
At last we will show that this indeed leads to a recursion 
that counts the number of triangulations correctly.
This boils down to show that a valid triangulation $T$ splits into two valid triangulations  $(T_\textup{left},T_\textup{right})$ and reversely
two valid triangulations $(T_\textup{left},T_\textup{right})$ can be combined to a valid triangulation $T$.

For the following definition, see the top left 
of Figure~\ref{fig:legalSplitSoCG} 
for an illustration.

\begin{definition}[Canonical Outgoing Paths]
\label{def:CanOutPath}
Given a valid triangulation $T$ of a nibbled ring subproblem \SP , we define its \emph{base triangle} $\Delta$ as the unique triangle incident to the base edge. The vertex $v_c$ of the base triangle that is not incident to the base edge is the \emph{base vertex} of $T$.
Then the \emph{\canout path} $p = w_1\ldots w_a$ of $T$ is a directed path in $T$ such that:
\begin{enumerate}[noitemsep,topsep=3pt,parsep=3pt,partopsep=0pt]
  \item \label{Cond:DeltaCOP} The triangle $\Delta$ is empty, feasible and it is formed by $w_1$ and the base edge. 
 \item \label{Cond:ShortPathCOP} The number of vertices  on the path $a$ is bounded by the width $w$. (In short: $a\leq w$.)
 And the last vertex is on the outer layer.
%  \item The last vertex $w_a$ is on the outer layer and the only vertex on the outer layer.
 \item \label{Cond:feasibleCOP} All edges and vertices are annotated with a feasible annotation.
 \item \label{Cond:InnerLayerCOP} Any edge $e$ of $\Delta$ with both endpoints shared with the inner layer must also be on the inner layer.
%  In the special case that the base edge is on the inner layer, it holds that $\Delta$ shares no other vertex with the inner layer.
\end{enumerate}
 There are two more conditions that we separate as they can be only formulated with the help of the underlying triangulation.
 \begin{description}
 \item[outwardness condition:] The distance to the outer layer decreases, when one ``goes''  along the path. More precisely:
 $d(w_{i+1}) = d(w_{i}) -1$, for all $i = 1,\ldots, a-1$.
 \item[order label condition:] For each vertex $w_i$ 
 the vertex $w_{i+1}$ is the neighbor 
 with smallest order label among all 
 neighbors with smaller distance to the 
 outer layer. 
 \end{description}
\end{definition}

Property~\ref{Cond:DeltaCOP} ensures that $\Delta$ is defined as wished.
Property~\ref{Cond:ShortPathCOP} is important for the runtime analysis. It follows implicitly from the outwardness condition 
and the definition of the width.
Property~\ref{Cond:feasibleCOP} is again important for the runtime analysis.
(We only have to guess feasible annotations.)
Property~\ref{Cond:InnerLayerCOP} ensures that the inner layer remains correct.
This is Condition~\ref{itm:CondInnerLayerSEC} of Definition~\ref{def:ValidTriangSEC}.

Every vertex $v$, has some adjacent vertices with lower layer-index according to Lemma~\ref{lem:LayerDistance}. Among all vertices with this property, there is one vertex with smallest order label. Following these edges from the base vertex shows the existence of the \canout path. 
The outwardness condition implies the upper bound on the length of the path.
The order label condition ensures uniqueness.
For later use, we summarize this in the following lemma.
\begin{lemma}\label{cor:UniqueLegalPath}
  Given a valid triangulation $T$ for some nibbled ring subproblem $\SP$. Then there exists exactly one \canout path $p^*(T)$.
\end{lemma}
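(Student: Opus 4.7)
The proof will follow the definition of canonical outgoing path almost step-by-step, using Lemma~\ref{lem:LayerDistance} as the main engine to drive the induction. First I would identify the starting triangle and the starting vertex: since $T$ is a valid triangulation and the base edge lies on the boundary polygon, there is a unique triangle $\Delta$ of $T$ whose third side is the base edge (incident to the free region). Let $w_1$ be the third vertex of $\Delta$. Feasibility of $\Delta$ is immediate from Condition~\ref{itm:feasibleTriannglesSEC} of Definition~\ref{def:ValidTriangSEC}, and the inner-layer compatibility (Property~\ref{Cond:InnerLayerCOP} of Definition~\ref{def:CanOutPath}) is guaranteed because any edge of $\Delta$ with both endpoints on the inner layer would separate the free region from a forbidden region, and by the validity of the triangulation such an edge must in fact belong to $Q_\textup{in}$.

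Next I would extend the path outward by induction, invoking Lemma~\ref{lem:LayerDistance} at each step. Given $w_i$ with $d(w_i) > \out{\SP}$, Lemma~\ref{lem:LayerDistance} (applied to the re-indexed cactus layers of $T'$ introduced in Definition~\ref{def:ValidTriangSEC}) guarantees that $w_i$ has at least one neighbor in $T$ with layer-index $d(w_i)-1$, and that among all such neighbors exactly one attains the smallest order label; set $w_{i+1}$ to be this vertex. The outwardness condition and the order-label condition are then built into the construction. Because the layer-index strictly decreases by exactly one per step and starts at $d(w_1) \le \ini{\SP} - 1$ while ending once it reaches $\out{\SP}$ (a vertex on the outer layer), the number of vertices $a$ on the resulting path is at most $\ini{\SP} - \out{\SP} = w-1 < w$, giving Property~\ref{Cond:ShortPathCOP}.

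For uniqueness, every choice in the construction was forced: the base triangle $\Delta$ adjacent to the base edge is unique in any triangulation, hence so is $w_1$; and the order-label condition combined with Lemma~\ref{lem:LayerDistance} fixes each subsequent $w_{i+1}$ uniquely. Any path satisfying Definition~\ref{def:CanOutPath} must therefore agree with the one constructed above at every vertex, which gives the claim. The only subtlety worth double-checking is the inner-layer Property~\ref{Cond:InnerLayerCOP}, which I would dispatch by the argument sketched above (a $\Delta$-edge with both endpoints on $Q_\textup{in}$ cannot cross the free region in a valid triangulation, so it must lie on $Q_\textup{in}$ itself); this is the step where the geometric content of the nibbled ring subproblem is used and thus the mildest obstacle in an otherwise mechanical argument.
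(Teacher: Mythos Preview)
Your proposal is correct and follows essentially the same line as the paper's own argument (which appears as the paragraph immediately preceding the lemma rather than as a formal proof): invoke Lemma~\ref{lem:LayerDistance} at each step to find a neighbor of strictly smaller layer-index, select the one with smallest order label, and observe that both existence and uniqueness are forced by this construction. One small slip: you assert $d(w_1)\le \ini{\SP}-1$, but the base vertex $w_1$ may in fact lie on the inner layer (cf.\ the configurations in Figure~\ref{fig:SimpleLegalProblem}~c) and~e)), in which case $d(w_1)=\ini{\SP}$ and the path has exactly $w$ vertices; this still meets the requirement $a\le w$ of Property~\ref{Cond:ShortPathCOP}, so the argument goes through unchanged once the bound is stated as $a\le w$ rather than $a<w$.
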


The outstanding property of canonical outgoing paths is that they \emph{separate} one triangulation into two triangulations in a \emph{canonical} way. 
The technical difficulty is that they are defined in terms of triangulations. 
In order to use them as separators, we have to define a similar notion {\em without} reference to a triangulation. 
This is necessary as our algorithm does not know the triangulations of $\SP$ usually.
These first $4$ conditions are identical to the one of canonical outgoing paths, as they do not rely on some underlying triangulation. 
We replace the outwardness condition and the order label condition appropriately.

\begin{definition}[Separator Path of Nibbled Ring Subproblems]
\label{def:SeparatorPath}
  Let \SP\ be a nibbled ring subproblem, $\Delta$ be a triangle incident to the base edge and $p = w_1\ldots w_a$ a path. We say $\Delta$ and $p$ form a \emph{separator path} of $\SP$ if the following conditions are met:
  \begin{enumerate}[noitemsep,topsep=3pt,parsep=3pt,partopsep=0pt]
  \item \label{itm:TrianglCheck} The triangle $\Delta$ is empty, feasible and it is formed by $w_1$ and the base edge. 
  \item The number of vertices  on the path $a$ is bounded by the width $w$. (In short: $a\leq w$.)
  And the last vertex is on the outer layer.
 \item All edges and vertices are annotated with a feasible annotation.
%  \item \label{Cond:InnerLayerEdges} In the special case that the base edge is on the inner layer, it holds that $\Delta$ shares no other vertex with the inner layer.
 \item \label{Cond:InnerLayerSP} Any edge $e$ of $\Delta$ with both endpoints shared with the inner layer must also be on the inner layer.
\end{enumerate}

  Further, we define the \emph{index} of each vertex of $p$ inductively as follows: $\textrm{index}(w_a) = \out{\SP}$ and $\textrm{index}(w_i) =  \textrm{index}(w_{i+1}) +1 $. We define the index in the same manner for the boundary paths. For a vertex $v$ on the outer layer (resp. inner layer) $\textrm{index}(v) = \out{v}$ (resp. $\textrm{index}(v) =\ini{v}$).
  Here, the index  
%   $\textrm{index}(v)$ 
  plays the role of the distance to the outer layer $d(v)$ as defined in Definition~\ref{def:ValidTriangSEC}.
  We denote by $G$ the graph formed by $\Delta$, $p$ and $\SP$.
  We define $N_G(v)$ as the neighbors of $v$ in $G$.
  \begin{description}[noitemsep,topsep=3pt,parsep=3pt,partopsep=0pt]
  \item[replaced outward condition:] The index can be consistently defined for each vertex on the inner layer, outer layer, the boundary paths and the separator path. In particular for the shared vertices. No two adjacent vertices in $G$ differ in their index by more than one. 
   \item[replaced order label condition:] Let $x_i$ and $x_{i+1}$ be two adjacent vertices  on either one of the boundary paths or the separator paths. 
   Then $x_{i+1}$ has the smallest order label in the set 
    $ \{ \, y \in N_G(v): \textup{index}(x_{i+1}) = \textup{index}(y)\}$.
%    
%    \item In case that $\textrm{index}(w_1) = \textrm{index}(v_1) -1$ or 
%    $\textrm{index}(w_1) = \textrm{index}(u_1) -1$ we require that the order label of $w_1$ is larger than the order label of $v_2$ or $u_2$ respectively.
%    \item In case that $p$ shares a vertex with one of the boundary paths, all subsequent vertices must be shared as well.
%    
  \end{description}
  We denote by $\PP (\SP)$ the set of all possible separator paths of $\SP$. 
\end{definition}

When we define separator paths, we have no access to a triangulation we could refer to. In particular we have no function $d(v)$. We replace it by giving each vertex $v$  an index, which essentially plays the role of $d(v)$.
The replaced outwards condition ensures that the index is at least consistent on $G$. 
% The replaced order label condition does the same for the order labels.
We will later see that this is sufficient for the algorithm to work correctly.
Morally, when we never insert an edge that is inconsistent locally, the final triangulation(at the end of the algorithm) is also consistent locally.

\begin{figure}[tp]
  \includegraphics[width = \textwidth]{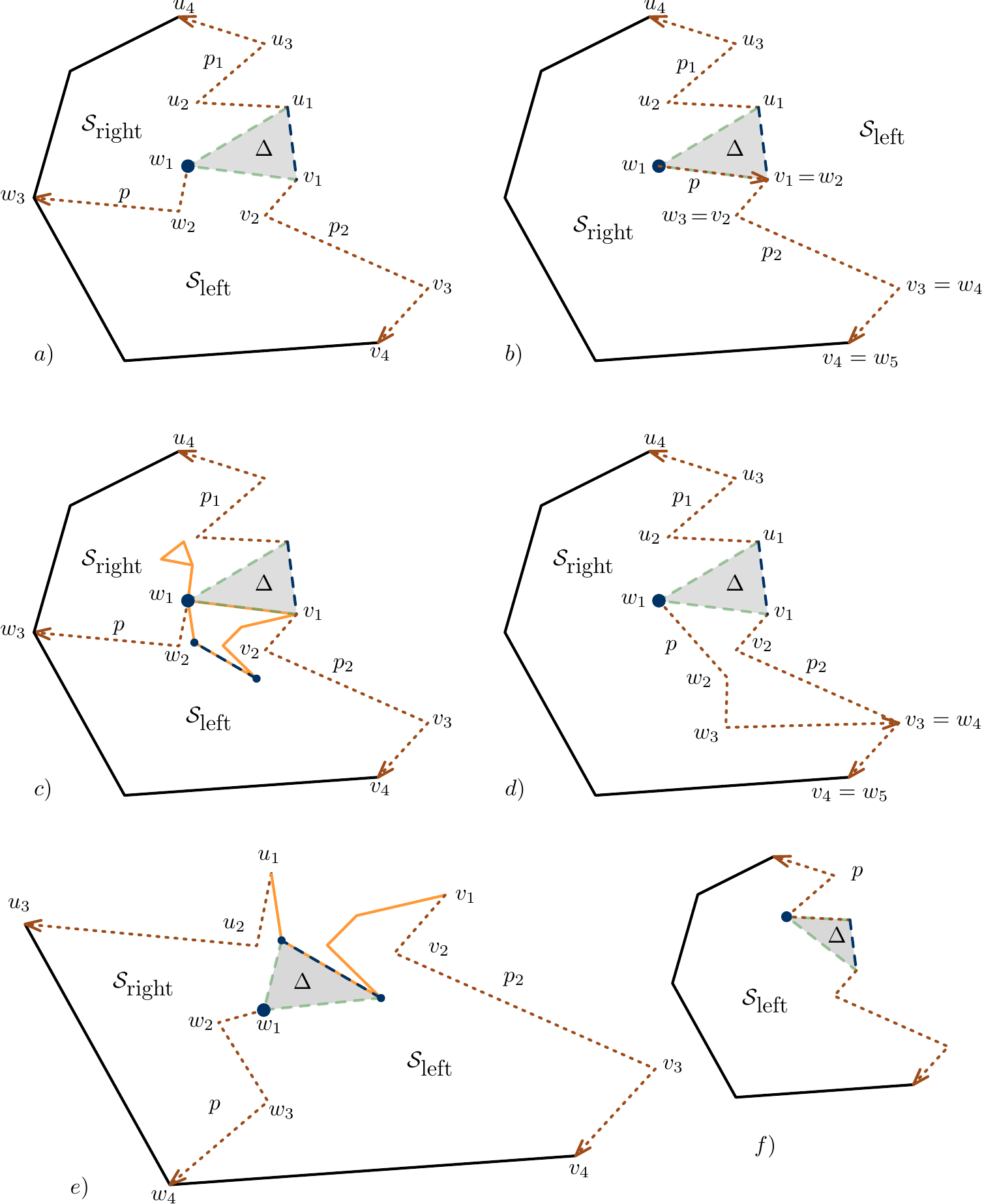}
  \caption{
  a) A split in the common way. i.e., the separator path $p$ uses only free 
points of \SP .  
  b) The separator path $p$ coincides with one of the edges of $\Delta$.
  Thus $\SP_\textup{{left}}$ is an degenerate problem.
  c) One of the edges of $\Delta$ agrees with an edge of $Q_\textup{in}$.
  Therefore we pick any edge of this component of $Q_\textup{in}$ as base edge of 
  $\mathcal{S}_\textup{{left}}$. Also note that we split this component $D$ in this case.
  d) The path $p$ merges at some point with $p_2$. 
  For the subproblem $\SP_\textup{{left}}$ we shorten both boundary paths
  and define their first common vertex as outer layer $Q_\textup{out}$ of $\SP_\textup{{left}}$
  consisting of exactly one vertex.
  e) The subproblems have the base edge adjacent to $Q_\textup{in}$, but not on it.
  f) One side of the triangle $\Delta$ might coincide with one of the boundary paths.
  In this case, the separator path is the separator path shortened by one.}
  \label{fig:legalSplit}
\end{figure}

Note that Condition~\ref{itm:TrianglCheck} is the only place where we actually check if a triangle is feasible. However, this is also the only place where we actually insert triangles.
The important part of this definition is that the set of separator paths include the set of all potential canonical paths.

  The replaced outward condition and the replaced order label condition have some interesting consequences. 
  We will not need to make use of them explicitly in order to show correctness, but we hope to give the reader a better understanding of the replaced order label condition.
\begin{enumerate}[noitemsep,topsep=3pt,parsep=3pt,partopsep=0pt]
 \item The boundary path $p$ has \emph{at most} one vertex on the outer layer.
 \item Assume the separator path $p$ and one of the boundary paths, say $p_1$, share a vertex $v$, then all forthcoming vertices of $p$ and $p_1$ are also shared, see Figure~\ref{fig:legalSplit}~b),~d) and~f).
 \item If the separator path $p$ shares a vertex with the inner layer, 
 it must be the first vertex and $p$ must have length $w-1$ (i.e., $w$ vertices.)
 \item  At last it implies a very technical property. Assume that the start vertex $w_1$ of the separator path and the start vertex $u_1$ of one of the boundary paths are adjacent. Further assume $\textrm{index}(u_1)  = \textrm{index}(w_1)  -1 $. The first consequence is that the separator path is one edge longer than the boundary path. More interestingly, in case that the two paths do not share a vertex, we have that the order label of $w_2$ is smaller than the order label of $u_1$. 
\end{enumerate}

Now, we are ready to show how we split a nibbled ring into two nibbled rings using separator paths.
For the following definitions, see Figure~\ref{fig:legalSplit} for illustrations. Later, we will generate a large number of nibbled ring subproblems arising from them by imposing different layer-constraints on them.

 \begin{definition}[Splitting via Separator Paths]\label{lem:SplitLegalPath}
  Given a path $p= w_1\ldots w_c$ and a triangle $\Delta$ forming a separator path of a nibbled ring  subproblem $\SP$, we can define two layer-unconstrained
  nibbled ring  subproblem $\SP_\textup{left} = \SP_\textup{left}(p)$
  and $\SP_\textup{right} = \SP_\textup{right}(p)$ as follows.
 
  The region bounded by $\SP$ naturally decomposes into $\Delta$, $A_\textup{left}$ and $A_\textup{right}$ as in Figure~\ref{fig:legalSplit}. (Edges can be shared and it is allowed for $A_\textup{left}$ and $A_\textup{right}$ to have no interior points.)
%     $A_{\mbox{right}}$ is the region bounded by $p$, $p_1$, $Q(\SP)$ and the edge $u_1 w_1$ and $A_{\mbox{left}}$ is likewise the region bounded by $p$, $p_2$,
%     $Q(\SP)$ and the edge $v_1 w_1$.	
    We specify the components of subproblem $\SP_\textup{right}$ as follows:
    \begin{description}[noitemsep,topsep=3pt,parsep=3pt,partopsep=0pt]
      \item[boundary paths:] Its boundary paths are $p$ and $p_1$.
      If they are not disjoint only include them up to the first vertex, where they agree.
      \item[base edge:] Its base edge $\textup{base}(\SP_\textup{right})$ is usually the edge $e$ of $\Delta$ that is incident to $A_\textup{right}$.
      
      There is a very specific exception. It might be the case that $e$ is incident to a face $f$ of the inner layer. In this case delete $e$ and choose any other edge on $f$, see Figure~\ref{fig:legalSplit}~c) and~e).
     
%       if this edge happens to lie on some component $D$ of of the inner layer $Q_\textup{in}(\SP)$.
%       In that special case, we choose any edge on $D$ \emph{inside} $A_\textup{\mbox{right}}$
      \item[outer layer:] Its outer layer $Q_\textup{out}(\SP_\textup{\mbox{right}})$ is the part on $Q_\textup{out}(\SP)$ between
      the endpoints of $p$ and $p_1$. Except if $p$ and $p_1$ have a common vertex,
      then only this vertex forms $Q_\textup{out}(\SP_\textup{\mbox{right}})$. Note this vertex might not be contained $Q_\textup{out}(\SP)$.
      In the first case we define the outer layer index by $\out{\SP_\textup{right}} = \out{\SP}$, in the second case the outer layer index is defined by the index of the first common vertex of $p$ and $p_1$.
      \item[inner layer:] Every connected component of $Q_\textup{in}(\SP)$ 
      that lies in $A_\textup{\mbox{right}}$. It might be that one component $D$
      lies in $A_\textup{\mbox{right}}$ and in $A_\textup{\mbox{left}}$. In this case, we split $D$ in the obvious way. Note that exactly one vertex of $D$ is shared by the two subproblems in this case.
      
      It might happen that one edge is not adjacent to the region $A_\textup{\mbox{right}}$,
      see Figure~\ref{fig:legalSplit}~e).
      In this case, we delete this edge. 
      The inner layer index carries over from \SP .
%       \item[indices:] indices carry over.
      \item[free points:] The free points $P(\SP_\textup{right})$ are the 
      points $P(\SP)$ contained in the region $A_\textup{right}$.
      \item[boundary annotations:] Carry over from $\SP$ and the separator path $p$.
      \item[annotation system:] This is exactly the same annotation system as for $\SP$.
      \end{description}

	The subproblem $\SP_\textup{left}$ is defined in the same way, 
	with the replacement of $p_1$ by $p_2$ and $A_\textup{right}$ 
	by $A_\textup{left}$.
    It is easy to check that $\SP_\textup{{left}}$ and $\SP_\textup{{right}}$
    are both layer-unconstrained nibbled ring subproblems.
\end{definition}

      \begin{definition}[Compatible Layer-Constraints]\label{def:PathConstraint}
       Let \SP\  be a nibbled ring subproblem with layer-constraint $c$ and $p \in \PP(\SP)$ a separator path. We define the \emph{layer-constraint vector of $p$}, denoted by $c_p$,
      as follows. Let $p = w_1\ldots w_{a}$ and $p' = w_1\ldots w_{a'}$ be the part of $p$ that is shared by $\SP_\textup{left}$ and $\SP_\textup{right}$, see Figure~\ref{fig:legalSplit}~b),~d) and~f) for examples, where $p \neq p'$. Further let $I$ be the set of indices of the vertices in $p'$. Formally $I$ is defined as $I = \{\, \textrm{index}(v) :  \textup{$v$ is a vertex of $p'$} \,\}$. Recall that the index was defined in  Definition~\ref{def:SeparatorPath}.
      We define $c_p$ as the indicator vector of $I$, to be more explicit: \[c_p(i) =
      \left\{
	\begin{array}{ll}
		1  & \mbox{, if \ }   i \in I  \\
		0 & \mbox{, otherwise.}
	\end{array}
    \right.  \] 
    The purpose of the vector $c_p$ is to ensure that vertices that are shared by the two subproblems are not accounted for twice.
    Further the set of \emph{compatible layer-constraints} $\mathcal{C}(\SP,p)$ is defined by
    \[\mathcal{C}(\SP,p) = \{\, (c_1,c_2) : c_1 + c_2 = c + c_p\mbox{ and } c_i \in \{0,1,\ldots,n\}^n\,\}\]
    \end{definition}

    We denote by $\SP_\textup{\mbox{left}}(p, c_1)$ and $\SP_\textup{\mbox{right}}(p, c_2)$
    the left and right nibbled ring subproblems appended with the layer-constraint vector~$c_1$ and~$c_2$ respectively.

\begin{lemma}[Recursion]\label{lem:CorrectLegalCount}
  Let \SP\ be a nibbled ring subproblem. 
  Then holds
    \[t(\SP) 
   = \sum_{p\in \PP(\SP )} \sum_{(c_1, c_2) \in \mathcal{C}}
   t(\SP_\textup{{left}}(p, c_1))
   \cdot t(\SP_\textup{{right}}(p, c_2)).\]
\end{lemma}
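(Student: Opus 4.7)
The plan is to establish the identity by exhibiting a bijection between
\[
\{T : T \text{ is a valid triangulation of } \SP\}
\quad\text{and}\quad
\bigsqcup_{p \in \PP(\SP)} \bigsqcup_{(c_1,c_2)\in\mathcal{C}(\SP,p)} \mathcal{T}(\SP_\textup{left}(p,c_1)) \times \mathcal{T}(\SP_\textup{right}(p,c_2)),
\]
where $\mathcal{T}(\cdot)$ denotes the set of valid triangulations of a subproblem. The forward map sends $T$ to $(p^*(T), T_\textup{left}, T_\textup{right})$, where $p^*(T)$ is the canonical outgoing path (unique by Lemma~\ref{cor:UniqueLegalPath}) and $T_\textup{left}, T_\textup{right}$ are the restrictions of $T$ to the two sides of the split; the inverse map is the geometric gluing of two triangulations along the shared separator path and base triangle.

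First I would verify the forward map. The triangle $\Delta$ and path $p^*(T)$ of a valid triangulation satisfy conditions \ref{itm:TrianglCheck}--\ref{Cond:InnerLayerSP} of Definition~\ref{def:SeparatorPath} directly from Definition~\ref{def:CanOutPath}, so it remains to check the two \emph{replaced} conditions. Setting $\mathrm{index}(v) := d(v)$ (with the conventions for inner/outer layer vertices) yields a consistent assignment, because $d$ is the graph distance to the artificial vertex $v_\infty$ in $T$ and neighbours can differ in $d$ by at most $1$. The replaced order-label condition for $p^*(T)$ and for the boundary paths follows from the outwardness and order-label conditions in $T$ together with Condition~\ref{itm:CondPathSEC} of Definition~\ref{def:ValidTriangSEC}. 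Restricting $T$ to $A_\textup{right}$ (respectively $A_\textup{left}$) then produces a graph $T_\textup{right}$ (resp.\ $T_\textup{left}$) whose free faces are triangles; the auxiliary edges of Definition~\ref{def:ValidTriangSEC} are placed to mirror the portion of each cactus layer of $T$ lying on the opposite side of $p^*(T)$, so the distance function, hence each layer $L_j$ computed in the subproblem, agrees with that of $T$ restricted to the corresponding side. Defining $c_1(j) = |V(L_j) \cap A_\textup{left}|$ and $c_2(j) = |V(L_j) \cap A_\textup{right}|$, every vertex of layer $j$ of $T$ is counted once unless it lies on the shared part $p'$ of $p^*(T)$, in which case it is counted exactly twice; this is precisely the relation $c_1 + c_2 = c + c_{p^*}$, so $(c_1,c_2)\in \mathcal{C}(\SP, p^*(T))$.

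The reverse direction is the main technical obstacle and where I would spend the most effort. Given a separator path $p$ with triangle $\Delta$, a compatible pair $(c_1,c_2)$, and valid triangulations $T_\textup{left}, T_\textup{right}$ of the two subproblems, glue them along $p$, $\Delta$, and the shared components of $Q_\textup{in}$ (including the one-vertex split in Figure~\ref{fig:legalSplit}~c) to obtain a plane graph $T$ on the points of $\SP$. Triangularity of every free face of $T$ follows from Condition~\ref{itm:CondTrianlgesSEC} applied on each side plus the single triangle $\Delta$. The delicate step is to check that the cactus layer decomposition of the glued $T$ agrees \emph{globally} with the index function defined in the subproblems, that is, that for each vertex $v$ the distance $d_T(v)$ to the boundary equals $\mathrm{index}(v)$. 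This is where the replaced outward and order-label conditions of Definition~\ref{def:SeparatorPath} earn their keep: they force neighbours along $p$, along the boundary paths, and across $\Delta$ to differ by at most one in index; the auxiliary edges used to define layers in each subproblem exactly encode the presence of the other side of the original triangulation, so the shortest path from any vertex to the outer layer in $T$ has the same length as in whichever subproblem contains it. Condition~\ref{itm:CondInnerLayerRING} on both sides, together with Condition~\ref{Cond:InnerLayerSP} on the separator path, ensures that the inner layer $Q_\textup{in}(\SP)$ is reconstituted as layer $\ini{\SP}$ of $T$. For the layer-constraints, summing $|V(L_j) \cap A_\textup{left}| + |V(L_j)\cap A_\textup{right}|$ over all $j$ and subtracting the double-counted shared vertices on $p'$ yields exactly $c(j)$ by the definition of $\mathcal{C}$. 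Feasibility of annotated triangles is a local condition, inherited from the two sides together with Condition~\ref{itm:TrianglCheck} for $\Delta$. Finally, the order-label condition in Definition~\ref{def:SeparatorPath} together with the minimality built into the boundary paths guarantees that the canonical outgoing path of the reconstructed $T$ is precisely $p$, so the inverse map really is an inverse.

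Having both directions, the two maps are mutually inverse, the union on the right-hand side is disjoint (different choices of $p$ or $(c_1,c_2)$ cannot produce the same $T$, since $p = p^*(T)$ and $(c_1,c_2)$ is determined by $T$), and the identity $t(\SP) = \sum_{p,(c_1,c_2)} t(\SP_\textup{left}(p,c_1))\cdot t(\SP_\textup{right}(p,c_2))$ follows immediately.
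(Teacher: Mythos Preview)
Your proposal is correct and follows essentially the same two-direction argument as the paper: split a valid $T$ along its unique canonical outgoing path to get the forward map, and glue $T_\textup{left},T_\textup{right}$ along $p$ and $\Delta$ for the inverse, verifying in each direction that the layer decomposition, layer-constraints, inner-layer condition, and feasibility are preserved. One small imprecision: the fact that the glued $T$ has $p$ as its canonical outgoing path is not a consequence of the \emph{replaced} order-label condition of Definition~\ref{def:SeparatorPath} (which only sees the graph $G=\Delta\cup p\cup\SP$), but of Condition~\ref{itm:CondPathSEC} in Definition~\ref{def:ValidTriangSEC} applied to $p$ as a boundary path of each subproblem---this is what the paper uses, and presumably what you mean by ``minimality built into the boundary paths.''
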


\begin{proof}
``$\leq$'':
    Consider some valid triangulation $T$ of $\SP$ and let $\Delta_c$ be the canonical triangle and $v_c$ the canonical vertex. By Corollary~\ref{cor:UniqueLegalPath} there exists a unique \canout 
    path $p = p^*(T)$ starting at $v_c$. Note that $p$ satisfies Definition~\ref{def:SeparatorPath} and thus can be used as a separator path.
    Thus, it splits $\SP$ geometrically into $\SP_\textup{{left}}$ and $\SP_\textup{{right}}$ as in 
    Definition~\ref{lem:SplitLegalPath}.
    Denote by $T_\textup{{left}}$ and $T_\textup{{right}}$ 
    the triangulations that arise,
    when $T$ is restricted to $\SP_\textup{{left}}$ and $\SP_\textup{{right}}$ respectively.
    Then there exists exactly one pair of layer-constraint vectors $(c_1,c_2)$ with 
    $c_1 + c_2  = c + c_p$
    such that  $T_\textup{{left}}$ and $T_\textup{{right}}$ satisfy $c_1$ and $c_2$ respectively. 
    (Vertices on $p$ are accounted for on $T_\textup{{left}}$ and $T_\textup{{right}}$.)
    It is easy to see that two different valid triangulations $T$ and $T'$ lead two two different pairs of valid triangulations $(T_\textup{{left}},T_\textup{{right}})$ and $(T_\textup{{left}}',T_\textup{{right}}')$
    
    We have to show that $T_\textup{{left}}$ and $T_\textup{{right}}$ are valid triangulations of $\SP_\textup{{left}}$ and $\SP_\textup{{right}}$ respectively.
    It suffices to do this for $T_\textup{{left}}$. We check all conditions of Definition~\ref{def:ValidTriangSEC} explicitly.
    Condition~\ref{itm:CondTrianlgesSEC} follows trivially. 
  Condition~\ref{itm:CondConstraintsSEC} follows from the definition of the layer-constraint vectors $(c_1,c_2)$.
    Condition~\ref{itm:CondPathSEC} follows from the fact the the condition holds for $T$ and the replaced outward condition and the replaced order label condition in the definition of separator paths.
    Condition~\ref{itm:feasibleTriannglesSEC} follows from the fact that every triangle of $T_\textup{left}$ is also a triangle of $T$.

    Condition~\ref{itm:CondInnerLayerSEC} is more intricate.
    We show first by induction that every vertex $v$ in $T_\textup{{left}}$ has the same distance to the outer layer as in $T$.
    This is true for the vertices on the outer layer of $\SP_\textup{left}$.
    Consider the vertex $v$ of $V(T_\textup{{left}})$ with distance $d$ to the outer layer of $\SP$. Then there is a set of other vertices $\{x_1,
    \ldots,x_b \}\subset N_T(v)$ of vertices with smaller distance to the outer layer in $\SP$. 
    At least one of the vertices $x_1, \ldots,x_b$  belongs to $T_\textup{left}$. (Actually, all of them, except $v$ is on the separator path.)
    By the induction hypothesis 
    $x_1, \ldots,x_b$ have  distance $d-1$ to the outer layer in 
    $T$ and in $T_\textup{left}$. Thus $v$ has 
    distance $d$ in $T_\textup{left}$. 
    Thus also all vertices of the 
    inner layer $Q_\textup{in}(\SP_\textup{left})$ 
    have the correct distance to the outer layer.
    
    We also have to show that each edge remains on the layer $L_j$ with $j = \ini{\SP_\textup{left}}$ and no other edge appears on $L_j$.
    We consider first the case that an edge $e = (v,w)$ belongs to $L_j$ of $T_\textup{left}$, but it does not belong to $L_j$ of $T$.
    The only possibility to belong to the $L_j$ in $T_\textup{left}$ is if both endpoints $v$ and $w$ belong to $L_j$ of $T_\textup{left}$
    and thus also $v$ and $w$ belong to the $L_j$ of $T$.
    But there is no edge that has both endpoints on the inner layer of $T$, without also belonging to the inner layer $Q_\textup{in}(\SP)$, as this edge needs to be inside some bounded face of $Q_\textup{in}(\SP)$.
    
%     We consider now the case that $e=(v,w)$ is an edge of  .
%     At first consider the case that $v$ and $w$ are on different layers in $T$. Then $v,w$ are also on different layers in $T_\textup{left}$ and $e$ does not belong to any layer in $T$ or $T_\textup{left}$.
    Now, we assume that $e$ belongs to 
    $Q_\textup{in}(\SP)$ and we want to show that $e$ also belongs to the inner layer of $T_\textup{left}$.
    At first note that $v$ and $w$ also belong to 
    $Q_\textup{in}(\SP)$ and this implies that $e$ is not an edge 
    of the boundary path or the separator path.
    Further there exists a triangle $\Delta = \Delta(v,w,x)$ in $T$ 
    incident to $e$ such that $x$ belongs to layer $i-1$. 
    (This condition is necessary and 
    sufficient for an edge to belong to layer $i$.) 
    It is easy to see that this triangle cannot 
    be the base triangle, see Figure~\ref{fig:legalSplit}~c).
    And thus $\Delta$ is also a triangle of 
    $T_\textup{left}$ and this implies $e$ is also an edge of 
    the inner layer of $T_\textup{left}$.
    From this discussion follows Condition~\ref{itm:CondInnerLayerSEC}.
      
    Thus $T_\textup{left}$ and $T_\textup{right}$ are both valid triangulations.

%     
%     -----------------------
%     

    ``$\geq$'': Fix any separator path $p\in \PP(\SP)$ and 
    any pair of  compatible constraint vectors $(c_1,c_2)$. 
    Further let $T_\textup{{left}}$ and $T_\textup{{right}}$ be valid triangulations
    of $\SP_\textup{{left}}(p,c_1)$ and $\SP_\textup{{right}}(p,c_2)$ respectively.
    The valid triangulations $T_\textup{{left}}$ and $T_\textup{{right}}$ define a triangulation $T$ by taking the union. We show that $T$ is a {\em valid} triangulation of $\SP$ by going through the complete list of requirements 
    in Definition~\ref{def:ValidTriangSEC}.
    
    First we show that a different pair of valid 
    triangulations $(T_\textup{{left}}', T_\textup{{right}}')$ needs necessarily lead to a different triangulation $T'$. For the sake of contradiction assume that $T = T'$. Then there is a canonical outgoing path $p$ of $T$, which can be used as a separator path. As  $(T_\textup{{left}}', T_\textup{{right}}') \neq (T_\textup{{left}}, T_\textup{{right}})$ at least one pair, say $(T_\textup{{left}}', T_\textup{{right}}')$, comes from $(\SP_\textup{{left}}(q),\SP_\textup{{right}}(q))$ with $q\neq p$. Let $v$ be the last vertex still shared by $p$ and $q$. (As the base triangle is the same there exists at least one such vertex.) And let $w$ be the successor of $v$ on $p$. Without loss of generality $w$ belongs to $T_\textup{right}'$ and thus $T_\textup{right}'$ is not a valid triangulation of $\SP_\textup{{right}}(q)$ 
    as Condition~\ref{itm:CondPathSEC} is violated.
    
    Now let us show that $T$ is a valid triangulation of $\SP$ by checking all conditions explicitly.
    It is easy to see that all faces are triangular and no edge lies outside the free region. Thus Condition~\ref{itm:CondTrianlgesSEC} is satisfied.
    Condition~\ref{itm:CondConstraintsSEC} follows as no vertex changes the distance to the outer layer.
    Condition~\ref{itm:CondPathSEC} follows from the definition of the separator path and the fact that Condition~\ref{itm:CondPathSEC} holds for 
    $T_\textup{left}$ and $T_\textup{right}$.
     Let $p' = v_1,\dots,v_a$ be one of the 
     boundary paths of our subproblem $\SP$.
%     In other words $p'$ is $p_1(\SP)$ or $p_2(\SP)$.
%     Denote its vertices by  in this order.
    We have to show the following technical condition:
    For any vertex $v_i$ of  $p'$ holds that
    the successor $v_{i+1}$
    must be the neighbor of $v_i$ in $T$ with smallest order label among the neighbors with smaller distance to the boundary.
    Note that all neighbors of $v_i$ are either in $T_\textup{left}$ or in $T_\textup{right}$ as $v_i$ is a vertex of a boundary path.
    Thus as Condition~\ref{itm:CondPathSEC} holds for $T_\textup{left}$ and $T_\textup{right}$ it also holds for $T$.
%     
% %     
    Condition~\ref{itm:feasibleTriannglesSEC} follows from the fact that each empty triangle of $T$ is a feasible  triangle in either $T_\textup{left}$ or $T_\textup{right}$. Further by Condition~\ref{itm:TrianglCheck} of Definition~\ref{def:SeparatorPath} (separator path) the base triangle is also feasible. 
    
    For Condition~\ref{itm:CondInnerLayerSEC} of Definition~\ref{def:ValidTriangSEC}, 
    we have to show that $Q_\textup{in}(\SP)$ 
    is the layer $L_j$ of $T$ with $j = \ini{\SP}$.
    Note that the distance to the boundary of each vertex of $T_\textup{left}$ and $T_\textup{right}$ is the same as in $T$. 
    Every vertex $v$ shared by $T_\textup{left}$ 
    and $T_\textup{right}$ has the same distance 
    to the boundary $d(v)$ due to the enforced 
    distances by the indices on the  separator path $p$. 
    This implies the vertices 
    of $Q_\textup{in}(\SP)$ correspond to the vertices of $L_j$.
    It remains to show that the edges are the same as well.
    For this purpose let $e$ be an edge of $Q_\textup{in}(\SP)$ then it is 
    either an edge of $T_\textup{left}$, $T_\textup{right}$ 
    or an edge of the base triangle of $\SP$. 
%     In case that $e$ is an edge of, say $T_\textup{left}$, 
%     it is adjacent to a triangle $\Delta'$
    Thus it must also be an edge of $L_j$ by the fact that $T_\textup{left}$ and  $T_\textup{right}$ satisfy Condition~\ref{itm:CondInnerLayerSEC} 
    of Definition~\ref{def:ValidTriangSEC} and Condition~\ref{Cond:InnerLayerSP} of Definition~\ref{def:SeparatorPath}.
    The reverse direction goes by the same argument.
    This shows that $T$ satisfies all conditions of Definition~\ref{def:ValidTriangSEC} and thus is a valid triangulation.
\end{proof}

\begin{algorithm}
  \caption{\textsc{Nibbling}: Counting triangulations of nibbled rings with width $w$.}
  \label{alg:LegalCounting}
  \begin{algorithmic}[1]
  \small
    \Require{$\SP$ to be a nibbled ring subproblem of width $w$;
    }
    \State Initialize Search tree $\tau$ to store all 
    computed values and insert the degenerate subproblems.
    \State \Return{\textsc{Count}(\SP)}
    
    \Statex
    \Function{Count}{\SP}
    \If{\SP\ in $\tau$}
	\State \Return{$\tau(\SP)$}
      \EndIf
      \State $s=0$
    \For{separator paths $p \in \PP(\SP)$}
     \For{ constraints $c_1 + c_2 = c + c_p$}
	  \State Define $\SP_\textup{\textrm{left}}(p,c_1)$ and $\SP_\textup{\textrm{right}}(p,c_2)$ 
	    using $p$ and
	    constraints $(c_1,c_2)$.
	  \State  $s = s \ +\ $\textsc{Count}($\SP_\textup{\textrm{left}}(p, c_1)$) 
	  $\cdot$ \textsc{Count}($\SP_\textup{\textrm{right}}(p, c_2)$)
      \EndFor
    \EndFor
      \State insert $(\SP,s)$ into $\tau$
      \State \Return{$s$}
    \EndFunction
    
  \end{algorithmic}
\end{algorithm}

We are now ready for the proof of Theorem~\ref{thm:RingSector}.
\begin{proof}[Proof Theorem~\ref{thm:RingSector}]
  We will first describe the \textsc{Nibbling} algorithm, then show its correctness and finally supply a runtime analysis.

   The Algorithm \textsc{Nibbling} uses the memoization technique and is based on dynamic programming. The subproblems of the dynamic programming scheme are the nibbled ring subproblems. Already computed solutions are stored in a search tree denoted by $\tau$.
   At the beginning all degenerate subproblems are inserted into $\tau$ with its correct value.  Otherwise each nibbled ring subproblem is solved recursively, by the recursion of Lemma~\ref{lem:CorrectLegalCount}. The pseudocode of \textsc{Nibbling} is depicted as Algorithm~\ref{alg:LegalCounting}. 
   
  It is easy to see that there exists exactly one triangulation for each degenerate problem. In order to check if this triangulation is valid, we have to check only if the layer-constraint are satisfied. All other conditions of Definition~\ref{def:ValidTriangSEC} are trivially satisfied.
  And thus the algorithm will return the correct output for these subproblems. By induction, all other subproblems are computed correctly as well. The induction step is done in Lemma~\ref{lem:CorrectLegalCount}.
 
  The algorithm runs in $n^{O(w)}\cdot |L|^{O(w)}$ time. To see this, we give an upper bound on the total number of subproblems and the total number of recursive calls that can potentially appear. We also have to account for searches in the searchtree $\tau$.
  But each search has costs of $\log ( n^{O(w)}) = w \log n = O(n^2)$. We add these costs to the recursive calls.
  
  Denote by \SP\ the initial input. Then all subsequent subproblems 
  appearing are defined by the two boundary paths together with the 
  base edge and some constraints. 
  In the case that the outer layer of some subsequent subproblem does not coincide with the outer layer of the initial nibbled ring subproblem appears only if the two boundary paths share their last vertex. Thus also in this case the outer layer is completely defined by the boundary paths.

  Note that every boundary path consists of 
  at most $w+1$ vertices and $w$ edges. 
  And thus there are at most 
  $n^{w+O(1)}\cdot |L|^{2w+O(1)}$ 
  possible annotated boundary paths possible.
  The number of possible constraint vectors 
  is bounded by $n^{w}$, as there are 
  at most $w$ entries that are not predetermined 
  to be zero, the size of the outer layer or 
  the size of the inner layer. 
  Thus the total number of potential 
  nibbled ring subproblems is 
  bounded by $n^{3w+O(1)}\cdot |L|^{4w+O(1)}$
  
  The number of recursive calls per ring subproblem is bounded by the number of separator paths times the number of pairs of compatible constraint vectors $(c_1,c_2)$. 
  Given $c_1$, there exists only one constraint vector $c_2$ compatible to it. Thus there are at most $n^{w}$ compatible pairs. 
  
  The number of separator paths is bounded by by $n^{w+O(1)}\cdot |L|^{2w+O(1)}$ in the same way as we bounded the number of boundary paths.
  Thus the number of recursive calls is bounded by $n^{2w+O(1)}\cdot |L|^{2w+O(1)}$. This remains true even if we add the costs for the searches in $\tau$.
  
  The total running time is thus $O(n^{5w+O(1)}\cdot |L|^{6w+O(1)}) = n^{O(w)}\cdot |L|^{O(w)}$ as claimed.
\end{proof}

% The algorithm in Theorem~\ref{thm:RingSector} and in Theorem~\ref{thm:FullParaAlgo} are both denoted by \RingSec . The underlying reason is that the algorithm from Theorem~\ref{thm:FullParaAlgo} for ring problems essentially calls the algorithm from Theorem~\ref{thm:RingSector} for nibbled ring subproblems.

%%% Local Variables:
%%% mode: latex
%%% TeX-master: "main2"
%%% End:

% \newpage

\section{General Layer-Unconstrained Ring Subproblems}\label{sec:GeneralRing}

Here, we give describe the algorithm to 
count Layer-Unconstrained Ring Subproblems.
We start, to define formally the layers, 
we are aiming to use as separators and 
show first that every triangulation has 
exactly one. Then we define a set of all 
separators, we want to recurse on. 
Further, we show how to split a ring subproblem 
using a layer separator into an inner 
and outer ring subproblem.
For the outer ring subproblem, we will define appropriate 
layer-constraints, to ensure that the layer 
separator we used is indeed canonical for 
all triangulations, we will count henceforth. 
We finish with a full description of the algorithm, 
a proof of correctness and an upper bound on the running time.

 \begin{definition}[Peripheral Layers]
  Given a valid triangulation $T$ of a 
  ring subproblem \SP, we say
  Layer $L_i$ of $T$ is 
  \emph{$m$-peripheral} if $\out{\SP}  < i \leq \out{\SP} + m $.
  Lemma~\ref{lem:LayerProbs1} and the definitions 
  following thereafter depend on a parameter $m$, 
  which will  be chosen later. 
  \end{definition}
  
%   Recall that the cactus of some outerplane graph $G$ is the graph $H$ containing all edges of $G$ that are incident to the outer face of $G$.
%   Note that $H$ might be disconnected, and each component is a cactus graph.
%   

\begin{lemma}[Layer Separation]\label{lem:LayerProbs1}
  Let $\SP$ be a ring subproblem of width $w = w(\SP)\geq m +2 $ 
  with at most $n$ free points. And let $T$ be a valid triangulation of $\SP$.
  Then there exists
  \begin{enumerate}[label=(\alph*), noitemsep,topsep=3pt,parsep=3pt,partopsep=0pt]
   \item \label{itm:ExistLayer} 
    an $m$-peripheral layer $L$ of $T$ of size $\leq \left\lfloor\frac{n}{m}\right\rfloor$,
   \item \label{itm:uniqueLayer}
   the smallest index $i^*$ with $|V(L_{i^*})|\leq \left\lfloor\frac{n}{m}\right\rfloor$ of $T$ is unique and
   \item \label{itm:LayerSepar}
   the layer $L_{i^*}$ separates the inner layer from the outer layer.
   This implies every connected component of $Q_\textup{in}(\SP)$ is in some bounded face
   of $L_{i^*}$.
   \item \label{itm:layerSizes} All $m$-peripheral layers further outside of $L_{i^*}$ have size at least $\left\lfloor\frac{n}{m}\right\rfloor +1$.
   \item \label{itm:LayerAnnot} All edges and vertices of $L_{i^*}$ have a feasible annotation.
  \end{enumerate}
\end{lemma}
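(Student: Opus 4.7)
The plan is to get (a), (b), and (d) together from a single pigeonhole argument, and then obtain (c) from the cactus structure of the layers and (e) from the definition of a valid triangulation.

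First I would observe that the $m$-peripheral layers have indices in $(\out{\SP}, \out{\SP}+m]$, while the inner layer has index $\ini{\SP} = \out{\SP}+w-1 \geq \out{\SP}+m+1$, strictly greater than any peripheral index. Hence no peripheral layer shares vertices with either the outer layer $Q_\textup{out}$ or the inner layer $Q_\textup{in}$, so every vertex lying on a peripheral layer must be one of the at most $n$ free points in $P(\SP)$. The $m$ peripheral layers $L_{\out{\SP}+1},\ldots,L_{\out{\SP}+m}$ are pairwise vertex-disjoint (different layer-index values by Lemma~\ref{lem:LayerDistance}), so their total number of vertices is at most $n$. By pigeonhole at least one of them has at most $\lfloor n/m\rfloor$ vertices, proving~\ref{itm:ExistLayer}. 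Let $i^*$ be the smallest index realizing this; it is well defined and unique, giving~\ref{itm:uniqueLayer}, and by minimality all peripheral layers with index strictly less than $i^*$ (i.e., further outside) have more than $\lfloor n/m\rfloor$ vertices, giving~\ref{itm:layerSizes}.

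Next I would prove the separation property~\ref{itm:LayerSepar}. Consider the plane graph $G_{i^*-1}$ obtained from $T$ after peeling off the first $i^*-1$ layers; by definition $L_{i^*}$ is the set of vertices and edges of $G_{i^*-1}$ that are incident to the outer face. Because cactus layers are outerplane, $L_{i^*}$ is the topological boundary (in the plane sense) separating the already-removed outer region from the interior that still contains all vertices of index $>i^*$. Since $\ini{\SP}>i^*$, every vertex and every edge of $Q_\textup{in}(\SP)$ has index strictly larger than $i^*$ and therefore lies inside some bounded face of $L_{i^*}$, while $Q_\textup{out}(\SP)$ lies in the outer face of $L_{i^*}$; in particular each connected component of $Q_\textup{in}$ is enclosed by a single cycle component of $L_{i^*}$.

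Finally, for~\ref{itm:LayerAnnot}, every vertex of $L_{i^*}$ is incident to some triangle $\Delta$ of $T$ whose interior lies in the free region, and similarly every edge of $L_{i^*}$ is an edge of some triangle of $T$. By Condition~\ref{itm:FeasibleAnnotationsRING} of Definition~\ref{def:ValidTriangRING}, every such $\Delta$ belongs to the annotation system $L$, hence the annotation carried on each vertex and edge of $L_{i^*}$ is the annotation of some feasible triangle and is therefore feasible in the sense of Definition~\ref{def:SeparatorPath}.

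The only real subtlety is the bookkeeping on indices ensuring that peripheral layers are genuinely strictly between the outer and inner layer; this is exactly what the width assumption $w\geq m+2$ gives, and it is what lets the pigeonhole count be applied to free points only. Everything else follows by unpacking definitions.
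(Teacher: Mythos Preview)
Your proof is correct and follows essentially the same approach as the paper: pigeonhole on the $m$ disjoint peripheral layers for~\ref{itm:ExistLayer}, minimality of $i^*$ for~\ref{itm:uniqueLayer} and~\ref{itm:layerSizes}, the nested cactus structure for~\ref{itm:LayerSepar}, and Condition~\ref{itm:FeasibleAnnotationsRING} for~\ref{itm:LayerAnnot}. Your write-up is in fact more careful than the paper's, since you explicitly use the width hypothesis $w\geq m+2$ to justify that peripheral layers consist only of free points (the paper's proof simply asserts ``there are at most $n$ potential vertices''); the only minor slip is the reference for feasibility, which is introduced in the ring subproblem definition rather than in Definition~\ref{def:SeparatorPath}.
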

\begin{proof}
    \ref{itm:ExistLayer} There are exactly $m$ potential layers, each pair of layers is vertex disjoint and there are at most $n$ potential vertices. The claim follows by the pigeonhole-principle.
   \ref{itm:uniqueLayer} 
    This is immediate from the definition.
  \ref{itm:LayerSepar}
    This follows from the way we defined our layers.
  \ref{itm:layerSizes}
  This follows from the definition of $L_{i^*}$.
  \ref{itm:LayerAnnot} follows from the fact that every edge of a valid triangulation has some feasible annotation.
\end{proof}

% Note that the graphs, we are using for separation, are not necessarily layers,
% but the cactus of these layers. As the difference is fairly subtle, we stick to the notion \emph{layer} anyway.

\begin{definition}[Peripheral Layer Separators] \label{def:LayerSeparator}
Given a  ring subproblem $\SP$, we define 
a \emph{peripheral layered separator} $L$ of $\SP$ 
as a cactus with an index, denoted by $\ind{L}$, such that
\begin{enumerate}[noitemsep,topsep=3pt,parsep=5pt,partopsep=0pt]
  \item $L$ does not induce any crossings with $\SP$ and $V(L) \subseteq P(\SP)$.
  \item Every connected component of 
    $Q_\textup{in}(\SP)$ is inside some bounded face of $L$
  and $L$ is contained in the interior 
  of the outer layer $Q_\textup{out}(\SP)$.
%   \item $V(L) \subseteq P(\SP)$.
  \item The index of $L$ satisfies $\out{\SP} < \ind{L} \leq \out{\SP} +m $.
  \item $|V(L)|\leq \left\lfloor\frac{n}{m}\right\rfloor$.
  \item All edges and vertices of $L$ have a feasible annotation.
\end{enumerate}
We denote with \LL(\SP ) the set of all peripheral layer separators of \SP .
\end{definition}

In Definition~\ref{def:SplitLayer}, we will define an inner and outer layer-unconstrained ring subproblem $\SP_\textup{in}$ and $\SP_\textup{out}$, given a simple layer-unconstrained ring subproblem $\SP$ and a peripheral separator layer $L$. Thereafter in Definition~\ref{def:OuterConstraints}, we will define a set of constraints $\mathcal{C}_\textup{out}$ for the layer-unconstrained ring subproblem $\SP_\textup{out}$.
At last Lemma~\ref{lem:CorrectCountSplitByLayers} describes a 
recursion and asserts that this recursion can be used to correctly count all valid triangulations of \SP . 

\begin{figure}
 \centering
 \includegraphics{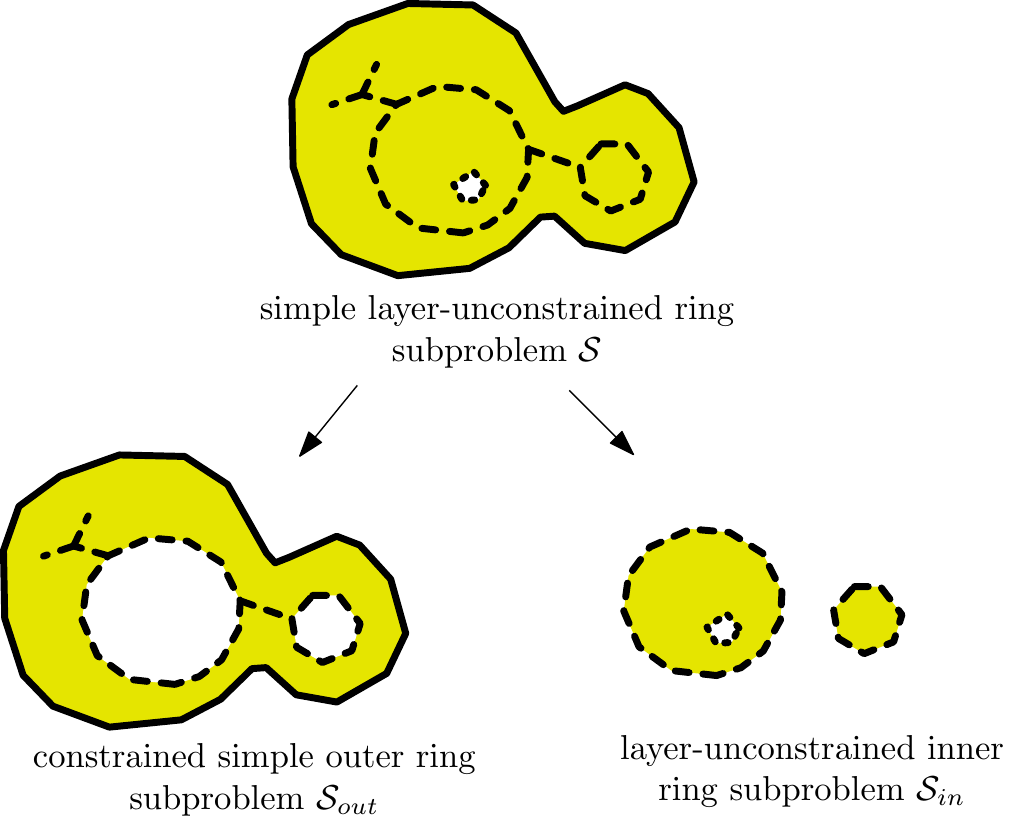}
 \caption{Illustration how to \emph{peel off} the outer ring subproblem.}
\end{figure}

\begin{definition}[Split by Layers]\label{def:SplitLayer}
Given a \emph{layer-unconstrained, simple}, ring subproblem \SP\ together with a peripheral layered separator $L \in \LL(\SP)$, we are now ready to describe the two arising layer-unconstrained ring subproblems. One of them is called the \emph{outer problem} and the other is called the \emph{inner problem}.
We define the layer-unconstrained outer problem $\SP_\textup{out} = \SP_\textup{out}(L)$ as follows:
\begin{description}[noitemsep,topsep=3pt,parsep=3pt,partopsep=0pt]
 \item[outer layer:] $Q_\textup{out}(\SP_\textup{out}) = Q_\textup{out}(\SP)$.
 \item[inner layer:] $Q_\textup{in}(\SP_\textup{out}) = L$.
 \item[outer/inner layer index:]$\out{\SP_\textup{out}} = \out{\SP}$ and $\ini{\SP_\textup{out}} = \ind{L}$.
 \item[free points] $P(\SP_\textup{out}) $ is the subset of $P(\SP)$ in the outer face of $L$.
 \item[boundary annotations:] Carry over from $\SP$ and the separator layered separator $L$.
 \item[annotation system:] This is exactly the same annotation system as for $\SP$.
\end{description}
We do not specify the free region, vertices and the width as they arise from the components given above. 
It might have seemed at first a little unmotivated that we allowed $Q_\textup{out}(\SP)$ 
to consists of more than one polygon, but now we need it as our layers might have more than one component. We define the layer-unconstrained inner problem $\SP_\textup{in}= \SP_\textup{in}(L)$ as follows:
\begin{description}[noitemsep,topsep=3pt,parsep=3pt,partopsep=0pt]
 \item[outer layer:] The outer layer $Q_\textup{out}(\SP_\textup{in})$ is defined by the simple polygons defined by the bounded faces of $L$. The outer layer index $\out{\SP_\textup{in}} = \ind{L}$.
 \item[inner layer:] $Q_\textup{in}(\SP_\textup{in}) = Q_\textup{in}(\SP)$, with $\ini{\SP_\textup{in}} = \ini{\SP}$.
 \item[free points:] $P(\SP_\textup{in})$ is the subset of $P(\SP)$ in 
 the bounded faces of $L$. 
 \item[boundary annotations:] Carry over from $\SP$ and the separator layered separator $L$.
 \item[annotation system:] This is exactly the same annotation system as for $\SP$.
\end{description}
% For both subproblems, the annotations of each edge and vertex are inherited from \SP\ and $L$ in the obvious way. The list of feasible triangles is the same for $\SP_\textup{in}$, $\SP_\textup{out}$ and $\SP$
\end{definition}
Note that if $Q_\textup{in}(\SP)$ is 
empty then $Q_\textup{in}(\SP_\textup{in})$ will be empty as well.
We only want to count triangulations $T$ in $\SP_\textup{out}$ that have all their layers larger than $ \left\lfloor\frac{n}{m}\right\rfloor$. 
\begin{definition}\label{def:OuterConstraints}
  Given a simple layer-unconstrained ring subproblem \SP\ and an peripheral layer separator $L \in \LL(\SP)$, we define a set of outer constraints 
  $\mathcal{ C}_\textup{out}$ as follows. We say 
  $(i_1,\ldots, i_n) \in \mathcal{ C}_\textup{out}$ if the following conditions on $i_j$ are satisfied for all $j$. 
  \begin{enumerate}
   \item $i_j = |Q_\textup{out}(\SP)|$, in case that $j = \out{\SP}$.
   \item $i_j = |V(L)|$, in case that $j = \ind{L}$.
   \item $i_j = 0$, in case that  $j \notin [\out{\SP},\ind{L}] $.
   \item $i_j \geq \left\lfloor\frac{n}{m}\right\rfloor $, if $j \in [\out{\SP} + 1,\ind{L} -1]$.
  \end{enumerate}

%   \[\mathcal{ C}_\textup{out} = \left\{ \, (i_1,\ldots, i_n) :  \\
%   \begin{array}{ll}
%    i_{\out{\SP}} = |Q_\textup{out}(\SP)| &  \\
%    i_{\ind{L}} = |V(L)| &  \\
%    i_j = 0 & \mbox{, if $j \notin [\out{\SP},\ind{L}] $ } \\
%    i_j > \left\lfloor\frac{n}{m}\right\rfloor & \mbox{, if $j \in [\out{\SP} + 1,\ind{L} -1]$ } 
%   \end{array}
%    \right\} .
%   \]
\end{definition}

The following lemma describes a recursion and asserts its correctness.

\begin{lemma}
[Correctness $\textsc{SplitByLayers}$]
\label{lem:CorrectCountSplitByLayers}
Let \SP\ be a simple layer-unconstrained ring subproblem and $L$ some separation layer of \SP .
Then $\SP_\textup{out}$  and $\SP_\textup{in}$ are  layer-unconstrained ring subproblems, in particular $\SP_\textup{out}$ is again simple.
It holds
   \[t(\SP) = \sum_{L\in\LL(\SP)}t(\SP_\textup{in}(L))\cdot \left(
   \sum_{c\in\mathcal{C}_\textup{out}}  t(\SP_\textup{out}(L,c))\right).\]
\end{lemma}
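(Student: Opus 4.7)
The plan is to establish a bijection between the set of valid triangulations of $\SP$ and the set of triples $(L, T_\textup{out}, T_\textup{in})$ where $L \in \LL(\SP)$, $T_\textup{out}$ is a valid triangulation of $\SP_\textup{out}(L,c)$ for some $c \in \mathcal{C}_\textup{out}$, and $T_\textup{in}$ is a valid triangulation of $\SP_\textup{in}(L)$. First I would briefly verify that $\SP_\textup{in}$ and $\SP_\textup{out}$ meet Definition~\ref{def:LSP}: $\SP_\textup{out}$ is simple because $Q_\textup{out}(\SP)$ was already a single polygon; the indices, free regions, and annotations are well-defined by construction; and the inner layer $L$ of $\SP_\textup{out}$ is a cactus inside $Q_\textup{out}(\SP)$ by Definition~\ref{def:LayerSeparator}.

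For the ``$\leq$'' direction, take a valid triangulation $T$ of $\SP$. Apply Lemma~\ref{lem:LayerProbs1} to obtain the canonical $m$-peripheral layer $L_{i^*}$, which separates $Q_\textup{in}(\SP)$ from $Q_\textup{out}(\SP)$, has size at most $\lfloor n/m\rfloor$, and carries only feasible annotations; therefore $L_{i^*} \in \LL(\SP)$. Restricting $T$ to the region outside $L_{i^*}$ gives $T_\textup{out}$, and restricting inside gives $T_\textup{in}$. I would then check the conditions of Definition~\ref{def:ValidTriangRING} for each: Condition~\ref{itm:TriangleRING} is inherited because no triangle is cut; Condition~\ref{itm:CondInnerLayerRING} holds for $T_\textup{out}$ because $L_{i^*}$ is by construction its $\ini{\SP_\textup{out}}$-layer (the distance function on $T_\textup{out}$ agrees with $d$ from $T$ on the outer region, which is why the layer indices are preserved), and it holds for $T_\textup{in}$ because layers inside $L_{i^*}$ of $T$ become the layers of $T_\textup{in}$ shifted so that the first layer is $L_{i^*}$. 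The layer-constraint vector $c$ is defined by reading off $|V(L_i)|$ from $T$ for $i \in [\out{\SP},\ind{L_{i^*}}]$; that $c \in \mathcal{C}_\textup{out}$ follows from Lemma~\ref{lem:LayerProbs1}\ref{itm:layerSizes}, which guarantees every intermediate layer exceeds $\lfloor n/m\rfloor$. Uniqueness of $(L,c,T_\textup{out},T_\textup{in})$ attached to $T$ is immediate since $L_{i^*}$ is unique by Lemma~\ref{lem:LayerProbs1}\ref{itm:uniqueLayer} and the restrictions are deterministic.

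For the ``$\geq$'' direction, fix $L$, $c \in \mathcal{C}_\textup{out}$, and valid $T_\textup{in}$, $T_\textup{out}$; let $T := T_\textup{in} \cup T_\textup{out} \cup L$. I would check that $T$ is a valid triangulation of $\SP$ by Definition~\ref{def:ValidTriangRING}: triangulatedness (\ref{itm:TriangleRING}) is clear since the two pieces meet exactly along $L$; for the inner layer condition (\ref{itm:CondInnerLayerRING}) I need to argue that the cactus layer decomposition of $T$ agrees on the outer part with that of $T_\textup{out}$ and on the inner part with that of $T_\textup{in}$, which follows because $L$ separates the two regions so the peeling proceeds independently once one reaches layer $\ind{L}$. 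Different triples $(L,c,T_\textup{out},T_\textup{in})$ produce different $T$'s: $L$ is recoverable from $T$ as the layer of index $\ind{L}$, and the restrictions to the two sides of $L$ are then determined.

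The main subtlety, which I expect to be the main obstacle, is showing that the map $T \mapsto (L,c,T_\textup{out},T_\textup{in})$ actually lands in $\mathcal{C}_\textup{out}$ and that no triangulation gets attributed to more than one $L$. Both rely crucially on the constraint in Definition~\ref{def:OuterConstraints} that every intermediate layer in $\SP_\textup{out}$ has size $> \lfloor n/m\rfloor$: this constraint exactly enforces that $L$ is the outermost $m$-peripheral layer of size at most $\lfloor n/m\rfloor$, i.e.\ the canonical choice from Lemma~\ref{lem:LayerProbs1}. Without it, one $T$ could be decomposed using any small peripheral layer and be overcounted. Once this canonicity is verified, the bijection gives the stated identity.
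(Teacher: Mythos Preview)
Your proposal is correct and follows essentially the same bijection-based approach as the paper: both directions split/glue along the canonical peripheral layer $L_{i^*}$ supplied by Lemma~\ref{lem:LayerProbs1}, verify Conditions~\ref{itm:TriangleRING}--\ref{itm:FeasibleAnnotationsRING} of Definition~\ref{def:ValidTriangRING} for the pieces, and use the constraint set $\mathcal{C}_\textup{out}$ to rule out overcounting. If anything, you are more explicit than the paper about why injectivity in the ``$\geq$'' direction needs the lower bound in $\mathcal{C}_\textup{out}$ (the paper only remarks that different $(T_\textup{out},T_\textup{in})$ yield different $T$, leaving the different-$L$ case implicit).
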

\begin{proof} By definition, $\SP_\textup{out}$  and $\SP_\textup{in}$ are layer-unconstrained ring subproblems and it holds that $\SP_\textup{out}$ is simple as it has the same outer layer as \SP . It remains to show the recursion.

 ``$\leq$''  \ Given a triangulation $T$ of \SP, by Lemma~\ref{lem:LayerProbs1}, the graph $L_{i^*}$ satisfies all conditions of Definition~\ref{lem:LayerProbs1} and we can split $\SP$ as described into two layer-unconstrained subproblems $\SP_\textup{in}$ and
 $\SP_\textup{out}$.
  The triangulation $T$ decomposes naturally into two valid triangulations $T_\textup{in}$ and  $T_\textup{out}$. In particular there exists exactly one $c \in \mathcal{C}_\textup{out}$ such that $T_\textup{out}$ is a valid triangulation of $\SP_\textup{out}(c)$. 
  We check explicitly all conditions of Definition~\ref{def:ValidTriangRING} to show that $T_\textup{out}$ and $T_\textup{in}$ are indeed valid triangulations.
  Recall that Condition~\ref{itm:TriangleRING} only asks that a valid triangulation has no crossings or edges outside the free region.
  Thus Condition~\ref{itm:TriangleRING} is trivially true by the way the triangulations are defined.
  Recall that Condition~\ref{itm:CondInnerLayerRING} asks for the inner layer
  of the subproblem to coincide with the most inner layer of the triangulation with the correct index.
  As $L_{i^{*}}$ is defined to be the first small layer. 
  It is in particular the inner layer of $\SP_\textup{out}$ with index $i^*$. 
  In other words Condition~\ref{itm:CondInnerLayerRING} 
  is satisfied for $T_\textup{out}$. 
  By the way we defined cactus layers, the layer structure of $T_\textup{in}$
  is inherited by $T$ and Condition~\ref{itm:CondInnerLayerRING}  is also satisfied for $T_\textup{in}$.
  Recall Condition~\ref{itm:ConstraintsRING} asks that each layer has the size given by the layer-constraint vector. We have defined the layer-constraint vector $c$ to do match the size of each layer of $T$. Thus also this condition is satisfied.
  At last Condition~\ref{itm:FeasibleAnnotationsRING} requires that each triangle is feasible. This follows for $T_\textup{out}$ and $T_\textup{in}$ from the fact that all their empty triangles are triangle of $T$ and the fact that $T$ satisfies Condition~\ref{itm:FeasibleAnnotationsRING}.
   Thus $T_\textup{out}$ and $T_\textup{in}$ are valid triangulations.
   
   Also note that a different triangulation $T'\neq T$ would lead to a different pair of triangulations 
   $(T_\textup{out}',T_\textup{in}') \neq (T_\textup{out},T_\textup{in})$.

  ``$\geq$'' \  Assume we have given some separator layer $L \in \LL(\SP)$ and some constraint $c\in \mathcal{C}_\textup{out}$.
  Further, let $T_\textup{in}$ be a valid triangulation of $\SP_\textup{in}$ and $T_\textup{out}$ some
  valid triangulation of $\SP_\textup{out}(c)$. Then the union of $T_\textup{out}$ and $T_\textup{in}$ forms a valid triangulation $T$ of $\SP$. As \SP\ has no constraints there are only three conditions to be checked.
  
  Condition~\ref{itm:TriangleRING} is again trivially true as $T_\textup{out}$ and $T_\textup{in}$ are nested and thus cannot cross each other.
  Condition~\ref{itm:CondInnerLayerRING} follows from the fact that 
  the layers of $T$ are exactly the layers of $T_\textup{out}$ and $T_\textup{in}$. Thus Condition~\ref{itm:CondInnerLayerRING} follows for 
  $T$ because it was true for $T_\textup{in}$ and the fact that the inner 
  layer of $\SP_\textup{in}$ and $\SP$ coincide.
  We don't need to satisfy Condition~\ref{itm:ConstraintsRING} as $\SP$
  is layer-unconstrained.
  Condition~\ref{itm:FeasibleAnnotationsRING} follows again 
  from the fact each triangle of $T$ is a triangle of either 
  $T_\textup{out}$ or $T_\textup{in}$.
  
  Also note that a different pair of triangulations  
  $(T_\textup{out}',T_\textup{in}') \neq (T_\textup{out},T_\textup{in})$
  would lead to a different 
   triangulation $T'\neq T$.
\end{proof}

\begin{remark}
  We define a \emph{disk subproblem} as a ring subproblem with an empty inner layer. Coincidentally, whenever the \textsc{SplitByLayers} procedure is applied, it is applied to a disk problem rather than an layer-unconstrained ring subproblem. So big parts of the algorithm do not need ring subproblems in full generality. We could nevertheless not avoid having non-empty inner layers and have to define ring subproblems anyway, so we decided to present an algorithm that is capable of solving ring subproblems. Another reason for this decision is that the additional complication is negligible.
\end{remark} 

We also want to be able to simplify ring subproblem \SP, this is to split \SP\ into ring subproblems $\SP_1, \ldots , \SP_{a}$ in such a way that the outer layer of each $\SP_i$ consists of only one polygon. This changes the problem not geometrically, but only composes it to smaller parts.
Luckily, we have to do that only for layer-unconstrained subproblems. It can be done also for layer-constrained problems, but is technically more demanding and requires the extra condition of small width.

Let \SP\ be a layer-unconstrained ring subproblem with outer layer $Q_\textup{out} = q_1,\ldots, q_a$, with $a\geq 2$. We define  \emph{layer-unconstrained} ring subproblems 
$\SP_1, \ldots , \SP_{a}$ as follows:
$\SP_i$ consists of $q_i$ and all connected components of $Q_\textup{in}(\SP)$ that are
contained in $q_i$ and likewise $P(\SP_{i})$ consists of all free points of $P(\SP)$ that are inside $q_i$.

\begin{figure}[tp]
 \centering
 \includegraphics{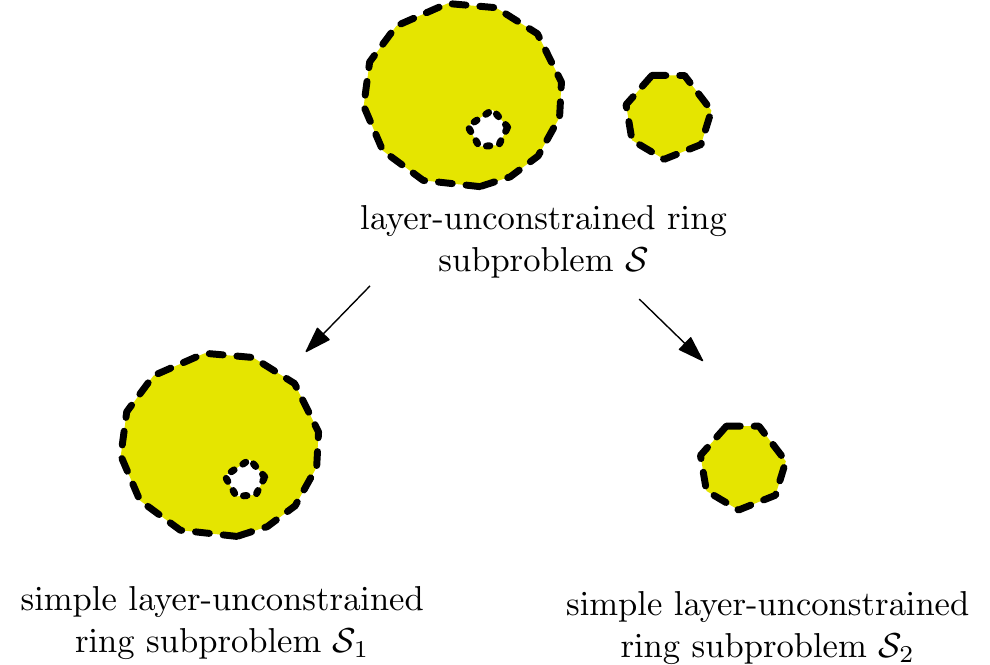}
 \caption{Illustration of the procedure \textsc{SplitByComponents}.}
\end{figure}

\begin{lemma}[Correct Counting of \textsc{SplitByComponents}]\label{lem:CorrectCountSimplify}
   Given a layer-unconstrained ring subproblem $\SP$ and assume it is split into $\SP_1, \ldots , \SP_{a}$. Then $\SP_1, \ldots , \SP_{a}$ are all simple layer-unconstrained ring subproblem and it holds:
   \[t(\SP) = t(\SP_1)\cdot \ldots \cdot t(\SP_a).\]
\end{lemma}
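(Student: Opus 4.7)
The plan is to verify first that each $\SP_i$ satisfies all clauses of Definition~\ref{def:LSP} (so is a well-formed simple layer-unconstrained ring subproblem), and then to prove the product formula by exhibiting a bijection between valid triangulations of $\SP$ and $a$-tuples of valid triangulations of $(\SP_1,\ldots,\SP_a)$. The well-formedness is immediate: $q_i$ is by assumption a simple polygon forming a cactus on its own, the components of $Q_\textup{in}(\SP)$ inherited by $\SP_i$ are those lying inside $q_i$ and are hence contained in $q_i$, the free points $P(\SP_i)$ are interior to the free region of $\SP_i$ (since they were interior to the free region of $\SP$ and lie inside $q_i$), and the annotation system and inner/outer indices are inherited unchanged. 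Because no layer-constraint vector is specified, there is nothing further to verify.

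For the counting identity, first I would define a forward map $T \mapsto (T_1,\ldots,T_a)$ by restricting a valid triangulation $T$ of $\SP$ to the closed region bounded by $q_i$ (together with the components of $Q_\textup{in}(\SP)$ inside $q_i$). The key geometric observation is that the $q_i$ are pairwise disjoint simple polygons and any edge of a triangulation of $\SP$ lies inside the free region of $\SP$; hence no edge of $T$ can cross the boundary of any $q_i$, so every triangle of $T$ lies entirely inside exactly one $q_i$. This makes the restriction $T_i$ well-defined. I would then check that each $T_i$ satisfies the conditions of Definition~\ref{def:ValidTriangRING} for $\SP_i$: Condition~\ref{itm:TriangleRING} holds because all faces of $T$ in the free region of $q_i$ remain triangular and no edge is introduced outside; Condition~\ref{itm:FeasibleAnnotationsRING} is inherited triangle-by-triangle from $T$; and there is no Condition~\ref{itm:ConstraintsRING} to check in the layer-unconstrained setting.

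The main obstacle is Condition~\ref{itm:CondInnerLayerRING}, which requires that the cactus layers of $T_i$ line up correctly with those of $T$ so that $Q_\textup{in}(\SP_i)$ is the layer with the prescribed index. For this I would argue that the layer decomposition is determined locally: since the $q_i$'s share no vertices and the peeling process that defines cactus layers proceeds from the outer face, in any single step of peeling the vertices removed from the interior of $q_i$ depend only on which vertices of $q_i$ are currently on that region's outer face. Because the $q_i$ are separated by the unbounded outer face of $\SP$, the peeling inside $q_i$ is identical whether performed in $T$ or in $T_i$, so $\lind{v}$ agrees for every vertex $v$ inside $q_i$. This gives $Q_\textup{in}(\SP_i)=Q_\textup{in}(\SP)\cap q_i$ at the correct index $\ini{\SP_i}=\ini{\SP}$, which is exactly what the subproblem demands.

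For the reverse direction I would take any tuple $(T_1,\ldots,T_a)$ of valid triangulations of $(\SP_1,\ldots,\SP_a)$ and form their union $T$. Since the $q_i$'s are disjoint and each $T_i$ lives inside $q_i$, there are no crossings and no edges outside the free region of $\SP$, so Condition~\ref{itm:TriangleRING} holds; feasibility of the annotated triangles (Condition~\ref{itm:FeasibleAnnotationsRING}) is inherited; and the same local argument as above shows that the layer decomposition of $T$ restricted to $q_i$ reproduces that of $T_i$, yielding Condition~\ref{itm:CondInnerLayerRING}. The two constructions are manifestly inverse to each other, which gives the desired bijection and hence $t(\SP)=\prod_{i=1}^a t(\SP_i)$.
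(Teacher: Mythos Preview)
Your proposal is correct and follows the same bijection-by-restriction/union strategy as the paper, only spelled out in considerably more detail; the paper's own proof is essentially two sentences (``restrict $T$ to $\SP_i$'' and ``take the union''). One small imprecision: the $q_i$ need not be pairwise \emph{disjoint} polygons---as bounded faces of a cactus they may share vertices at articulation points---but your argument only uses that their interiors are disjoint, so this does not affect correctness. The paper additionally singles out the observation that some $\SP_i$ may end up with an empty inner layer (when no component of $Q_\textup{in}(\SP)$ lies inside $q_i$), in which case the width acts merely as an upper bound; your local-peeling argument for Condition~\ref{itm:CondInnerLayerRING} already covers this case implicitly.
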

\begin{proof}
 Any valid triangulation $T$ of $\SP$ decomposes naturally into several valid triangulations
 $T_1,\ldots,T_a$ of $\SP_1, \ldots , \SP_{a}$ respectively. 
 Just restrict $T$ to $\SP_i$ to get $T_i$.
 
 Similarly the union of any collection of valid triangulations
 $T_1,\ldots,T_a$ of $\SP_1, \ldots , \SP_{a}$ is a valid triangulation of \SP.
 
 Note that the outerplanar index of $T_i$ 
 is at most the outer planar index of $T$, 
 but it could be smaller. 
 In the later case the inner 
 layer of $\SP_i$ is empty.
 This is the reason, it is convenient that 
 the width is only an upper bound on the number of layers.
 (See Condition~\ref{itm:CondInnerLayerRING} of Definition~\ref{def:ValidTriangRING}.)
\end{proof}

% \begin{figure}[htbp]
% \centering
%   \includegraphics{overview2}
%   \caption{The text framed by ellipses describes different types of subproblems of the main algorithm. The arrows and boxes represent recursive calls of other subproblems. The main part of the algorithm is the box called guess layer.}
%   \label{fig:overview}
% \end{figure}

\begin{algorithm}[tp]
  \caption{\textsc{Peeling}: Counting layer-unconstrained ring problems}
  \label{alg:GeoRing}
  \begin{algorithmic}[1]
  \small
    \Require{Layer-unconstrained Ring Subproblem \SP}
    
    \State Initialize search tree $\tau$ to store all 
    computed values.
    \State \Return{\textsc{Count}(\SP)} 
    
    \Statex
    
    %%%%%%%%%%%%%%%%%%%%%%%%%%%%%%%%%%%%%
    %%%%%%%%%%%%%%%%%%%%%%%%%%%%%%%%%%%%%
    \Function{Count}{\SP}
      \If{\SP\ in $\tau$}
	\State \Return{$\tau(\SP)$}
      \ElsIf{\SP\ simple layer-unconstrained ring subproblem and $w(\SP)\geq m + 2$}
	  \State result = \textsc{SplitByLayers}(\SP)
      \ElsIf{\SP\ layer-unconstrained non-simple ring subproblem}
	  \State result = {\textsc{SplitByComponents}(\SP)}
      \ElsIf{\SP\ simple ring subproblem and $w(\SP)\leq  m +1$ }
	  \State result = {\textsc{Nibbling}(\SP)}
      \EndIf
      \State insert (\SP, result ) into $\tau$
      \State \Return{result}
    \EndFunction

    \Statex
    \Function{SplitByComponents}{\SP}
      \State Split \SP\ into $\SP_{1},\ldots,\SP_{a}$
      \State \Return{$\textsc{Count}(\SP_{1})\cdot \ldots \cdot \textsc{Count}(\SP_{a})$}
    \EndFunction

    \Statex
    \Function{SplitByLayers}{\SP}
    \State $s_1=0$
    \For{peripheral separation layer $L \in \LL(\SP)$}
      \State   $s_2 = 0$
      \State Define $\SP_\textup{out}(L)$ and $\SP_\textup{in}(L)$
	   by splitting $\SP$ using $L$ 
      \For{ constraint vector  $c \in \C_\textup{out}(L)$}
	  \State  $s_2 = s_2 \, + \, $\textsc{Count}($\SP_\textup{out}(L,c))$
      \EndFor
	\State  $s_1 = s_1 \, + \, $\textsc{Count}($\SP_\textup{in}(L)) \cdot s_2$
    \EndFor
      \State \Return{$s_1$}
    \EndFunction

  \end{algorithmic}
\end{algorithm}

We are now ready for the proof of Theorem~\ref{thm:GeoRingAlgo}.

\begin{proof}[Proof of Theorem~\ref{thm:GeoRingAlgo}]
  
  We will first describe the \textsc{Peeling} algorithm, then argue its correctness and finally supply a runtime analysis.
  
  The Algorithm \textsc{Peeling} uses the memoization technique and is based on dynamic programming. The subproblems of the dynamic programming scheme are the ring subproblems. Each ring subproblem is solved recursively, by either the \textsc{SplitByLayers}, \textsc{SplitByComponents} or \textsc{Nibbling} procedure. Which procedure will be called depends on the properties of the subproblem.  Already computed solutions are stored in and retrieved from the search tree $\tau$. This prevents repeated computations.  The pseudocode of 
  \textsc{Peeling} is depicted as Algorithm~\ref{alg:GeoRing}.
  
  To see correctness, we have to show that one of the cases in the main routine \textsc{Count} will be called. It is clear that each routine works correctly by Lemma~\ref{lem:CorrectCountSplitByLayers} and~\ref{lem:CorrectCountSimplify} and Theorem~\ref{thm:FullParaAlgo}.
  To be more precise \textsc{Count} will not have a valid value for the variable result if \textsc{Count} will be called with a non-simple constrained ring subproblem or with a constrained simple ring subproblem with width $w \geq m+2$.
  So we have to show that this will not happen.
  Initially, \textsc{Count} will be called by a layer-unconstrained ring subproblem.
  The subroutine \textsc{Nibbling} does not call \textsc{Count} at all. 
  The subroutine \textsc{SplitByComponents} does call \textsc{Count} only with layer-unconstrained simple ring subproblems. 
  The subroutine \textsc{SplitByLayer} defines a simple outer subproblem $\SP_\textup{out}$ and calls calls \textsc{Count} with it. But this is fine as the width of $w(\SP_\textup{out})\leq m$. The inner subproblem $\SP_\textup{in}$ is layer-unconstrained.
  This covers all cases and shows correctness.

  In order to bound the running time, we will bound the total number of possibly occurring subproblems.  Further, we bound the total costs for each subproblem. This is either the total number of recursive calls or the time spend by \textsc{Nibbling}.  
  As a first step we bound the number of annotated cactus graphs on $l$ points in the plane. Given a cactus graph on $l$ points $Q$ in the plane, it is well known that $G$ has at most $2l-3$ edges. Thus there are trivially at most \[\binom{l^2}{2l-3} + \binom{l^2}{2l-4} + \ldots +\binom{l^2}{1} + \binom{l^2}{0} \leq l^{4l- 6} + l^{4l- 8} + \ldots + 1 \leq l^{4l}\] cactus graphs on $Q$. Given a set $P$ of $n$ points there are \[\binom{n}{l}+\binom{n}{l-1} + \ldots + \binom{n}{0} \leq n^l + n^{l-1} + \ldots + 1 \leq n^{l+1}\] point sets $Q$ of size at most $l$. 
  Thus there are at most $n^l \cdot  l^{4l}$ cactus graphs of size at most $l$ on a set of $n$ points. 
  There are at most $|L|^{3l-3}$ ways to annotated all $l$ vertices and $2l-3$ edges.
  Given a set $S$ of points, every ring subproblem on $S$ is completely defined by the outer layer, the inner layer, and the inner layer  and outer layer index.
  If we choose $m= \lfloor \sqrt{n} \rfloor $ the total number of possible annotated inner layers equals $n^{(3 + o(1))\sqrt{n}} \cdot |L|^{(3 + o(1))\sqrt{n}}$ and the total number of outer layers is bounded by the same number. Note that the boundary of the initial layer-unconstrained ring subproblem might also appear as outer layer in some subsequent subproblems, but this is the only possible outer layer with potentially more than $\lceil \frac{n}{m}\rceil$ vertices.
  Thus the total number of subproblems is $n^{(6 + o(1))\sqrt{n}} \cdot |L|^{(6 + o(1))\sqrt{n}}$.
  The time to solve a subproblem called by \textsc{Nibbling} takes $n^{(5 + o(1))\sqrt{n}}\cdot |L|^{(6 + o(1))\sqrt{n}}$ time by Theorem~\ref{thm:FullParaAlgo}. The number of recursive calls by the procedure \textsc{SplitByComponents} is polynomial in $n$ and does not depend on $L$. The number of recursive calls by the procedure \textsc{SplitByLayers} equals $n^{(4 + o(1))\sqrt{n}}\cdot |L|^{(3 + o(1))\sqrt{n}}$. To see this recall that the set $\LL(\SP)$ has size $n^{(3 + o(1))\sqrt{n}}\cdot |L|^{(3 + o(1))\sqrt{n}}$ and the set $\mathcal{C}_\textup{out}(L)$ has size $n^{(1 + o(1))\sqrt{n}}$. Thus the total running time is bounded by $n^{(11 + o(1))\sqrt{n}}\cdot |L|^{(12 + o(1))\sqrt{n}} = n^{O(\sqrt{n})}\cdot |L|^{O(\sqrt{n})}$. 
\end{proof}

  A more careful analysis would probably give a slightly better constant. Possible improvements might be based on a different design of the algorithm, careful choice of $m$ and an improved estimate on the number of cactus graphs on $n$ vertices.

%%% Local Variables:
%%% mode: latex
%%% TeX-master: "main2"
%%% End:

% \newpage
\section{Applications for Counting other Structures}
\label{sec:applications}

In this section, we give a complete description of the framework
discussed in Section~\ref{sec:applications} for counting
non-crossing straight line graphs. In Section~\ref{sec:tools}, we
explain formally how to go from counting triangulations to colored
graph classes.  As we show how to do this in full generality and not
just for a specific graph class, the description becomes a little
abstract and technical. We use the example of straight-line perfect
matchings to illustrate these definitions.  We attain the following result.
\begin{theorem}[Counting Perfect Matchings] \label{thm:CountPerfectMatchings}
  Given a set $S$ of $n$ points in the plane, there exists an algorithm that counts the total number of non-crossing perfect straight line matchings in $n^{O(\sqrt{n})}$ time.
\end{theorem}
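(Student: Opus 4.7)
\medskip
\noindent\textbf{Proof plan for Theorem~\ref{thm:CountPerfectMatchings}.} The plan is to reduce counting non-crossing perfect matchings to counting annotated triangulations via Theorem~\ref{thm:AnnotToAlgo}. The central device is the constrained Delaunay triangulation: given any non-crossing perfect matching $M$ on $S$, treating $M$ as the set of fixed edges yields a unique triangulation $T_M = \mathrm{CDT}(S,M)$ extending $M$. I would set up an annotation system $L$ so that valid annotated triangulations are in bijection with pairs $(M,T_M)$, which are themselves in bijection with matchings. Since each matching contributes exactly one triangulation, the count returned by Theorem~\ref{thm:AnnotToAlgo} will be exactly the number of non-crossing perfect matchings.

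To design $L$, I would use a two-color edge annotation indicating whether an edge is a matching edge (\emph{m}) or a non-matching edge (\emph{n}), together with a vertex annotation that records the identity of the matching partner of each vertex. A triangle $\Delta=uvw$ is declared feasible only if its six annotations satisfy the following \emph{local} constraints: (a) for each vertex of $\Delta$, the annotated partner is either one of the other two vertices of $\Delta$, in which case the edge to that vertex is labeled \emph{m} and the other incident edge in $\Delta$ is labeled \emph{n}, or the partner is outside $\Delta$, in which case both incident edges in $\Delta$ are labeled \emph{n}; (b) partner symmetry holds when two matched vertices coexist in $\Delta$; (c) exactly zero or one edge of $\Delta$ is labeled \emph{m}. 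Because annotations on shared edges/vertices automatically agree across adjacent triangles (triangulations assign a single annotation per edge and per vertex), these local checks force the \emph{m}-edges to form a non-crossing perfect matching globally.

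The main obstacle is enforcing the constrained Delaunay property locally, which is what guarantees uniqueness of the extension. My approach is to thicken the edge annotation: for each non-matching edge $e$, the annotation string also records the ordered pair of "third vertices" $(w^+, w^-)$ of the two triangles incident to $e$ (on each side of $e$). A triangle $\Delta=uvw$ with edge $uv$ labeled \emph{n} is feasible only if $w$ equals one of the two vertices named in the annotation of $uv$ and, moreover, the vertex on the opposite side of $uv$ lies outside (or on the boundary of) the circumcircle of $\Delta$. This is a purely local predicate on the $9$-tuple, yet together with consistency of edge annotations across shared edges it enforces exactly the constrained Delaunay condition everywhere, yielding the desired uniqueness.

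With these annotations, each string has length $O(\log n)$ and the number of feasible annotated triangles is bounded by the number of empty triangles ($O(n^3)$) times the polynomial number of choices for vertex partners and edge side-data, giving $|L|=n^{O(1)}$. Plugging into Theorem~\ref{thm:AnnotToAlgo} yields a running time of $n^{(11+o(1))\sqrt{n}}\cdot |L|^{(12+o(1))\sqrt{n}}=n^{O(\sqrt{n})}$. The only things remaining to verify rigorously are the bijection (each non-crossing perfect matching $M$ produces exactly one valid annotated triangulation, namely $\mathrm{CDT}(S,M)$ with the annotations above, and conversely every valid annotated triangulation arises this way), and that my local Delaunay check indeed characterizes $\mathrm{CDT}(S,M)$; both follow from the standard local characterization of constrained Delaunay triangulations and general position of $S$.
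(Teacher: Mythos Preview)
Your approach is essentially identical to the paper's: both encode a perfect matching by annotating each vertex with its partner and coloring edges as matching/non-matching, and both enforce the constrained Delaunay condition locally by annotating each non-matching edge with the opposite ``third vertex'' and checking the empty-circumcircle predicate. The paper factors this into two separate annotation systems $L_{\textup{matching}}$ and $L_{\textup{CDT}}$ and combines them via a general $\oplus$ operator, but that is purely organizational.

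One genuine (though easily repaired) gap: your local constraints do not force the edge from $v$ to its annotated partner $w$ to be present in the triangulation. If $vw\notin T$, then in every triangle containing $v$ the partner lies ``outside'', both incident edges are labeled $n$, and all your checks pass---yet $v$ has no matching edge, so the $m$-edges do not form a perfect matching and you overcount. The fix the paper uses is to additionally declare a triangle $\Delta$ infeasible whenever $v\in\Delta$ is annotated with partner $w$ and $\Delta$ blocks visibility between $v$ and $w$; since in any triangulation a non-edge $vw$ must cross an edge of some triangle incident to $v$, this forces $vw\in T$ and closes the bijection.
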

Another key result of this
section is an operation $\oplus$ that can be used to combine
annotation systems in order to get the intersection of the graph
classes described by them.

In Section~\ref{sec:examples}, we apply this framework to a rich
selection of important graph classes. We describe annotation systems
for them and prove their correctness. This serves to show the power of
the framework and also explains how to use the framework. In
particular, we hope that certain tricks to design annotation systems
become clear and might be helpful to design annotation systems for
graph classes not covered by our examples.

We start with the proof of 
Theorem~\ref{thm:Count3Color} about counting $3$-colorable triangulations.
   
\begin{proof}[Proof of Theorem~\ref{thm:Count3Color}.]
  Given a triangulation $T$ that is $3$-colorable, this coloring is unique up to a permutation of the colors. It becomes unique as soon as the coloring of two adjacent vertices are fixed. Given a point set $S$ an edge $e^*=vw$ on $\partial CH(S)$, we define a list $L_\textup{3c}$ of feasible empty triangles such that for each triangle $\Delta$ holds
  \begin{enumerate}[noitemsep,topsep=3pt,parsep=3pt,partopsep=0pt]
   \item All three colors (red, yellow, blue) appear exactly once on the vertices of $\Delta$.
   \item $v$ is always red and $w$ is always yellow.
  \end{enumerate}
  We denote by $\mathcal{T}_3$ the set of $3$-colorable triangulations on $S$ and with $\mathcal{T}^A({L_{3c}})$ the set of annotated triangulations such that each triangle belongs to the list of $L_{3c}$.
  It holds 
  \[|\mathcal{T}^A({L_{3c}})| = |\mathcal{T}_3|.\]
  The size of this annotation system can be upper bounded by 
  $|L_{3c}|\leq \binom{n}{3}\cdot 6! = O(n^3) $.
  Now, we apply Theorem~\ref{thm:AnnotToAlgo} to count all triangulations with this annotation system. 
  \end{proof}

\subsection{Tools}\label{sec:tools}
In this section, we try to develop some general tools for annotation
systems.  Some of the definitions are a little abstract, so we try to
illustrate our definitions with the example of perfect matchings.  In
this section, we show how to count structures that are not
triangulations. We do this with the help of constrained Delaunay
triangulations. We are able to abstract out our arguments so that the
reader will not have to worry about constrained Delaunay triangulation
ever again after this section.  For this goal we develop a machinery
to combine annotation systems, which is useful in its own respect.

\begin{definition}
  Let $S$ be a set of $n$ points in the plane and let $c_V$ and $c_E$ be two integers.
  We define $\mathcal{G}_\textup{color}$ to be the set of non-crossing straight line graphs on $S$, such that 
   \begin{enumerate}[noitemsep,topsep=3pt,parsep=3pt,partopsep=0pt]
    \item each vertex has a color  $i\in \{1,\ldots,c_V \}$ and
    \item each edge has a color $i\in \{1,\ldots,c_E \}$.
  \end{enumerate}
  For later convenience, we assume that none of the colors is denoted by purple: the color purple is reserved for ``non-edges.''
  We call subsets $\mathcal{C} \subseteq \mathcal{G}_\textup{color}$ a \emph{classes of non-crossing colored graphs}. We also use the term \emph{property} to describe a class of colored graphs; we say $G$ has property $\mathcal{P}$, if $G\in \mathcal{P}$.
\end{definition}

An example of a colored graph class is the class of non-crossing straigth line perfect matchings $\mathcal{M}$ on $S$. This class has only one color, say orange. We want to keep this example in mind for the forthcoming somewhat technical definitions.

As the algorithm of Theorem~\ref{thm:AnnotToAlgo} counts annotated triangulations, we have to triangulate the matchings and we have to do it in a unique way. We will follow the ideas of previous authors and use the constrained Delaunay triangulation for this purpose.
See~\cite{DBLP:journals/dcg/AlvarezBCR15} and~\cite{DBLP:journals/comgeo/AlvarezBRS15} on how constrained Delaunay triangulations were used earlier in order to count perfect matchings and other crossing-free structures. The text book by Hjelle and D\ae hlen gives a detailed introduction to constrained Delaunay triangulations~\cite{hjelle2006triangulations} (Chapter 6).

\begin{figure}[t]
  \centering
  \includegraphics{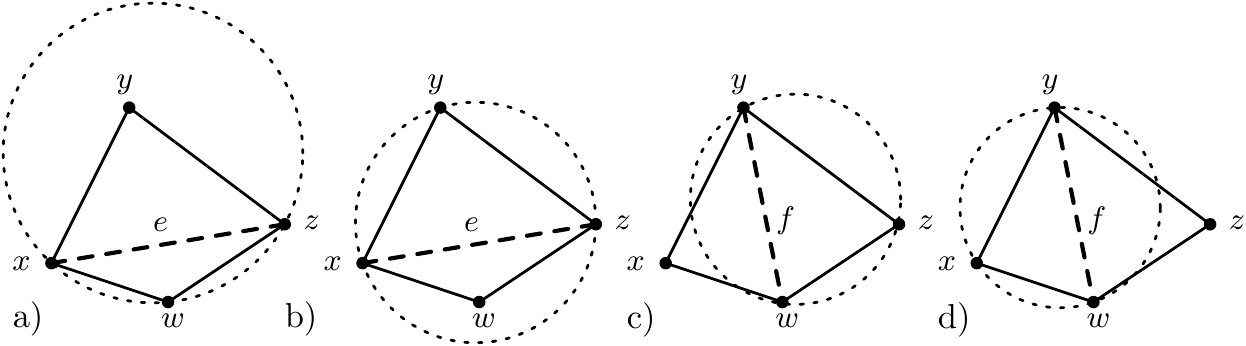}
  \caption{Diagonal $e$ of $xyzw$ does not satisfy the Delaunay condition: as shown in a) and b), circle $xzw$ contains $xyz$ and circle $xyz$ contains triangle $xzw$. On the other hand, diagonal $f$ does satisfy the Delaunay condition: as shown in c) and d), circle $yzw$ does not contain triangle $xyw$ and circle $xyw$ does not contain triangle $yzw$.}
  \label{fig:cdtDefinition}
\end{figure}

\begin{definition}[Constrained Delaunay Triangulation]\label{def:ConDelaunayTriang}
  Let $G\in \mathcal{G}_\textup{color}$ be a graph on a set $S$ of $n$
  points and a let $T$ be a triangulation extending $G$ on the same
  set of points.  We say an edge $e$ in a triangulation $T$ satisfies
  the \emph{Delaunay Condition} if the circumference of neither adjacent
  triangle contains the other adjacent triangle, see
  Figure~\ref{fig:cdtDefinition}. We say that $T$ is the
  \emph{constrained Delaunay triangulation} of $G$ if every edge
  $e \in E(T)\setminus E(G)$ satisfies the \emph{Delaunay Condition}.
  We also demand that no edge $e\in E(G)$ is colred purple and every
  other edge is colored purple.  (In case that $e$ is on the boundary
  of the convex hull of $S$, then we say that $e$ satisfies the
  Delaunay Condition as well.)  We denote by $CDT(\mathcal{P})$ the
  set of triangulations such that all purple edges satisfy the
  Delaunay Condition and the remaining colored graph belongs to
  $\mathcal{P}$.
\end{definition}
  See Figure~\ref{fig:TermsAnnotation} for an example of a constrained triangulation of a matching.
  \begin{figure}[p]
  \centering
  \includegraphics{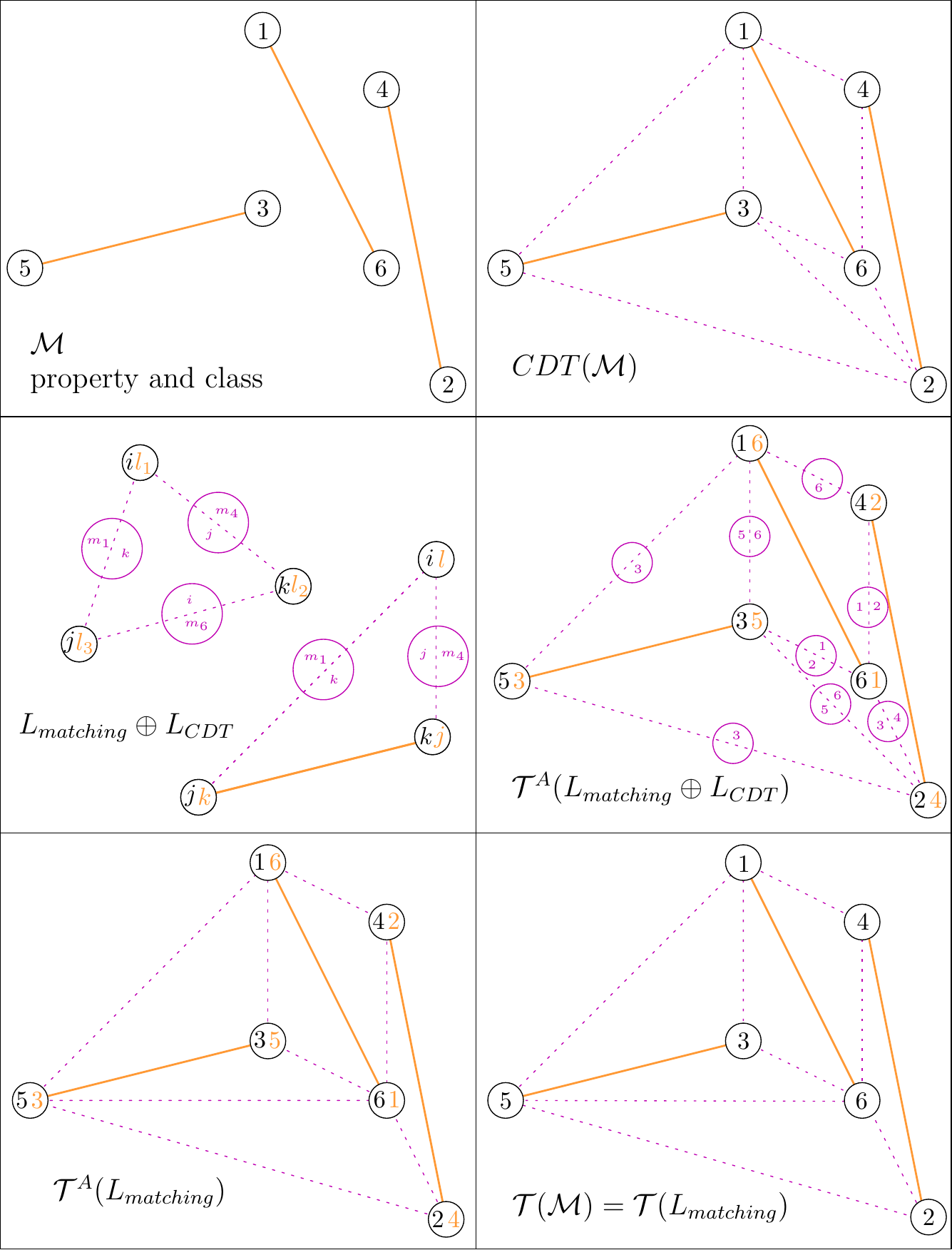}
  \caption{}
  \label{fig:TermsAnnotation}
\end{figure}
It is noteworthy that the Delaunay condition is symmetric. That is, let $\Delta_1$ and $\Delta_2$ be the two triangles adjacent to the edge $e$ and $C_1$ and $C_2$ be the respective circumferences. Then it holds that $C_1$ contains $\Delta_2$ if and only if $C_2$ contains $\Delta_1$. 
We assume that there are no $4$ points on a common circle. 
This can be achieved in polynomial time by a small affine transformation. 
The first algorithm to compute the constrained Delaunay triangulation in optimal $O(n\log n)$ time was given by Chew~\cite{DBLP:journals/algorithmica/Chew89}. We will use the fact that the constrained Delaunay triangulation is unique.

\begin{theorem}[\cite{hjelle2006triangulations}]
  Given $G \in \mathcal{G}$, there exists exactly one constrained Delaunay triangulation $T$ extending $G$.
\end{theorem}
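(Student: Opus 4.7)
The plan is to establish existence and uniqueness separately, following the standard approach from computational geometry.

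For \textbf{existence}, I would begin with any triangulation $T_0$ of $S$ extending $G$; such a triangulation exists because any plane straight-line graph can be completed to a triangulation by repeatedly inserting a diagonal into any non-triangular face. Then apply a constrained version of the Lawson edge-flip algorithm: while some non-constraint edge $e \notin E(G)$ fails the Delaunay condition, the two triangles adjacent to $e$ together form a convex quadrilateral (this convexity is a consequence of the failure of the Delaunay condition together with general position), so $e$ may be replaced by the other diagonal $e'$. To prove termination, I would use the classical potential function given by the lexicographically sorted vector of minimum angles of the triangles: each Lawson flip strictly increases this vector in lexicographic order, and since there are only finitely many triangulations on $S$, the procedure must halt. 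When it halts, every non-constraint edge satisfies the Delaunay condition, so the resulting triangulation is a CDT of $G$.

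For \textbf{uniqueness}, the cleanest approach is the \emph{paraboloid lifting}. Map each $p = (p_x, p_y) \in S$ to $\hat p = (p_x, p_y, p_x^2 + p_y^2) \in \mathbb{R}^3$. The key fact is that an edge $e$ shared by two triangles $\Delta_1, \Delta_2$ of a triangulation $T$ satisfies the Delaunay condition if and only if the lifted quadrilateral $\hat\Delta_1 \cup \hat\Delta_2$ is convex along the lifted image of $e$ (this is equivalent to the empty-circumcircle characterization via the standard lifting lemma). Since $S$ is in general position, no four points are cocircular, hence no four lifted points are coplanar, so convexity is strict. A CDT of $G$ therefore corresponds to a piecewise-linear surface over $S$ that is (i) strictly locally convex across every non-constraint edge and (ii) contains the lifted constraint edges $\hat E(G)$. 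Two such surfaces would differ by at least one edge, producing a pair of crossing edges $e \in T_1 \setminus T_2$, $f \in T_2 \setminus T_1$, neither in $G$. Then $e$ and $f$ both lie in the interior of a common region bounded by $G$-edges, and a local argument on the quadrilateral formed by their endpoints, combined with the strict convexity in both $T_1$ and $T_2$, yields a contradiction.

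The \textbf{main obstacle} is the uniqueness argument in the constrained setting: in the unconstrained Delaunay case the paraboloid gives uniqueness immediately as the lower convex hull, but here we must respect $G$. The delicate point is to argue that two crossing edges $e, f$ with endpoints in $S$ and both locally Delaunay cannot coexist in distinct CDTs of the same $G$; this requires carefully locating $e$ and $f$ in a common cell of the arrangement defined by $G$ (using non-crossing of $G$-edges with $e, f$) and then applying the standard four-point exchange argument. Since the theorem is quoted from the textbook \cite{hjelle2006triangulations}, I would invoke that reference for the full technical details rather than reproducing them here.
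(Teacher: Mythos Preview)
The paper does not prove this theorem at all: it is stated with a citation to the textbook \cite{hjelle2006triangulations} and used as a black box. Your proposal therefore goes beyond what the paper does, and your closing remark---that you would ultimately invoke the reference rather than reproduce the full argument---is exactly what the paper itself does. The outline you give (existence via constrained Lawson flips with the angle-vector potential, uniqueness via the paraboloid lifting and a crossing-edge contradiction) is the standard textbook route and is consistent with what one finds in the cited source, so there is nothing to correct.
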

We get the following corollary. 
\begin{corollary}[folklore]\label{cor:CountCDT} For every point set $S$ and 
every property $\mathcal{P}$ of colored graphs of $S$, we have $|\mathcal{P}| = |CDT(\mathcal{P})|$. 
% \daniel{This is sloppy: is $\mathcal{P}$ a property of graphs on the set $S$ of $n$ vertices?}
% \till{I hope it is good now.}
\end{corollary}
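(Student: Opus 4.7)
The plan is to exhibit an explicit bijection between $\mathcal{P}$ and $CDT(\mathcal{P})$, where the non-trivial content is entirely delivered by the uniqueness of the constrained Delaunay triangulation (the theorem of Hjelle and D\ae hlen quoted just above the corollary).

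First, I would define the forward map $\Phi: \mathcal{P} \to CDT(\mathcal{P})$ as follows. Given $G \in \mathcal{P}$, consider the unique constrained Delaunay triangulation $T$ extending $G$. Color every edge of $T$ not belonging to $G$ with the reserved color purple (the vertex colors and the edge colors of $E(G)$ are inherited from $G$). I would then verify that $\Phi(G) \in CDT(\mathcal{P})$: every purple edge is by construction in $E(T) \setminus E(G)$ and thus satisfies the Delaunay condition, no edge of $E(G)$ is purple, and deleting the purple edges recovers $G \in \mathcal{P}$, as demanded by Definition~\ref{def:ConDelaunayTriang}.

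Next, I would define the backward map $\Psi: CDT(\mathcal{P}) \to \mathcal{P}$ by deletion: given $T \in CDT(\mathcal{P})$, let $\Psi(T)$ be the colored subgraph obtained by removing all purple edges of $T$ (keeping all vertices with their colors). The membership $\Psi(T) \in \mathcal{P}$ is immediate from the definition of $CDT(\mathcal{P})$.

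Finally I would check that $\Phi$ and $\Psi$ are mutually inverse. The composition $\Psi \circ \Phi = \mathrm{id}_{\mathcal{P}}$ is immediate, because $\Phi$ purple-colors exactly the edges added to $G$, and $\Psi$ then deletes exactly those edges. For the other composition, let $T \in CDT(\mathcal{P})$ and set $G = \Psi(T)$; then $T$ is a triangulation extending $G$ in which every edge of $E(T) \setminus E(G)$ (the purple edges) satisfies the Delaunay condition, so $T$ is a constrained Delaunay triangulation of $G$. Uniqueness of constrained Delaunay triangulations therefore forces $T = \Phi(G)$, which is the only step that actually uses non-trivial geometry. The main (and only) obstacle is this uniqueness, and it is supplied as a black box by the cited theorem, so the corollary reduces to bookkeeping about which edges are purple.
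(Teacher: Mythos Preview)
Your proposal is correct and is precisely the natural bijection argument; the paper gives no proof at all (it labels the statement ``folklore'' and moves on), so your write-up simply makes explicit what the paper leaves implicit. The only substantive step, as you note, is the uniqueness theorem cited immediately above the corollary, and you invoke it in the right place (showing $\Phi\circ\Psi=\mathrm{id}$).
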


% \begin{remark} 
% Coloring the edges of $G$ as demanded by Definition~\ref{def:ConDelaunayTriang} is necessary for Corollary~\ref{cor:CountCDT}. To see this consider the set of points $S$ and the unique Delaunay triangulation $T$ on $S$. Let $\mathcal{P}$ be the set of all geometric subgraphs of $T$. Then $|\mathcal{P}|$ is exponential, but without coloring $CDT(\mathcal{P})$ has size one.
% % \daniel{Somehow this remarks adds more confusion than insight.}
% % \till{Yes, I see, feel free to remove it.}
% \end{remark}

Let us return to perfect matchings, that is, the property $\mathcal{M}$. The way we count $\mathcal{M}$ is to count $CDT(\mathcal{M})$. The most straightforward way would be to define an appropriate annotation system enforcing the matching property for the orange edges and the Delaunay Condition for the purple edges. However, we will go a different path and define two annotation systems. Later on, we will show how two annotation systems can be combined to a new one.

\begin{definition}[Annotations for Matchings]
  We want to define the annotation system $L_\textup{matching}$ in a way that all the orange edges are forced to form a matching. Consider a triangulation, where the orange edges form a perfect matching and imagine that every vertex $v$ is labeled by the unique vertex $w\neq v$ that is incident to the same orange edge. We define $L_\textup{matching}$ to be the set of all annotated triangles arising this way. 
  In particular, for every triangle $\Delta$ that annotates vertex $v$ with $w$, $\Delta$ must not block visibility between $v$ and $w$. 
  It is easy to see that the orange edges of any valid triangulation annotated by $L_\textup{matching}$ must form a perfect matching.
  
\end{definition}

\begin{definition}[Annotations for Delaunay-Condition]
  We want to define the annotation system $L_\textup{CDT}$ in a way that all the purple edges satisfy the Delaunay Condition. Consider a triangulation, where the purple edges satisfy the Delaunay Condition and imagine that every edge $e$ is labeled by the unique pair of vertices $v_1$ and  $v_2$ that are incident to the same triangles as $e$. (There might be only one such vertex in case that $e$ on $\partial CH(S)$.) We define $L_\textup{CDT}$ to be the set of all annotated triangles arriving in this way. It is easy to see that the purple edges of any valid triangulation annotated by $L_\textup{CDT}$ must satisfy the Delaunay Condition.
  It is easy to see that $|L_\textup{CDT}|\leq n^6$, as there are at most $\binom{n}{3}$ empty triangles and at most $n$  annotations are possible for each edge of each triangle (Note that there are two labels on an edge, but one of them is determined by the triangle itself).  
%   \daniel{Isn't it $n^2$?}
%   \till{actually no. I should say for each edge of each triangle probably. It has two labels, but one is determined by the triangle itself.}
\end{definition}

We deliberatley do not specify the number of colors. 
We leave this open for later, as the description works for different number of colors.

There are several advantages of defining two annotation systems instead of just one. The first advantage is that we have to think only about one property at a time to show correctness. The second advantage is that we can reuse annotation systems, which is particularly interesting for $L_\textup{CDT}$.
The following definitions serve to define the combination of two annotation systems.

\begin{definition}[Color Annotation]
  Given integers $c_V$ and $c_E$, we define the \emph{complete color annotation} $L_\textup{color}$ of a point set $S$ as the set of all empty triangles $\Delta$ such that
  \begin{enumerate}[noitemsep,topsep=3pt,parsep=3pt,partopsep=0pt]
    \item each vertex is annotated by a number $i\in \{1,\ldots,c_V \}$ and
    \item each edge is annotated by a number $i\in \{1,\ldots,c_E \}$.
  \end{enumerate}
  We refer to these numbers usually as colors and us names like ``red,'' ``magenta,'' ``salmon'' and so on. In case $c_V$ or $c_E$ equals $0$, we say that the vertices or edges are are not colored.
  We also call these feasible triangles \emph{feasibly colored triangles}.
  The annotation system $L_\textup{color}$ has size $|L_\textup{color}| \leq \binom{n}{3}c_V^3 c_E^3$. 
\end{definition}

Note that we strictly distinguish between an edge colored by a specific color and an edge having the annotation of that color. Nevertheless, the intention of an annotation with a certrain color is to interpret this as that color.

The set $\mathcal{T}^A({L_\textup{color}})$ is exactly the set of triangulations of $S$ where each vertex and each edge is colored with a color from $\{1,\ldots, c_V\}$ and $\{1,\ldots, c_E\}$ respectively.

\begin{definition}[Extensions]
  We say that an annotation system $L$ is an \emph{extension of a color annotation} if each annotation consists of two components and $L$ restricted to the first component is a subset of the color annotation. The first component is also called the \emph{color component}. In the following, we restrict ourselves of annotation systems that are extensions of the color annotation $L_\textup{color}$.
\end{definition}

We assume that the numbers $c_V$ and $c_E$ are implicitly clear. Usually we omit to specify them as the description for the annotation system would work for various values of $c_V$ and $c_E$.
Further, we will assume that $c_V$ and $c_E$ are constants and thus they only add a multiplicative constant to the total size of any extension of the color annotation.

Going back to our matching example, annotation system $L_\textup{matching}$ can be regarded as an extension of $L_\textup{color}$, where we have two edge colors and no vertex color. We denote the two colors with orange and purple.

\begin{definition}[Annotations Describing Properties]
  Given an annotation system $L$, it defines a 
  set of valid annotated triangulations
  ${\mathcal{T}^A}(L)$. 
  When we restrict the triangulations 
  $T\in {\mathcal{T}^A}(L)$ to the colored 
  component, we receive a \emph{multi-set} of 
  colored triangulations denoted by $\mathcal{T}_L$. 
  To do this, we interpret the color annotation as the color itself.
  If each colored triangulation  appears only once, 
  it holds $\mathcal{T}(L) \subseteq \mathcal{G}_\textup{color}$.
  Given a property $\mathcal{P}$, we define $\mathcal{T}(\mathcal{P})$ to be the set of  all those triangulations $T$ such that 
  the graph $G_T$ that remains after removing all purple edges satisfies $\mathcal{P}$. 
  We say that an annotation system $L$ \emph{describes} a property $\mathcal{P}$, if $\mathcal{T}(\mathcal{P}) = \mathcal{T}(L)$ holds.
 (Note that we assume here that $L$ has one edge color more than $\mathcal{P}$.
 This additional color is denoted by purple.)
\end{definition}
 
%  Let us stress that it is easily possible to 
%  design annotation systems so that one colored 
%  triangulation $T$ can be annotated in two or even more valid ways. 
%  In case that we restrict 
%  the annotation of $T_1,T_2\in\mathcal{T}^A$ 
%  two the color component, we might get in both cases $T$.
 
\begin{definition}[Combining Annotations]
  Let  $L_1$ and $L_2$ be annotation systems extending $L_\textup{color}$. We define \[L = L_1 \oplus L_2\] as the set of triangles such that each annotation of a vertex and edge has three components $(c_1,c_2,c_3)$ satisfying the following properties:
  \begin{enumerate}[noitemsep,topsep=3pt,parsep=3pt,partopsep=0pt]
    \item each triangle restricted to the first component $c_1$ is feasibly colored,
    \item each triangle restricted to the first and second component $(c_1,c_2)$ belongs to $L_1$, and
    \item each triangle restricted to the first and third component $(c_1,c_3)$ belongs to $L_2$.
  \end{enumerate}
In the following, whenever we will use the $\oplus$ operation, we will \emph{implicitly} assume that both annotation systems extend the same colored annotation $L_\textup{color}$. Then $L$ can be regarded again as an extension of $L_\textup{color}$ by regarding the second and third component as one ``meta-component.''
\end{definition}

\begin{theorem}\label{lem:Intersection}
  Let $\mathcal{P}_1$ and $\mathcal{P}_2$ be two properties described by $L_1$ and $L_2$ respectively. Then $\mathcal{P}_1 \cap \mathcal{P}_2 $ is described by $L_1\oplus L_2$, and it holds $|L_1 \oplus L_2| \leq |L_1| \cdot |L_2|$.
\end{theorem}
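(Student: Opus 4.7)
The plan is to prove both statements by a careful bookkeeping argument, using crucially the uniqueness built into the definition of ``describes.'' The size bound is immediate: an annotated triangle in $L_1\oplus L_2$ is determined by its underlying empty triangle together with two annotations, $(c_1,c_2)$ per vertex/edge coming from some triangle of $L_1$ and $(c_1,c_3)$ coming from some triangle of $L_2$, which agree on the color component $c_1$. Thus the map $(\Delta^{(1)},\Delta^{(2)})\mapsto \Delta^{(1)}\oplus\Delta^{(2)}$ from the set of compatible pairs into $L_1\oplus L_2$ is surjective, yielding $|L_1\oplus L_2|\leq |L_1|\cdot|L_2|$.

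For the main claim, I would prove the two inclusions separately. For $\mathcal{T}(L_1\oplus L_2)\subseteq\mathcal{T}(\mathcal{P}_1\cap\mathcal{P}_2)$, take any valid annotated triangulation $T$ under $L_1\oplus L_2$. By forgetting the third annotation component on every vertex and edge, we obtain an annotated triangulation whose every triangle lies in $L_1$ (by definition of $\oplus$), so it is valid under $L_1$; hence its color-restriction lies in $\mathcal{T}(L_1)=\mathcal{T}(\mathcal{P}_1)$. Symmetrically, forgetting the second component shows the color-restriction lies in $\mathcal{T}(\mathcal{P}_2)$. Therefore the underlying colored triangulation belongs to $\mathcal{T}(\mathcal{P}_1\cap\mathcal{P}_2)$.

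For the reverse inclusion $\mathcal{T}(\mathcal{P}_1\cap\mathcal{P}_2)\subseteq\mathcal{T}(L_1\oplus L_2)$, take a colored triangulation $T'\in\mathcal{T}(\mathcal{P}_1\cap\mathcal{P}_2)$. Since $L_i$ describes $\mathcal{P}_i$, we have $T'\in\mathcal{T}(L_i)$ for $i=1,2$, so there exist valid annotated triangulations $T^{(1)},T^{(2)}$ under $L_1$ and $L_2$ respectively, both restricting to $T'$. I would then define an annotated triangulation $T$ in which each vertex and edge carries the triple $(c_1,c_2,c_3)$ where $c_1$ is read off $T'$, and $c_2,c_3$ are the second components inherited from $T^{(1)}$ and $T^{(2)}$. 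Every triangle of $T$ then restricts to a triangle of $L_1$ on $(c_1,c_2)$ and to a triangle of $L_2$ on $(c_1,c_3)$, so it lies in $L_1\oplus L_2$; hence $T$ is valid and restricts to $T'$.

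The main subtlety — and the step I would treat most carefully — is ensuring that $\mathcal{T}(L_1\oplus L_2)$ is a set rather than a proper multiset, so that it makes sense to equate it with the set $\mathcal{T}(\mathcal{P}_1\cap\mathcal{P}_2)$. This is where the hypothesis that each $L_i$ \emph{describes} $\mathcal{P}_i$ is essential: it guarantees that every $T'\in\mathcal{T}(\mathcal{P}_i)$ has a unique preimage among the valid annotated triangulations under $L_i$. Consequently, in the construction above, the annotations $T^{(1)}$ and $T^{(2)}$ are uniquely determined by $T'$, and therefore $T$ itself is the unique preimage of $T'$ in $\mathcal{T}(L_1\oplus L_2)$; no colored triangulation is counted more than once. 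Combined with the two inclusions, this yields $\mathcal{T}(L_1\oplus L_2)=\mathcal{T}(\mathcal{P}_1\cap\mathcal{P}_2)$ as sets, completing the proof.
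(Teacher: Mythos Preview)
Your proof is correct and follows essentially the same approach as the paper: the paper organizes the argument as a chain of three equalities $\mathcal{T}(\mathcal{P}_1\cap\mathcal{P}_2)=\mathcal{T}(\mathcal{P}_1)\cap\mathcal{T}(\mathcal{P}_2)=\mathcal{T}(L_1)\cap\mathcal{T}(L_2)=\mathcal{T}(L_1\oplus L_2)$, while you prove the two end-to-end inclusions directly, but the underlying construction (combining the two witnessing annotations into a triple) is identical. You are actually more explicit than the paper on the multiset issue---the paper's proof glosses over why $\mathcal{T}(L_1\oplus L_2)$ is a genuine set, whereas you correctly derive the uniqueness of the combined annotation from the uniqueness of each $T^{(i)}$ implied by the hypothesis that $L_i$ describes $\mathcal{P}_i$.
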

\begin{proof}
 We will show
  \[ \mathcal{T}({\mathcal{P}_1 \cap \mathcal{P}_2}) 
  \stackrel{ (1)}{=} 
  \mathcal{T}({\mathcal{P}_1}) \cap \mathcal{T}({\mathcal{P}_2}) 
  \stackrel{ (2)}{=} 
  \mathcal{T}({L_1})\cap \mathcal{T}({L_2}) 
  \stackrel{ (3)}{=}
  \mathcal{T}({L_1 \oplus L_2}).\]
  
  We start with $(1)$. Let $T\in \mathcal{T}({\mathcal{P}_1 \cap \mathcal{P}_2})$ and $G$ be the graph remaining after removing all purple edges. Then $G$ satisfies $\mathcal{P}_1$ and $\mathcal{P}_2$. Thus $T \in \mathcal{T}({\mathcal{P}_1})$  and $T \in \mathcal{T}({\mathcal{P}_2})$. 
  This shows ``$\subseteq$''. Conversely, let $T\in \mathcal{T}({\mathcal{P}_1}) \cap \mathcal{T}({\mathcal{P}_2})$ 
  and $G$ be the graph that is left after removing all purple edges. Then $G$ satisfies $\mathcal{P}_1$ and $\mathcal{P}_2$ and thus $T$ belongs to $\mathcal{T}({\mathcal{P}_1 \cap \mathcal{P}_2})$. This shows $(1)$. 
  
Statement $(2)$ follows immediately from the assumption $\mathcal{T}({\mathcal{P}_i}) = \mathcal{T}({L_i})$, for $i = 1,2$.
  
  We finish with $(3)$. 
  Let $T\in\mathcal{T}({L_1})\cap \mathcal{T}({L_2})$ be a colored triangulation. 
  Then there exists two annotations $A_1$ and $A_2$ such that $T$ annotated with $A_i$ belongs to ${\mathcal{T}^A}({L_i})$, for $i=1,2$. Now, we annotate $T$ with $A_1$ \emph{and} $A_2$ carefully. So let $T'$ be an annotated triangulation such that the first component represents its color, the second component is given by $A_1$ and the third component is given by $A_2$. By definition $T'$ belongs to ${\mathcal{T}^A}({L_1\oplus L_2})$. Thus $T$ belongs to $\mathcal{T}({L_1\oplus L_2})$. This shows ``$\subseteq$''.
  The reverse direction is easier. Let $T \in \mathcal{T}({L_1\oplus L_2})$ be a colored triangulations corresponding to the annotated triangulation 
  $T' \in {\mathcal{T}^A}({L_1\oplus L_2})$. 
  Then by definition of $\oplus$, $T'$ restricted to the second component belongs to ${\mathcal{T}^A}({L_1})$ and 
  $T'$ restricted to the third component belongs to ${\mathcal{T}^A}({L_2})$.
  Thus $T'$ belongs to ${\mathcal{T}^A}({L_1}) \cap {\mathcal{T}^A}({L_2})$.
  
  The size bound follows easily from the construction.
\end{proof}

% \begin{definition}[Combining Annotations II]
%   Let  $L_1$ and $L_2$ be annotation systems. We define \[L = L_1 \otimes L_2\] as the set of triangles such that each annotation of a vertex and edge has four components satisfying the following properties:
%   \begin{enumerate}[noitemsep,topsep=3pt,parsep=3pt,partopsep=0pt]
%     \item each triangle restricted to the first component is feasibly colored.
%     \item each triangle restricted to the second component has all vertices annotated with  either $1$ or $2$.
%     \item each triangle restricted to the third component belongs to $L_1$ or $L_2$, depending on the second component being $1$ or $2$.
%   \end{enumerate}
% \end{definition}
% 
% \begin{lemma}\label{lem:Union}
%   Let $\mathcal{P}_1$ and $\mathcal{P}_2$ be two properties described by $L_1$ and $L_2$ respectively. Then $\mathcal{P}_1 \cup \mathcal{P}_2 $ is described by $L_1\otimes L_2$.
%   And it holds $|L_1 \otimes L_2| \leq |L_1| + |L_2|$.
% \end{lemma}
% \begin{proof}
%   Proof is similar to the proof of Lemma~\ref{lem:Union}.
%   The idea is that the second component of each valid triangulation must agree.
%   And thus the third component will be either completely $L_1$ or $L_2$.
% \end{proof}

\begin{lemma}
  Let $\mathcal{P}$ be a property described by an annotation system $L$. 
  Then \[|\mathcal{P}| = |{\mathcal{T}^A}({L\oplus L_\textup{CDT}})|.\]
\end{lemma}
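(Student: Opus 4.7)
The plan is to chain together three facts: the folklore bijection between non-crossing graphs and their constrained Delaunay extensions (Corollary~\ref{cor:CountCDT}), the intersection principle for annotation systems (Theorem~\ref{lem:Intersection}), and a uniqueness-of-annotation observation that lets us pass from colored to annotated triangulations without changing counts.

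First I would introduce the auxiliary property $\mathcal{P}_\textup{CDT}$ consisting of all colored graphs $G \in \mathcal{G}_\textup{color}$ such that when $G$ is extended to a triangulation, every purple edge satisfies the Delaunay Condition. By the very construction of $L_\textup{CDT}$ in the preceding definition, $L_\textup{CDT}$ describes $\mathcal{P}_\textup{CDT}$: the annotations on edges simply record the two opposite vertices across each purple edge and verify the empty-circle condition. On the other hand, $L$ describes $\mathcal{P}$ by hypothesis. Applying Theorem~\ref{lem:Intersection} gives that $L \oplus L_\textup{CDT}$ describes $\mathcal{P} \cap \mathcal{P}_\textup{CDT}$, and by the definition of the constrained Delaunay triangulation we have $\mathcal{T}(\mathcal{P} \cap \mathcal{P}_\textup{CDT}) = CDT(\mathcal{P})$.

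Combining this with Corollary~\ref{cor:CountCDT}, we obtain $|\mathcal{P}| = |CDT(\mathcal{P})| = |\mathcal{T}(L \oplus L_\textup{CDT})|$, where on the right we are counting the \emph{colored} triangulations underlying the annotated ones. The final step is to upgrade this to an equality with $|\mathcal{T}^A(L \oplus L_\textup{CDT})|$, i.e.\ to argue that the multiset $\mathcal{T}(L \oplus L_\textup{CDT})$ is in fact a set: each colored triangulation in $CDT(\mathcal{P})$ admits exactly one annotation in $L \oplus L_\textup{CDT}$. For $L_\textup{CDT}$ this is immediate because, for a triangulation $T$ with purple edges satisfying the Delaunay condition, the only feasible annotation of each edge $e$ is the unique pair of opposite vertices in the two triangles of $T$ incident to $e$, and of each vertex/edge with the forced empty string on other fields. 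For $L$ the analogous uniqueness is built into the premise that $L$ describes $\mathcal{P}$ in the stricter set-theoretic sense $\mathcal{T}(L) \subseteq \mathcal{G}_\textup{color}$ mentioned in the definition (this is the intended reading whenever a property is ``described'' in the paper; it is satisfied by all the annotation systems constructed, e.g.\ $L_\textup{matching}$, where the labels are determined by the underlying graph).

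The main obstacle is thus not geometric but bookkeeping: making sure that the $\oplus$ operation preserves the one-annotation-per-triangulation property. This is however built into the construction, since the triple $(c_1, c_2, c_3)$ on each vertex and edge is forced by the colored component via the uniqueness just mentioned, so two annotated triangulations with the same underlying colored triangulation must agree on all three components. Once this is observed, the chain $|\mathcal{P}| = |CDT(\mathcal{P})| = |\mathcal{T}(L \oplus L_\textup{CDT})| = |\mathcal{T}^A(L \oplus L_\textup{CDT})|$ closes and the lemma follows.
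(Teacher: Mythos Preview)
Your overall strategy---chaining Corollary~\ref{cor:CountCDT}, an intersection principle, and a uniqueness-of-annotation argument---is sound and is in fact more carefully articulated than the paper's own three-line proof, which argues directly without invoking Theorem~\ref{lem:Intersection} and glosses over the multiplicity issue you handle explicitly.

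There is, however, a technical misstep in your use of Theorem~\ref{lem:Intersection}. You introduce a property $\mathcal{P}_\textup{CDT}$ and claim that $L_\textup{CDT}$ \emph{describes} it. But recall the paper's definition: $L$ describes $\mathcal{P}$ means $\mathcal{T}(L) = \mathcal{T}(\mathcal{P})$, and $\mathcal{T}(\mathcal{P})$ is defined as the set of colored triangulations whose \emph{non-purple} part lies in $\mathcal{P}$. Membership in $\mathcal{T}(\mathcal{P})$ therefore imposes no constraint whatsoever on the purple edges. On the other hand, $\mathcal{T}(L_\textup{CDT})$ is precisely the set of colored triangulations whose \emph{purple} edges satisfy the Delaunay condition. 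No property $\mathcal{P}_\textup{CDT}$ of colored (non-purple) graphs can make these two sets coincide, so $L_\textup{CDT}$ does not describe any property in the paper's sense, and Theorem~\ref{lem:Intersection} as stated does not apply.

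The fix is painless: bypass the formal statement of Theorem~\ref{lem:Intersection} and use only step~(3) of its proof, which establishes $\mathcal{T}(L_1 \oplus L_2) = \mathcal{T}(L_1) \cap \mathcal{T}(L_2)$ for \emph{any} two annotation systems extending the same color annotation, with no reference to properties. Then $\mathcal{T}(L \oplus L_\textup{CDT}) = \mathcal{T}(L) \cap \mathcal{T}(L_\textup{CDT}) = \mathcal{T}(\mathcal{P}) \cap \mathcal{T}(L_\textup{CDT}) = CDT(\mathcal{P})$, and the rest of your argument goes through unchanged. This is essentially what the paper does, just without naming the intermediate step.
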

\begin{proof}
  From Corollary~\ref{cor:CountCDT} follows $|\mathcal{P}| = |CDT(\mathcal{P})|$.
  We know that $\mathcal{T}({L})$ contains precisely the graphs in $\mathcal{P}$
  and thus ${\mathcal{T}^A}({L})$ represents precisely all possible triangulations of all graphs of $\mathcal{P}$ with purple edges. 
  Further, we know that all purple edges of ${\mathcal{T}^A}({L\oplus L_\textup{CDT}})$
  satisfy the Delaunay Condition, which implies, that each graph in $\mathcal{P}$ is uniquely triangulated.
\end{proof}

\begin{theorem}[Counting Colored Graph Classes]\label{thm:CountColoredGraphClasses}
    Let $\mathcal{P}$ be a colored graph class and $L$ an annotation system describing it. There exists an algorithm that counts the cardinality of $\mathcal{P}$ in $n^{O(\sqrt{n})} \cdot |L|^{O(\sqrt{n})}$.
\end{theorem}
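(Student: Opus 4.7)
The plan is to chain together the results developed just before the theorem. Specifically, the preceding lemma already reduces counting $|\mathcal{P}|$ to counting the annotated triangulations of the combined annotation system $L \oplus L_\textup{CDT}$, and Theorem~\ref{thm:AnnotToAlgo} provides the algorithm that does exactly this counting in time polynomial-in-$n$ to the power $\sqrt{n}$ and polynomial-in-the-annotation-system-size to the power $\sqrt{n}$. So the proof is essentially a substitution.

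First I would invoke the previous lemma to write $|\mathcal{P}| = |\mathcal{T}^A(L \oplus L_\textup{CDT})|$. Then I would apply the algorithm of Theorem~\ref{thm:AnnotToAlgo} to the annotation system $L' = L \oplus L_\textup{CDT}$, which runs in time $n^{(11+o(1))\sqrt{n}} \cdot |L'|^{(12+o(1))\sqrt{n}}$.

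The only thing left to check is that $|L'|$ is bounded appropriately. From Theorem~\ref{lem:Intersection} we have $|L \oplus L_\textup{CDT}| \leq |L| \cdot |L_\textup{CDT}|$, and the bound $|L_\textup{CDT}| \leq n^6$ was already established when the Delaunay annotation system was introduced. Hence $|L'| \leq n^6 \cdot |L|$, and substituting,
\[
n^{O(\sqrt{n})} \cdot (n^6 \cdot |L|)^{O(\sqrt{n})} \;=\; n^{O(\sqrt{n})} \cdot |L|^{O(\sqrt{n})},
\]
which gives the claimed running time.

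I do not expect any real obstacle here: all the heavy lifting (the algorithm for annotated ring subproblems, the correctness of $\oplus$ for combining properties, and the use of constrained Delaunay triangulations to ensure each graph in $\mathcal{P}$ has a unique triangulation) has already been done. The only conceptual point worth highlighting in the write-up is that the assumption ``$L$ describes $\mathcal{P}$'' together with the Delaunay annotation system $L_\textup{CDT}$ guarantees a bijection between $\mathcal{P}$ and $\mathcal{T}^A(L \oplus L_\textup{CDT})$, so no double-counting occurs; the rest is a black-box application of Theorem~\ref{thm:AnnotToAlgo}.
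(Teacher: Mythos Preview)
Your proposal is correct and matches the paper's proof essentially line for line: invoke the preceding lemma to get $|\mathcal{P}| = |\mathcal{T}^A(L\oplus L_\textup{CDT})|$, apply Theorem~\ref{thm:AnnotToAlgo}, and then use $|L\oplus L_\textup{CDT}|\le |L|\cdot|L_\textup{CDT}|\le |L|\cdot n^6$ to simplify the running time. There is nothing to add.
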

\begin{proof}
    In order to count the members of $\mathcal{P}$, it is sufficient to count ${\mathcal{T}^A}({L\oplus L_\textup{CDT}})$. It remains to prove the upper bound on the running time. From Theorem~\ref{thm:AnnotToAlgo}, we have the bound of $n^{O(\sqrt{n})} \cdot |L\oplus L_\textup{CDT}|^{O(\sqrt{n})}$ on the running time. Note that $|L\oplus L_\textup{CDT}| \leq |L|\cdot |L_\textup{CDT}| \leq |L|\cdot n^6$. Thus the whole running time is bounded by $n^{O(\sqrt{n})} \cdot |L|^{O(\sqrt{n})} \cdot |L_\textup{CDT}|^{O(\sqrt{n})} \leq n^{O(\sqrt{n})} \cdot |L|^{O(\sqrt{n})}$.
\end{proof}

  To see the usefulness of Theorem~\ref{thm:CountColoredGraphClasses}, observe that now the proof of Theorem~\ref{thm:CountPerfectMatchings} boils down to the fact that $L_\textup{matching}$ describes the class of orange perfect matchings and has polynomial size.

\subsection{Examples}\label{sec:examples}

  \begin{definition}[Degree Constrains]
    Given a family $\mathcal{D} = \{D_v : v\in V \}$ of sets $D_v \subseteq \{1,\ldots, n \}$. We say a triangulation \emph{satisfies the degree constraint} given by $\mathcal{D}$ if  $\deg(v)\in D_v$ for each $v\in V$. And a triangulation $T$ satisfying the degree constrained imposed by $\mathcal{D}$ is called \emph{degree constrained triangulation}.
    
%     Let $\mathcal{C} = (\mathcal{D}_1,\ldots,\mathcal{D}_c)$ be $c$-tuple of constraints. 
    We say a colored graph \emph{satisfies the colored degree constraint in color $i$} given by $\mathcal{D}$ if  $\deg_i(v)\in D_v\in\mathcal{D}$ for each $v\in S$ (The expression $\deg_i(v)$ denotes the number of edges in color $i$ incident to $v$.). And a graph $G$ satisfying the colored degree constraints imposed by $\mathcal{D}$ is called \emph{colored-$i$ degree constrained graph}.  
  \end{definition}

%   
% \begin{figure}[htbp]
%   \centering
%   \includegraphics{AnnotationSystems}
%   \caption{}
%   \label{fig:AnnotationSystems}
% \end{figure}

\begin{lemma}\label{lem:Annotations}
  There are polynomial sized annotation systems for the following colored non-crossing graph classes:
  \begin{enumerate}[noitemsep,topsep=3pt,parsep=3pt,partopsep=0pt]
   \item $L_\textup{triang}$ colored triangulations
   \item $L_\textup{TriDeg}$ degree constrained triangulations
   \item $L_\textup{Deg}$ colored degree constrained non-crossing graphs
   \item $L_\textup{SpanTree}$ red non-crossing spanning tress
   \item $L_\textup{Connect}$ blue connected non-crossing graphs
   \item $L_\textup{Size}$ $k$-green vertices
   \item $L_\textup{Ind}$ orange vertices form an independent set
   \item $L_\textup{Dom}$ brown vertices form a dominating set
   \item $L_\textup{FaceDeg}$ face-degree constrained non-crossing graphs
  \end{enumerate}
\end{lemma}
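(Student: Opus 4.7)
The plan is to build, for each property in the list, a polynomial-size annotation system that extends $L_\textup{color}$ and describes the stated class, then combine them freely with $\oplus$ and with $L_\textup{CDT}$ as in Theorem~\ref{lem:Intersection} and Theorem~\ref{thm:CountColoredGraphClasses}. The recurring toolkit will be (i) purely local color rules, (ii) the \emph{pointer trick} used for $L_\textup{matching}$ (annotate $v$ with a canonical neighbor $w$; visibility through every triangle at $v$ forces $w$ to be an actual neighbor), and (iii) \emph{canonical labels} derived from the order labels $l(v)$, used to make the annotation of each graph unique.

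The easy items are $L_\textup{triang}$ (ban purple on every triangle edge, no further annotation), $L_\textup{Ind}$ (mark the "orange" vertex color; a triangle is infeasible iff it carries an edge with two orange endpoints whose color is not purple), and $L_\textup{Dom}$ (for each non-brown $v$, annotate the canonical brown neighbor $d(v)$ using the pointer trick; feasibility forces $v$ to have at least one brown neighbor in every triangulation). For the degree constraints $L_\textup{TriDeg}$ and $L_\textup{Deg}$ I will use a fan-position annotation: label each corner of a triangle at $v$ with an integer in $\{1,\ldots,d_v\}$ and each incident edge with its two neighboring positions at $v$; adjacent triangles inherit the same edge position, so consistency around $v$ forces the fan to use exactly $d_v$ slots, and the target $d_v$ (resp.\ $\deg_i(v)$) is picked from the prescribed set $D_v$. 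The same positional trick on faces gives $L_\textup{FaceDeg}$: purple diagonals of the CDT partition each face into a fan, and positions around that fan enforce the face-degree constraint, with all book-keeping local to a triangle.

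For the spanning/connectivity items I will impose a canonical rooted tree, mimicking the canonical outgoing path from Definition~\ref{def:CanOutPath}. Fix the root $r$ as the vertex of smallest order label. Annotate each non-root $v$ with $(d(v),p(v))$, where $d(v)$ is the hop-distance to $r$ in the relevant monochromatic subgraph and $p(v)$ is the neighbor at distance $d(v)-1$ with smallest order label; the pointer trick forces $p(v)$ to be an actual neighbor, while the depth label rules out cycles. For $L_\textup{SpanTree}$ I additionally forbid any red edge that is not a parent edge, so the red graph equals the canonical tree; for $L_\textup{Connect}$ I allow arbitrary additional blue edges, and the canonicity of $p(v)$ guarantees each connected blue graph is counted exactly once. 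All annotations are drawn from $O(n)$ options per vertex, so the systems have polynomial size.

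The main obstacle I expect is $L_\textup{Size}$, because "$k$ green vertices" is a global count and cannot be read off any single triangle. My plan is to annotate each vertex $v$ with a \emph{rank} $r(v)\in\{0,1,\ldots,k\}$ meaning "the number of green vertices $u$ with $l(u)\le l(v)$", together with the color of $v$. Consistency of $r$ across the triangles containing $v$ is automatic (it is a vertex annotation); what I still need is a local way to enforce $r(v)-r(u)\in\{0,1\}$ whenever $l(v)=l(u)+1$ (with the jump being $1$ iff $v$ is green) and to pin $r$ at the two extremes. I will achieve this by chaining along the order labels: annotate each edge with the pair of ranks at its endpoints, and use the pointer trick to force each vertex $v$ to see the order-label-successor $v^+$ (visible from $v$ through every triangle, i.e.\ adjacent in the CDT after also $\oplus$-ing with an auxiliary "successor-edge" annotation system that declares the canonical-successor edges non-purple). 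With that in place the rank propagates correctly, and requiring $r$ of the last vertex to be $k$ pins the total count. Once each of the nine systems is verified, Lemma~\ref{lem:Annotations} follows immediately.
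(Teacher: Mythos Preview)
Your constructions for $L_\textup{triang}$, $L_\textup{Ind}$, $L_\textup{Dom}$, $L_\textup{TriDeg}$, $L_\textup{Deg}$, $L_\textup{SpanTree}$, $L_\textup{Connect}$, and $L_\textup{FaceDeg}$ are essentially the same as the paper's (local color rules, a cyclic position counter around a vertex or a face, and the pointer-plus-depth encoding of a canonical rooted tree). No issues there.

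The genuine gap is in your plan for $L_\textup{Size}$. You propose to propagate a running count $r(v)$ along the chain $v_1,v_2,\ldots,v_n$ given by the order labels, and to enforce the increment rule by making each $v$ adjacent to its order-label successor $v^+$ via the pointer trick (or by declaring the successor edges non-purple). This cannot work in general: the segment $v\,v^+$ may \emph{cross} an edge of the non-crossing graph $G$ you are trying to count, so no triangulation extending $G$ contains $v\,v^+$ at all. In that case the pointer trick has no feasible triangle at $v$, and declaring $v\,v^+$ non-purple only intersects with a strictly smaller class than ``$k$ green vertices''. Either way you undercount. The order-label chain is a structure fixed in advance and independent of $G$; there is no reason it should embed in every non-crossing graph on $S$.

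The paper avoids this by aggregating the count along a structure that \emph{is} guaranteed to sit inside every triangulation: the canonical spanning tree $U_T$ already built for $L_\textup{Connect}$. Each vertex is additionally annotated with the number of green vertices in its subtree of $U_T$; the root then carries the global count, which is required to equal $k$. Local consistency (``my number equals the sum of my children's numbers, plus one if I am green'') is enforced by reusing the colored-degree mechanism around each vertex, so everything is checkable on a single triangle. Replacing your successor-chain argument with this subtree-count argument fixes the proof; the rest of your write-up can stay as is.
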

\begin{proof}
   The way, we describe the annotation systems always follows the same scheme. We start with a graph class $\mathcal{P}$ that we want to enforce. Then, we describe a procedure to annotate it in an \emph{unambiguous} way. The annotation system is implicitly defined by all possible annotated empty triangles, that might arise in this way. We will \emph{usually} not make it explicit, as it is only technical and not enlightening. See Figure~\ref{fig:TermsAnnotation} for an explicit example for perfect matchings. We have to show that the annotation system describes $\mathcal{P}$. For this step, it is necessary and sufficient that the annotation procedure is unambiguous and enforces the property $\mathcal{P}$. The last step is to give a size bound on the size of the annotation system.
   
   Recall that the order label was attached to the points as a preprocessing step. This labeling helps to set priorities between vertices and makes possible choices unambiguous.
   
  \begin{description}
  
  \item[Colored Triangulations] Define $L_\textup{triang}\subseteq L_\textup{color}$, as the list of all those annotated triangles, that do not use the color purple reserved for ``non-edges''.
  
   \begin{figure}[t]
    \centering
    \includegraphics{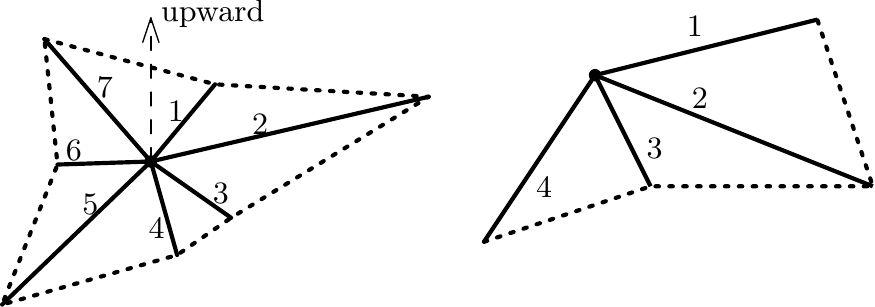}
    \caption{}
    \label{fig:AnnotationDegree}
  \end{figure}
  
   \item[Degree Constraints] We start to show how to enforce the degree on one vertex.
   Let $T$ be a triangulation and $v$ a vertex of $T$ with degree $d$. We annotate the edges around $v$ in clockwise order with $1,\ldots , d$.
   We start with the edge, which comes first after the upward direction.
   This is unambiguous and enforces degree $d$. See also Figure~\ref{fig:AnnotationDegree}. Note that the triangle containing the upward direction has the annotation $d$ and $1$ in clockwise order around $v$. 
   In case that $v$ is on $\partial CH(S)$, it is even simpler, as we start and end our counterclockwise numbering on $\partial CH(S)$. This is we force the edges incident to $v$ to carry the annotations $1$ and $d$ respectively.
   
   Given a set $D\subset \{1,\ldots, n\}$, we can take the union of all possible annotations enforcing a degree $d\in D$. The size of this annotation system is still pretty small, as every triangle adjacent to $v$ gets added only two numbers, where the second number is defined by the first one. 
   
   Now, given a set $\mathcal{D}$ as in the definition of degree constrained triangulations, we can define an annotation for each vertex as above. We can combine all of these annotations without increasing the size too much, as every triangle has to remember only two numbers for each of its three vertices. This amounts to a total of $O(n^{3+6})$ many triangles.

   \begin{figure}[t]
    \centering
    \includegraphics{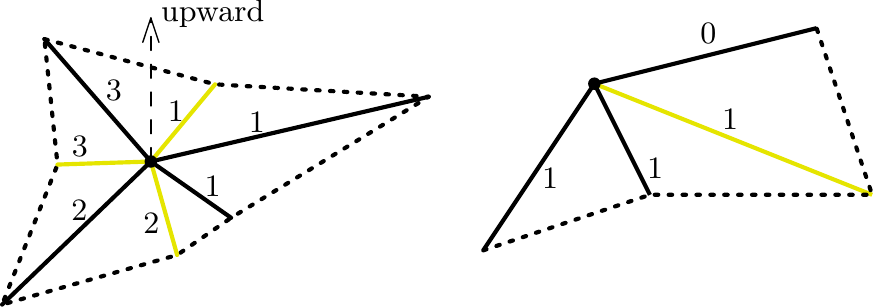}
    \caption{}
    \label{fig:AnnotationColoredDegree}
  \end{figure}
   \item[Colored Degree Constraints] 
   We show how to enforce a yellow degree constrained on one vertex. 
   The generalization to more possible degrees and all vertices simultaneously work in the same way as for uncolored degree constrained triangulations.
   Generalizing it to more colors can be done using the $\oplus$ operation.
   Given a colored graph $G$ we can assume that $G$ is completely triangulated by edges not belonging to $G$ (purple edges). 
   Let $v \in V(G)$ be a vertex with  $d$ yellow incident edges.
   We annotate each edge(also non-yellow edges) with the number of yellow edges in the counterclockwise interval between the upwards direction and this edge, including the edge itself, see Figure~\ref{fig:AnnotationColoredDegree}.
   In case that $v$ is on $\partial CH(S)$, it is even simpler, as we start and end our counterclockwise numbering on $\partial CH(S)$.
   
   If more than one color should be constraint the corresponding annotation systems can be combined with the $\oplus$ operator. Using the colors, the resulting size of the annotation system is $n^O(c)$.

   \begin{figure}[t]
    \centering
    \includegraphics{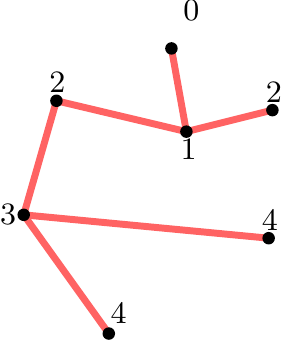}
    \caption{}
    \label{fig:AnnotationSpanningTree}
  \end{figure}
   
   \item[Spanning Trees] Let $G$ be a red spanning tree of $S$. Then we can think of $G$ as a rooted tree. We choose the root to be the vertex with lowest order label. We annotate each vertex $v$ with the 
   distance to the root and the label of its parent. It is easy to see that this gives an annotation in an unambiguous way.
   The set of feasible triangles defined by this annotation procedure does not include triangles $\Delta$, where a vertex $v$ is annotated with a vertex $w$, but the vision between $v$ and $w$ is blocked by $\Delta$.
   We will show, it also enforces the red edges to form a spanning tree. Every vertex, except the root, has exactly one vertex with lower distance to the root. This is enforced by the label on the vertex. Further following these edges defines a path to the root. This shows connectedness. At last, we have to show that this annotation system excludes cycles. To see this, assume, for the purpose of contradiction there is a cycle $C$ in $G$ and let $v\in V(C)$ be the vertex with largest  distance to the root. By the discussion above it has at most one neighbor with smaller distance to the root. a contradiction to the fact that every vertyex in $C$ has two neighbors.
   See Figure~\ref{fig:AnnotationSpanningTree}.
   
   To bound the size of $L_\textup{SpanTree}$ note that every vertex is annotated with two numbers and an upper bound of $O(n^{3+6})$ follows.

   \begin{figure}[t]
    \centering
    \includegraphics{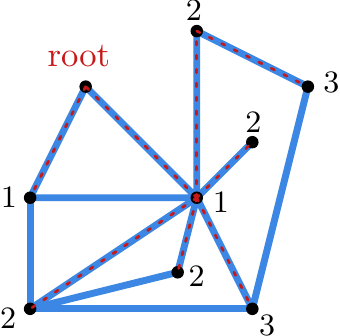}
    \caption{}
    \label{fig:AnnotationConnectedness}
  \end{figure}
   
   \item[Connectedness] Let $G$ be a blue connected graph. We will define a spanning tree $T_G$ in an unambiguous way. For this recall that we can assume that all vertices have an order label. This order label is just a number from $1$ to $n$ and gives arbitrary preferences. We will use it to single out a particular spanning tree among all spanning trees of $G$. It is important to keep in mind that this order label is \emph{not} part of the annotation, but given beforehand.
   
   The root of our spanning tree is the vertex with lowest order label. We annotate each vertex with the distance to the root. And we annotate each vertex with the neighbor that has the smallest order label among all neighbors in $G$ closer to the root.

   Now every vertex has a unique neighbor closer to the root and this neighborhood relation defines unambiguously a spanning tree of $G$.
   See Figure~\ref{fig:AnnotationConnectedness}.
   
   To avoid confusion let us be more explicit in this case, which
   properties annotated empty triangles of $L_\textup{Connect}$ must have.
   Well, every vertex is annotated with its distance to the root and its parent. Further, there better be a blue edge between each vertex $v$ and its parent $p(v)$. Also the parent should have smaller order label.
   It is maybe more subtly to see, that there must not be a blue edge between a vertex $v$ and another vertex $w$ that has smaller order label than $p(v)$.
   Further, a triangle $\Delta$ is not feasible, if there exists a vertex $v\in \Delta$, which is annotated with a vertex $w$, but the vision between $v$ and $w$ is blocked by $\Delta$.
   
   The size of $L_\textup{Connect}$ is bounded by $O(n^{3+6})$.
   
  \begin{figure}[t]
    \centering
    \includegraphics{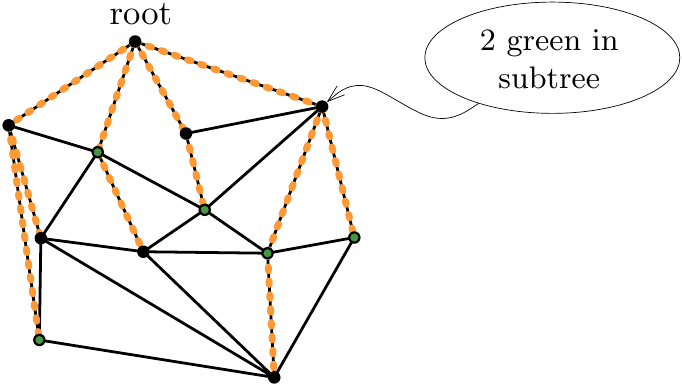}
    \caption{}
    \label{fig:AnnotationCount}
  \end{figure}
   
   \item[Count Vertices] Given a triangulation $T$ on $S$, we want to ensure that $T$ has exactly $k$ vertices colored green. We ignore all other colors for this description. We have seen in {\bf Connectedness} how we can identify a spanning tree $U_T$ in $T$ in an unambiguous way and find an annotation identifying $U_T$. Additionally, we annotate each vertex with the number of green vertices in its subtree. 
   We can interpret this number as the weighted indegree within $U_T$.
   In order to keep the information consistent we use the annotation system of colored degree, to check that the actual indegree adds up to the number written on each vertex.
   See Figure~\ref{fig:AnnotationCount}.
   
   Correctness follows from the correctness of the previous constructions. 
   The polynomial size bound follow in the same way.
   
   \item[Independent Set]
    Given a graph $G$ colored green with some vertices being colored orange, forming an independent set. 
    This means that there is no edge incident to two brown vertices. 
   
    Thus $L_\textup{Ind}$ consists of all feasible triangles except those which have two adjacent brown vertices.
   
   \item[Dominating Set]
    Given a graph $G$ colored green with some vertices being colored brown, forming a dominating set. We annotate every non-brown vertex with the adjacent brown vertex with smallest order label. 
   
   $L_\textup{Dom}$ consists of all feasible triangles arising in this way.
   In particular, consider the case that a vertex $v$ is annotated with $w$.
   then $v$ must not be part of a triangle $\Delta$ blocking visibility between $v$ and $w$. 
   Further, $v$ should not be adjacent to a brown vertex with lower order label and any feasible triangle with $v$ and $w$ should color $w$ brown in case that $v$ is annotated with $w$.
   
   The size of $L_\textup{Dom}$ is bounded by $O(n^{3+3})$, as every vertex is annotated with only one number.

   \begin{figure}[t]
    \centering
    \includegraphics{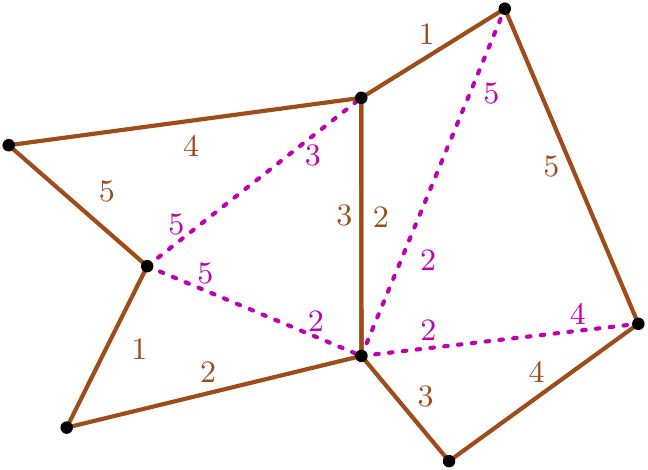}
    \caption{}
    \label{fig:AnnotationFaceDegree}
  \end{figure}
   
   \item[Face Degree Constraints] Given a brown graph $G$ with every bounded face of degree $d$. Here, we need again to make use of the triangulation $T$ containing $G$.
   Consider one specific face $f$ surrounded by the vertices $v_1,\ldots, v_d$ and edges $e_1,\ldots,e_d$ with $e_i = \{v_i,\ldots, v_{i+1}\}$ in counterclockwise order. Then there exists a vertex with lowest order label; without loss of generality assume it is $v_1$. Annotate every brown edge with $1$.
   We annotate edge $e_i$ with $i$, meaning that it is the $i$-th edge of this face. Further, we annotate the edges inside $f$ around $v_i$ with $i$. (These edges do not belong to $G$.)
   In this way, every brown edge receives three annotations(being incident to two faces) and every other edge receives also two annotations, as it is incident to two vertices. 
   
   The annotation system $L_\textup{FaceDeg}$ consists of all triangles arising in this way. 
   
   The size of $L_\textup{FaceDeg}$ is bounded by $O(n^{3+9})$, as every edge is annotated with at most three numbers. See Figure~\ref{fig:AnnotationFaceDegree}.
  \end{description}
  This finishes the proof. 
\end{proof}

We are now ready to proof Theorem~\ref{thm:CountingGeometricStructures}.
For convenience, we restate the theorem. 

\CountingGeometricStuff*

% \begin{theorem}[Counting Geometric Structures]
%   The following structures can be counted in $n^{O(\sqrt{n})}$ time on a set of $n$ points in the plane.
%   \begin{enumerate}[noitemsep,topsep=3pt,parsep=3pt,partopsep=0pt]
%    \item the set of all graphs on $S$
%    \item Perfect Matchings
%    \item Cycle Decompositions
%    \item Hamilton cycles, Hamilton paths
%    \item Euler-Tours
%    \item Spanning Trees
%    \item $d$-regular graphs
%    \item Quadrangulations
%   \end{enumerate}
% \end{theorem}
\begin{proof}
    The proof is just a combination of annotation systems that are given in Lemma~\ref{lem:Annotations}.
\begin{description}[noitemsep,topsep=3pt,parsep=3pt,partopsep=0pt]
   \item[All Graphs]
   To count the set of all graphs, we use an annotation system with exactly two edge colors and two vertex colors, namely red and purple. We enforce that all purple edges satisfy the constrained Delaunay Condition. 
   \item[Perfect Matchings]
    Enforce degree one on all vertices.
   \item[Cycle Decompositions]
    Enforce that all vertices have degree two.
   \item[Hamilton cycles, Hamilton paths]
    For the first ensure that all vertices have degree two and the graph is connected. For the second ensure that the graph has max degree two and is connected.
   \item[Euler-Tours]
    Ensure that the graph has even degree.
   \item[Spanning Trees]
    See Lemma~\ref{lem:Annotations}.
   \item[$d$-regular graphs]
    See Lemma~\ref{lem:Annotations}.
   \item[Quadrangulations]
    Enforce that each face has degree four and that the graph is spanning. \qedhere
  \end{description}
\end{proof}

\begin{remark}[On the scope of the framework]
 
The scope of the structures that we are capable 
to count is not limited by the examples we have given. 
It would be nice to encapture more precisely to which 
structures our procedure applies and to which it doesn't. 
However, there are some very simple structures that we 
were not able to give a descriptive annotation system for. 
Examples are $3$-colorable graphs, graphs with diameter 
exactly $k$ and graphs with degree constraints on more 
than constantly many colors. 
The problem of $3$-colorable graphs is that there might be 
more than one $3$-coloring and we don't know how to ensure 
that we count each graph only once. 
The diameter seems to be a too complex graph property. 
Known algorithms to compute the diameter usually compute the 
distance of all pairs of vertices. If someone finds a way how 
to compute the diameter of a planar graph in a simpler way, 
this algorithm might also lead to an appropriate annotation system.
The reason, we don't know how to give degree constraints to more than a constant number of colors with the above running time is that, we have to maintain too many informations on the edges to keep the informations consistent. 

\end{remark}

\begin{remark}[On the geometry]
  In general all structures that we are interested in are purely defined by combinatorial means and not so much by geometry. To be more precise: given a set of points $S$ the \emph{order type} of $S$ is defined as an oracle that returns for every triple of points whether they are oriented in clockwise or counterclockwise order. This information is considered as combinatorial, as it does not depend on the concrete coordinates of the set of points. In particular there is only  a finite number of possible order types for $n$ points in the plane.
  Nevertheless, we make use of the geometry at several occasions, in order to make possible choices unique. It is not clear to us, if this could be avoided. 
\end{remark}

%%% Local Variables:
%%% mode: latex
%%% TeX-master: "main2"
%%% End:

\paragraph*{Acknowledgment}
Both authors are supported by the ERC grant PARAMTIGHT: 
Parameterized complexity and
the search
for tight complexity results", no. 280152.

\bibliographystyle{abbrv} 
\bibliography{LibSoCG}

\end{document}